\newcommand{\ketbra}[2]{|#1\rangle\!\langle#2|}
\newcommand{\Tr}{\operatorname{\bf{tr}}}
\definecolor{mygrey}{gray}{0.35}
\definecolor{myblue}{rgb}{0.2,0.2,0.8}
\definecolor{myzard}{cmyk}{0,0,0.05,0}
\definecolor{mywhite}{rgb}{1,1,1}
\definecolor{myred}{rgb}{0.9,0.1,0.}
\newcommand{\id}{{\mathrm{id}}}
\newcommand{\diag}{\mathop{\mathrm{diag}}}
\newcommand{\vspan}{\mathop{\mathrm{span}}}
\newcommand{\pushright}[1]{\ifmeasuring@#1\else\omit\hfill$\displaystyle#1$\fi\ignorespaces}
\newtheorem{theorem}{Theorem}
\newtheorem{lem}[theorem]{Lemma}
\newtheorem{prop}[theorem]{Proposition}
\begin{document}
	\title{Coherence of operations and interferometry}
	
	\author{Michele Masini}
	\email[]{michele.masini@ulb.be}
	\affiliation{Laboratoire d’Information Quantique, Université Libre de Bruxelles, 1050 Bruxelles, Belgium}
	\affiliation{Institute of Theoretical Physics and IQST, Universität Ulm, Albert-Einstein-Allee 11, D-89069 Ulm, Germany}
	\author{Thomas Theurer}
	\email[]{thomas.theurer@uni-ulm.de}
	\author{Martin B. Plenio}
	\email[]{martin.plenio@uni-ulm.de}
	\affiliation{Institute of Theoretical Physics and IQST, Universität Ulm, Albert-Einstein-Allee 11, D-89069 Ulm, Germany}
	
	\date{\today}
	
	\begin{abstract}
		Quantum coherence is one of the key features that fuels applications for which quantum mechanics exceeds the power of classical physics. This explains the considerable efforts that were undertaken to quantify coherence via quantum resource theories. An application of the resulting framework to concrete technological tasks is however largely missing. Here, we address this problem and connect the ability of an operation  to detect or create coherence to the performance of interferometric experiments.
	\end{abstract}
	
	\maketitle
	
	\section{Introduction\label{sec:intro}}
	The emergence of quantum technologies that outperform their classical counterparts~\cite{Feynman1982,Bennett1984,Deutsch1985,Shor1999} led to the insight that properties in which quantum mechanics departs from classical physics are not only of foundational interest, but also of practical relevance. 
	Consequently, these properties are now considered resources that can lead to operational advantages. To understand precisely which quantum property is responsible for what advantage and how to employ them optimally is the motivation for the development of various quantum resource theories~\cite{Vedral1997,Plenio2007,Horodecki2009,Aberg2006, Gour2008, Brandao2013, Horodecki2013,Baumgratz2014,Grudka2014,Veitch2014,DelRio2015,Killoran2016,Coecke2016,Theurer2017,Streltsov2017c, Tan2017, Egloff2018, Yadin2018, Chitambar2019}. Until recently, the focus was mainly on the quantification of resources present in quantum states, which is achieved with the help of static resource theories. The value of operations was determined indirectly by, e.g., their resource generation capacity~\cite{Bennett2003,Mani2015,Xi2015,Garcia2016,Bu2017}, which describes how much static resources they can create, or their resource cost~\cite{Eisert2000,Collins2001,Dana2017}, i.e., the amount of static resources that is needed to simulate them. 
	
	Only recently, the direct quantification of resources present in quantum operations started with the development of dynamical resource theories~\cite{Zhuang2018,Theurer2019,Wang2019a,Wang2019b,Liu2020,Liu2019,Gour2019a,Gour2019b,Saxena2020,Gour2020,Takagi2019a,Takagi2020a,Takagi2020b,Bauml2019,Li2020}. Compared to static resource theories, they show several advantages~\cite{Theurer2019}. 
	Firstly, quantum technologies intend to accomplish tasks that cannot be carried out with classical devices, and such devices are described by quantum operations. From a conceptual point of view, it seems thus more natural to quantify the value of operations directly without a detour through states. Secondly, dynamical resource theories are a unifying concept and include static resource theories as a special case as the preparation of a state is a specific quantum operation. Thirdly, they often allow for an operational treatment of subselection, i.e., the ability to subselect in one operation but not in another can be naturally reflected in the framework. Finally, dynamical resource theories can be used to quantify properties of quantum operations that cannot be captured by the indirect methods mentioned above. 
	One of these properties is the ability to detect coherence  in the sense that its presence makes a difference in measurement statistics~\cite{Yadin2016,Smirne2018}, which is a necessary prerequisite to its exploitation.  It is therefore equally important to investigate how well an operation can create and detect coherence, which was quantified both theoretically and experimentally in Refs.~\cite{Theurer2019,Xu2020}.	
	\begin{figure}[ht]
		\includegraphics[width=.95\linewidth]{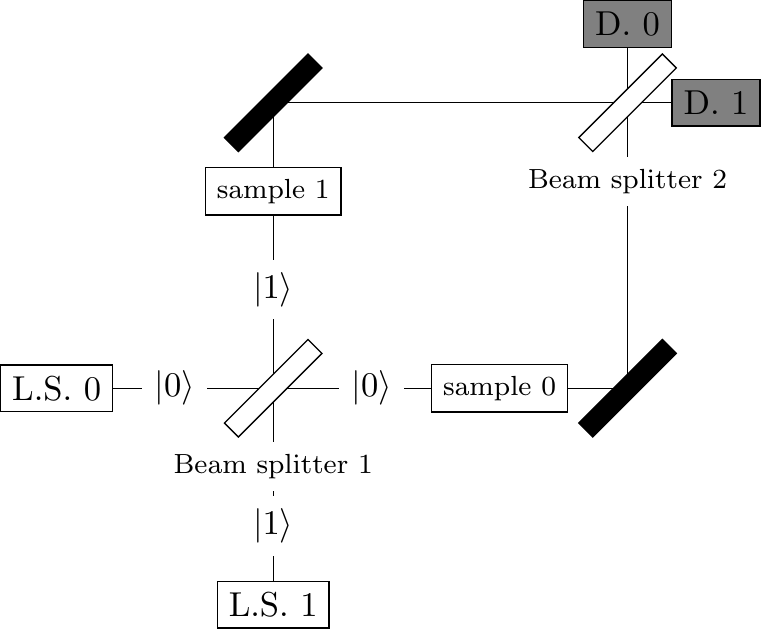}
		\caption{Mach-Zehnder interferometer. Two light sources L.S. 0 and L.S. 1 illuminate the two input ports of a beam splitter. Its two output modes experience different phases, before they reach a second beam splitter, after which they are measured by the detectors D. 0 and D. 1. }\label{fig:mach-z}
	\end{figure}

	Whilst coherence is undoubtedly central to the departure from classical physics and underlies other quantum resources such as entanglement, it is important to investigate its relevance in concrete applications~\cite{Hillery2016,Napoli2016,Matera2016}. In Ref.~\cite{Biswas2017}, the authors established a direct connection between static coherence~\cite{Winter2016,Streltsov2017c} and interferometry (see also Refs.~\cite{Bera2015,Bagan2016,Paul2017}). Here, we go one step further and apply dynamical resource theories to interferometers (see Fig.~\ref{fig:mach-z}), which allows us to give an operational meaning to both the ability to detect and to create coherence.
	
	After a quick introduction of the relevant resource theories in Sec.~\ref{subsec:rts}, we describe in Sec.~\ref{subsec:interf} the  multi-path interferometers~\cite{Biswas2017} that we are going to analyze. In Sec.~\ref{sec:main}, we then present our main results and construct families of measures that have a direct operational interpretation: they are not only proper quantifiers of resourcefulness, but also determine the advantage that dynamical coherence grants in concrete interferometric setups. 
	We then conclude in Sec.~\ref{sec:conc}.

	\section{Definitions\label{sec:def}} 
	In the following, we recall some definitions and results that will be used in the remainder of this article.

	\subsection{Resource theoretical setting}\label{subsec:rts}
	Firstly, we shortly review the dynamical resource theories that we will utilize throughout the article, but refer to the original publication~\cite{Theurer2019} for full details.
	Channel resource theories are defined by two ingredients, the set of \emph{free channels} and the set of \emph{free superchannels}.  While a quantum channel is a linear map from quantum states to quantum states, a superchannel $S$ is a linear map from quantum channels $\Theta$ to quantum channels $\Theta'$ that has a physical realization
	\begin{equation}
		\Theta'=S(\Theta):= \Psi(\Theta\otimes\id)\Phi,\label{eq:super}
	\end{equation} 
	where $\id$ is the identity channel and $\Psi$ and $\Phi$ are quantum channels~\cite{Chiribella2008}. We will use the notation $\Theta^{B\leftarrow A}$ when we need to specify that a quantum channel has an input system $A$ and an output system $B$ or simply write $\Theta^A$ when input and output system are the same.
	
	In this work, we employ one resource theory to quantify the ability of an operation to detect coherence and another one to describe its ability  to create coherence. Firstly, for every finite dimensional system  under consideration, we fix an incohererent basis, i.e., an orthonormal basis $\{\ket{i}\}_{i=0,\dots,M-1}$ and define total dephasing with respect to it as the operation $\Delta$ that acts on every state $\sigma$ as
	\begin{equation}
		\Delta(\sigma):= \sum_i\ketbra{i}{i}\sigma\ketbra{i}{i}.
	\end{equation} 
	A quantum state $\rho$ is called incoherent if it is diagonal in the incoherent basis, i.e., if $\Delta \rho=\rho$, and we denote the set of incoherent states by $\mathcal{I}$.
	
	A measurement described by a  positive operator-valued measure (POVM) with elements $P^{(n)}\ge0$, $\sum_n P^{(n)}=\mathbbm{1}$ cannot detect coherence if its outcome statistics solely depend on the populations of the states to which it is applied. This is exactly the case if it is diagonal in the incoherent basis~\cite{Theurer2019}, i.e., of the form
	\begin{equation}
		P^{(n)}=\sum_i P^{(n)}_i\ketbra{i}{i},
	\end{equation}
	where $\{\ket{i}\}$ is the incoherent basis. We will denote the set of all incoherent POVMs by $\mathcal{P}_I$. 
	To extend this definition to arbitrary instruments, we must handle subselection consistently: for a quantum instrument $\Gamma$ that allows us to apply subselection according to a variable $x$, i.e., with probability $p_x=\Tr(\Gamma_x(\rho))$, we obtain an output $\rho_x=\Gamma_x(\rho)$, we define a corresponding channel
	\begin{align*}
		\tilde{\Gamma}(\rho)=\sum_x \Gamma_x(\rho)\otimes\ketbra{x}{x}.
	\end{align*}
	Formally, we thus store the outcome $x$ in the incoherent basis of an auxiliary system, from which we can extract it at a later point using an incoherent POVM. This allows us to reduce our analysis to channels and has the additional advantage that we treat subselection in an operational manner. A channel is then unable to detect coherence if it maps incoherent POVMs to incoherent POVMs in the sense that the  populations of its output are independent of the  coherences of its input. In our resource theory that describes the detection of coherence, the set of free channels is therefore given by all operations that satisfy~\cite{Theurer2019,Liu2017,Meznaric2013}
	\begin{equation}
		\Delta\Phi=\Delta\Phi\Delta.
	\end{equation} 
	We denote this set with $\mathcal{DI}$.
	
	To describe the ability to create coherence, we choose the \emph{maximally incoherent operations} ($\mathcal{MIO}$)~\cite{Aberg2006,Liu2017,Diaz2018} as free. Treating subselection as above, these are all channels $\Psi$ that satisfy
	\begin{equation}
		\Psi\Delta=\Delta\Psi\Delta.
	\end{equation} 
	As the name suggests, this is the maximal set of channels that maps incoherent states to incoherent states, i.e., that cannot create coherence, which is why it is considered free when we investigate the creation of coherence as a resource.
	
	As free superchannels, we chose the ones with a decomposition as in Eq.~(\ref{eq:super}) where both $\Psi$ and $\Phi$ are free channels (elements of $\mathcal{MIO}$ in the creation incoherent setting or elements of $\mathcal{DI}$ in the detection incoherent setting).
	
	Thanks to the previously defined sets, we are now able to quantify the ability of an operation to detect or create coherence. In a channel resource theoretical setting, a resource measure $M(\Theta)$ is a functional from quantum channels to real numbers that satisfies the following properties. 
	\begin{itemize}
		\item \emph{Nullity}: If $\Theta$ is a free operation, then $M(\Theta)=0$.
		\item \emph{Non-negativity}: $M(\Theta)$ is non-negative. 
		\item \emph{Monotonicity}: $M(\Theta)$ is monotonic under free superoperations, i.e., $M(S(\Theta))\le M(\Theta)$ for all free superchannels $S$. This condition is equivalent to the simultaneous satisfaction of three simpler conditions, namely monotonicity under left and right composition and  monotonicity under tensor products~\cite[Prop.~12]{Theurer2019}. If we denote with $\mathfrak{F}$ the set of free channels the three conditions read
		\begin{align}
			M(\Theta)&\geq M(\Phi \Theta) \qquad\forall\Theta\quad \forall \Phi\in\mathfrak{F},\label{eq:sx}\\
			M(\Theta)&\geq M(\Theta \Phi) \qquad\forall\Theta\quad \forall \Phi\in \mathfrak{F}, \label{eq:dx}\\
			M(\Theta)&\geq M(\Theta\otimes\id^Z) \qquad\forall\Theta\quad\forall Z\label{eq:ce} .
		\end{align}
	\end{itemize}
	Moreover, one often examines two additional properties that are convenient but not necessary to consider $M(\Theta)$ a proper resource measure~\cite{Liu2019,Plenio2005}, namely
	\begin{itemize}
		\item \emph{Faithfulness}:  $M(\Theta)=0$ exactly if $\Theta$ is a free channel.
		\item \emph{Convexity}: For all $t$ with $0\leq t\leq 1$ and for all channels $\Theta_1,\Theta_2$, 
		\begin{equation}
			\qquad \quad M(t\Theta_1+(1-t)\Theta_2)\leq tM(\Theta_1)+(1-t)M(\Theta_2).
		\end{equation}
	\end{itemize}
	
	We will use the term \emph{convex measure} for functionals that  are convex and qualify as resource measures.

	\subsection{Multi-path interferometer}\label{subsec:interf}
	Next we introduce the idealized multi-path interferometer~\cite{Biswas2017} that we will investigate in this paper. To analyze the role of dynamical coherence in interferometry, we begin with an incoherent state $\tau$, as depicted in Fig.~\ref{fig:multi-path}. 
	\begin{figure}[ht]
		\includegraphics[width=.95\linewidth]{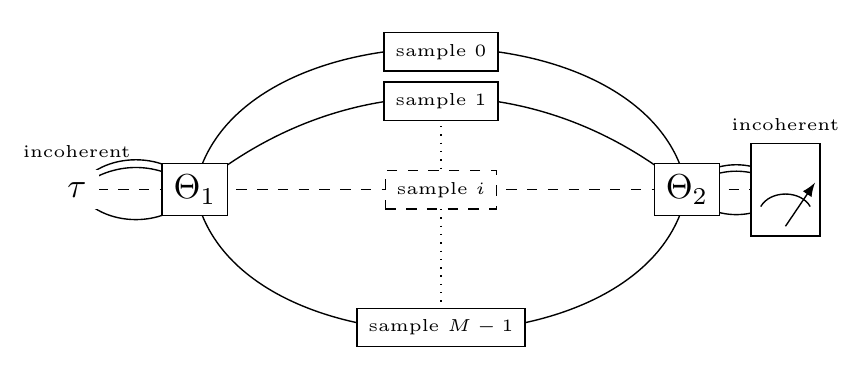}
		\caption{Sketch of a multi-path interferometer. The quantum channel $\Theta_1$ distributes an incoherent state $\tau$ to $M$ different paths that can lead to different phases via, e.g., interactions with different samples. The paths are recombined by a second channel  $\Theta_2$, and its output is incoherently measured.}\label{fig:multi-path}
	\end{figure}

	We then apply a quantum channel $\Theta_1$ to it, which distributes $\tau$ to  $M$ different paths of the interferometer represented by the orthonormal basis $\ket{0},\dots,\ket{M-1}$. Samples in the paths or different path lengths lead to relative phases  which we describe by the application of the unitary channel $\Lambda_{\vec{\phi}}$ defined as
	\begin{equation}\label{eq:UnitPhases}
		\Lambda_{\vec{\phi}}(\sigma):=\sum_{i,j}e^{i(\phi_i-\phi_j)}\ketbra{i}{i} \sigma \ketbra{j}{j},
	\end{equation}
	where $\vec{{\phi}}$ refers to the phases $\{\phi_i\}$.
	Afterwards, a channel $\Theta_2$ recombines the paths before a final incoherent measurement. The channel $\Theta_1$ is thus the generalization of beam splitter 1 in the Mach-Zehnder interferometer represented in Fig.~\ref{fig:mach-z}, $\Theta_2$ generalizes beam splitter 2, and the incoherent measurement the two detectors. 
	The goal of interferometry is now to deduce information about the relative phases from the measurement outcome. To make our setting non-trivial, we assume from here on that there exists at least one pair $k,l$ such that $\phi_k\ne \phi_l$. 
	
	Intuitively, the ability of $\Theta_1$ to generate coherence and of $\Theta_2$ to detect it will then affect how sensitive the measurement outcome is to relative phases: if $\Theta_1$ cannot create coherence, the state $\Lambda_{\vec{\phi}}\ \Theta_1 (\tau)$ is independent of $\vec{{\phi}}$. 
	On the other hand, if $\Theta_2$ cannot detect coherence as defined in Sec.~\ref{subsec:rts}, the final outcome of the incoherent measurement will be independent of the relative phases too. 
	In the following, we will make these intuitions rigorous.

	\section{Main Results}\label{sec:main}
	
	In this section, we investigate the connection between a channel's ability to detect or create coherence and its usefulness in interferometry in detail.
	To this end, we consider the setup described in Sec.~\ref{subsec:interf} and depicted in Fig.~\ref{fig:multi-path}. 
	Since the detection and creation of coherence are two different resources, we will treat them separately, beginning with the former.  
	
	\subsection{Detecting coherence and interferometry}\label{subsec:DetMain}
	To investigate which role the detection of coherence plays in interferometry, we analyze a setting that is best described in terms of a game between two parties, Alice and Bob (see Fig.~\ref{fig:op-pre}): Bob prepares a quantum state $\rho$ and sends it to Alice. 
	\begin{figure}[ht]
		\includegraphics[width=.95\linewidth]{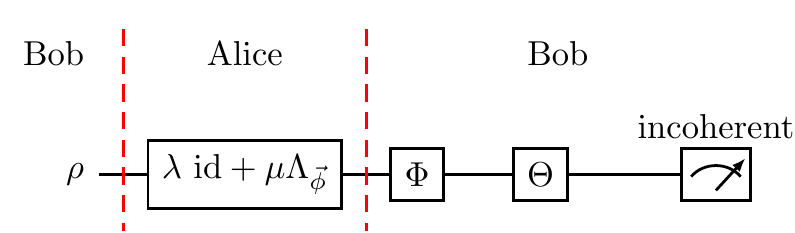}
		\caption{Schematic representation of the game played by Alice and Bob that is used to describe the role of the detection of coherence in interferometry.}\label{fig:op-pre}
	\end{figure}

	Alice applies to this state with probability $\mu$ 
	a channel $\Lambda_{\vec{\phi}}$ introduced in Eq.~\eqref{eq:UnitPhases}, and otherwise, with probability $\lambda=1-\mu$, she leaves it unchanged. She then sends the state back to Bob. His task is to guess if Alice applied $\Lambda_{\vec{\phi}}$ or not. To do this, he is allowed to first apply an arbitrary detection incoherent operation $\Phi$ to the state he retrieved, followed by a fixed channel $\Theta$, and an incoherent measurement of his choice. Based on its outcome, he then announces his guess. Assuming that Bob knows $\lambda$ and the phases $\vec{\phi}$, and that he uses the optimal state $\rho$, the best pre-processing $\Phi$, and the optimal incoherent measurement, his probability of guessing correctly is given by~\cite[Prop.~17]{Theurer2019},
	\begin{equation}
		p^{\text{max}}_{\lambda,\vec{\phi}}(\Theta)=\frac{1}{2}+\frac{1}{2}\max_{\Phi\in \mathcal{DI}}\left|\left|\Delta \Theta \Phi \left(\lambda-\mu \Lambda_{\vec{\phi}}\right)\right|\right|_1,
	\end{equation}
	where $\left|\left|\Theta\right|\right|_1$ denotes the induced trace norm of the operation $\Theta$. Here and in the following, we always implicitly assume that the in- and output dimensions of operations and states that we connect fit, i.e., the state $\rho$ that Bob sends to Alice is an element of the Hilbert space on which $\Lambda_{\vec{\phi}}$ acts, and the maximization is understood to run over  all detection incoherent channels $\Phi$ with in- and output spaces determined by the ones of $\Theta$ and $\Lambda_{\vec{\phi}}$.
	
	If Bob had no access to $\Theta$, his measurement outcome would not depend on whether Alice applied $\Lambda_{\vec{\phi}}$ or not, because the combination of $\Phi$ and the incoherent measurement alone is not sensitive to the changes that $\Lambda_{\vec{\phi}}$ induces.
	His best strategy would thus be to bet purely based on his knowledge of $\lambda$. The increase in his probability of guessing correctly that $\Theta$ provides is therefore given by the functionals 
	\begin{equation}
		M_{\lambda,\vec{\phi}}(\Theta):=\max_{\Phi\in \mathcal{DI}}\left|\left|\Delta \Theta \Phi \left(\lambda-\mu \Lambda_{\vec{\phi}}\right)\right|\right|_1-|\lambda-\mu|,\label{eq:pre_proc}
	\end{equation} 
	which we will call \emph{pre-processed improvements}. These functionals define a family of convex measures in the detection incoherent setting, which is the content of the following Theorem. Its proof can be found in App.~\ref{ap:proofs}, where we also provide the other proofs of the results presented in the main text.
	\begin{theorem}\label{theo:meas_pre}
		The functionals $M_{\lambda,\vec{\phi}}(\Theta)$ are convex measures in the detection incoherent setting for all $\lambda\in [0,1]$ and for all $\vec{\phi}\in\mathbbm{R}^M$.
	\end{theorem}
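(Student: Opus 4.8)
The plan is to verify directly the four defining properties of a convex measure—nullity, non-negativity, monotonicity under the three elementary free operations in Eqs.~(\ref{eq:sx})--(\ref{eq:ce}), and convexity—for the functional $M_{\lambda,\vec\phi}$. Throughout I abbreviate the fixed Hermiticity-preserving map $T:=\lambda\,\id-\mu\,\Lambda_{\vec\phi}$ and record two elementary facts used repeatedly. First, since $\Lambda_{\vec\phi}$ only multiplies off-diagonal entries by phases, it leaves populations invariant, so $\Delta\Lambda_{\vec\phi}=\Delta$ and hence $\Delta T=(\lambda-\mu)\Delta$. Second, the induced trace norm is a genuine norm on Hermiticity-preserving maps (triangle inequality and homogeneity), it is submultiplicative under composition, and every trace-preserving positive map $\Phi$ is a trace-norm contraction on Hermitian operators with $\|\Phi\|_1=1$.

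Non-negativity and nullity are the warm-up. For non-negativity I would feed an incoherent state $\tau$ into the defining expression: because $\Lambda_{\vec\phi}(\tau)=\tau$, one gets $\Delta\Theta\Phi T(\tau)=(\lambda-\mu)\Delta\Theta\Phi(\tau)$, whose trace norm equals $|\lambda-\mu|$ for every channel $\Phi$ (as $\Delta\Theta\Phi(\tau)$ is again a state). Since the induced norm maximizes over inputs, $\max_\Phi\|\Delta\Theta\Phi T\|_1\ge|\lambda-\mu|$, i.e.\ $M_{\lambda,\vec\phi}\ge0$. For nullity, assume $\Theta\in\mathcal{DI}$, i.e.\ $\Delta\Theta=\Delta\Theta\Delta$. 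Using this together with $\Delta\Phi=\Delta\Phi\Delta$ for $\Phi\in\mathcal{DI}$ and the identity $\Delta T=(\lambda-\mu)\Delta$, I would push the dephasing rightward to obtain $\Delta\Theta\Phi T=(\lambda-\mu)\Delta\Theta\Phi$ for every free $\Phi$. As $\Delta\Theta\Phi$ is a composition of channels it has induced norm $1$, so every term in the maximization equals $|\lambda-\mu|$ and $M_{\lambda,\vec\phi}(\Theta)=0$.

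Convexity is immediate: for fixed $\Phi$ the map $\Theta\mapsto\|\Delta\Theta\Phi T\|_1$ is convex (the trace norm is convex and $\Theta\mapsto\Delta\Theta\Phi T$ is linear), a pointwise maximum of convex functions is convex, and subtracting the constant $|\lambda-\mu|$ preserves this. The three monotonicity conditions split into an easy pair and one genuine obstacle. For right composition, Eq.~(\ref{eq:dx}), with a free $\Phi_0\in\mathcal{DI}$, I would use that $\mathcal{DI}$ is closed under composition, so $\Phi_0\Phi$ is an admissible element of the maximization defining $M_{\lambda,\vec\phi}(\Theta)$, giving $M_{\lambda,\vec\phi}(\Theta\Phi_0)\le M_{\lambda,\vec\phi}(\Theta)$ at once. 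For left composition, Eq.~(\ref{eq:sx}), writing $\Delta\Phi_0=\Delta\Phi_0\Delta$ lets me insert a free dephasing, $\Delta\Phi_0\Theta\Phi T=(\Delta\Phi_0)\Delta\Theta\Phi T$, and since $\Delta\Phi_0$ is a channel and hence a trace-norm contraction, each term is bounded by the corresponding one of $M_{\lambda,\vec\phi}(\Theta)$.

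The main obstacle is monotonicity under appending an uncorrelated identity channel, Eq.~(\ref{eq:ce}). Here $\Phi$ ranges over detection-incoherent channels into $A\otimes Z$, and the final dephasing factorizes as $\Delta^{BZ}(\Theta\otimes\id^Z)=(\Delta^B\Theta)\otimes\Delta^Z$. Decomposing in the incoherent basis of $Z$, I would write the output as a block-diagonal operator with blocks $\Delta^B\Theta\Phi_z T(X)$, where $\Phi_z(\cdot)=(\id_A\otimes\bra{z})\Phi(\cdot)(\id_A\otimes\ket{z})$, so that $\|\Delta^{BZ}(\Theta\otimes\id^Z)\Phi T(X)\|_1=\sum_z\|\Delta^B\Theta\Phi_z T(X)\|_1$ and $\sum_z\Phi_z=\Tr_Z\circ\Phi$ is a channel. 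The detection-incoherent constraint descends to each branch as $\Delta^A\Phi_z=\Delta^A\Phi_z\Delta$; the crucial consequence, which I expect to be the heart of the argument, is that the branch weights $\Tr[\Phi_z T(X)]=(\lambda-\mu)\Tr[\Phi_z\Delta X]$ are independent of $\vec\phi$, because $\Lambda_{\vec\phi}$ preserves populations. Operationally this says the ancilla outcome carries no information about whether Alice applied the phase, so subselecting on it cannot improve the discrimination; the task is to convert this into the inequality $\sum_z\|\Delta^B\Theta\Phi_z T(X)\|_1\le\max_{\Phi'\in\mathcal{DI}}\|\Delta^B\Theta\Phi' T\|_1$. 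I anticipate doing this by dualizing the trace norm (optimal Hermitian contractions $W_z$ per branch), collecting the branches into a single effective detection-incoherent pre-processing on the original system, and using the phase-independence of the weights to discard the ancilla without loss; closing this step rigorously is where the real work lies.
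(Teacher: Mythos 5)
Most of your proposal is sound and tracks the paper's own argument: nullity, non-negativity, convexity, and right-composition monotonicity are handled essentially as in the paper, and your left-composition argument (inserting $\Delta\Phi_0=\Delta\Phi_0\Delta$ and using that the channel $\Delta\Phi_0$ is a trace-norm contraction) is a valid, slightly more elementary alternative to the paper's route via incoherent POVMs. You have also correctly isolated the crux of the tensor-stability step: the block-diagonal decomposition over the incoherent basis of $Z$, and the fact that the branch weights $\Tr[\Phi_z T(X)]=(\lambda-\mu)\Tr[\Phi_z\Delta(X)]$ are independent of $\vec{\phi}$. This is precisely the content of the paper's Lemma~\ref{lem:tensor}.

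However, you explicitly leave that step open (``closing this step rigorously is where the real work lies''), and this is a genuine gap, not a formality. The difficulty is that each branch $\Phi_z$ is only a trace-non-increasing CP map, so the terms $\|\Delta^B\Theta\Phi_z T(X)\|_1$ are not yet of the form appearing in the maximization defining $F_{\lambda,\vec\phi}(\Theta)$. The paper closes this by normalizing each branch: with $p_z=\Tr[\mathbbm{P}_z\Phi(X)]$ one sets $\tilde K_{n,z}=\bra{z}K_n/\sqrt{p_z}$, checks that $\{\tilde K_{n,z}\}_n$ defines a CP trace-preserving map $\tilde\Phi_z$ which inherits the detection-incoherence condition $\tilde\Phi^{i,j}_{k,k}\propto\delta_{i,j}$ from $\Phi$, and then writes $\sum_z\|\Delta^B\Theta\Phi_z T(X)\|_1=\sum_z p_z\|\Delta^B\Theta\tilde\Phi_z T(X)\|_1\le\max_z\|\Delta^B\Theta\tilde\Phi_z T(X)\|_1\le F_{\lambda,\vec\phi}(\Theta)$, i.e.\ a convex combination bounded by its largest admissible term; the phase-independence of $p_z$ is what makes this a legitimate convex combination. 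Your alternative plan of dualizing the trace norm and ``collecting the branches into a single effective detection-incoherent pre-processing'' is unlikely to work as stated, because the optimal contractions $W_z$ differ from branch to branch and there is in general no single pre-processing on the original system reproducing all branches simultaneously; the convex-decomposition route is the one that closes the argument.
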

	
	We show in App.~\ref{ap:preproc} that without the optimal pre-processing $\Phi$, the functionals in Eq.~\eqref{eq:pre_proc} are in general not proper measures in the detection incoherent setting. On the other hand, adding a free post-processing after the channel $\Theta$ does not increase Bob's chances of success. This is a direct consequence of the fact that a detection incoherent operation cannot convert an incoherent measurement to a coherent one, and we can thus absorb any free post-processing into the incoherent measurement. Together with the fact that $M_{\lambda,\vec{\phi}}(\Theta)=M_{\lambda,\vec{\phi}}(\Theta\otimes\id)$, which we show in the proof of Thm.~\ref{theo:meas_pre}, this implies that replacing the optimal free pre-processing $\Phi$ by a free pre- and post-processing as well as a memory channel, i.e., 
	\begin{align*}
		\Theta\Phi\rightarrow \Psi(\Theta\otimes\id)\tilde{\Phi}, \quad \tilde{\Phi},\Psi\in \mathcal{DI},
	\end{align*}
	does not increase Bob's chances of success.

	In discrimination games similar to the one that we discuss here, it is frequently the case that the correct usage of auxiliary systems increases the chances of success~\cite[Chap~3.3]{Watrous2018}. 
	Therefore, one might assume that it is beneficial for Bob to prepare a correlated state of a composed system $AZ$ and hand only a part of it, i.e., $A$, to Alice (see Fig.~\ref{fig:diam} for the adapted protocol). 
	\begin{figure}[ht]
		\includegraphics[width=.95\linewidth]{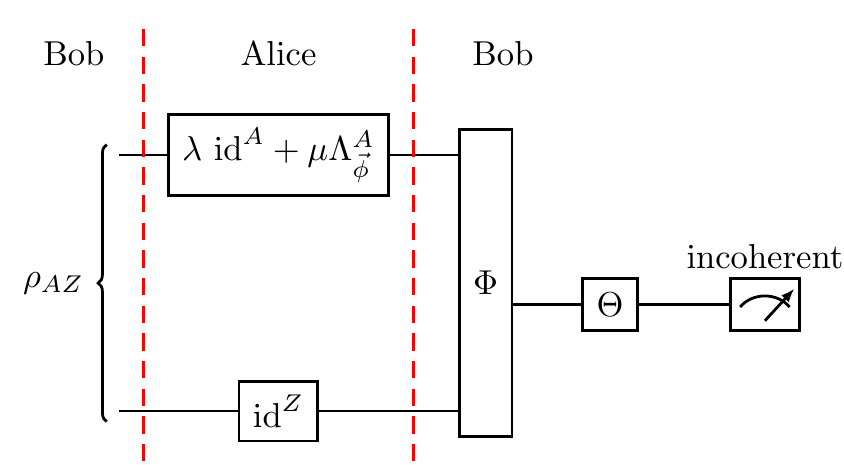}
		\caption{Sketch of a potential method with which Bob might increase his chances in the guessing game represented in Fig.~\ref{fig:op-pre}. He prepares a correlated state and hands only a subsystem to Alice.}\label{fig:diam}
	\end{figure}
	This is however not true.
	\begin{theorem}\label{theo:aux}
		An auxiliary system does not increase the pre-processed improvements, i.e., for $\vec{\tilde{\phi}}(AZ)$ such that 
		\begin{equation*}
			\Lambda_{\vec{\tilde{\phi}}(AZ)}:=\Lambda_{\vec{\phi}}^{A}\otimes\id^{Z},
		\end{equation*}
		it holds that
		\begin{equation*}
			M_{\lambda,\vec{\tilde{\phi}}(AZ)}(\Theta)= M_{\lambda,\vec{\phi}}(\Theta) \quad \forall \Theta\ \forall \lambda \ \forall {\vec{\phi}} \ \forall Z .
		\end{equation*}
	\end{theorem}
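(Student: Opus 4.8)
The plan is to establish the two inequalities $M_{\lambda,\vec{\tilde{\phi}}(AZ)}(\Theta)\ge M_{\lambda,\vec{\phi}}(\Theta)$ and $M_{\lambda,\vec{\tilde{\phi}}(AZ)}(\Theta)\le M_{\lambda,\vec{\phi}}(\Theta)$ separately. Throughout I read the left-hand side with the channel completed to $\Theta\otimes\id^Z$, so that its input matches the system $AZ$ on which $\Lambda_{\vec{\tilde{\phi}}(AZ)}=\Lambda_{\vec{\phi}}^{A}\otimes\id^{Z}$ acts; this is legitimate because the identity $M_{\lambda,\vec{\phi}}(\Theta)=M_{\lambda,\vec{\phi}}(\Theta\otimes\id)$ is already available from the proof of Thm.~\ref{theo:meas_pre}. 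Writing $D:=\lambda\,\id^{AZ}-\mu\,\Lambda_{\vec{\phi}}^{A}\otimes\id^{Z}=D_A\otimes\id^{Z}$ with $D_A:=\lambda\,\id^{A}-\mu\,\Lambda_{\vec{\phi}}$ makes manifest that the whole phase dependence is carried by the $A$ factor alone, which is the structural fact the proof must exploit.

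The inequality ``$\ge$'' is the easy one: I would embed an optimal no-auxiliary strategy into the enlarged setting. Given the optimal $\Phi'\in\mathcal{DI}(A)$ for $M_{\lambda,\vec{\phi}}(\Theta)$, the map $\Phi'\otimes\id^{Z}$ lies in $\mathcal{DI}(AZ)$ (a tensor product of detection-incoherent maps is detection-incoherent, since $\Delta^{AZ}=\Delta^{A}\otimes\Delta^{Z}$ factorizes), and restricting the input states to the product form $\rho\otimes\ketbra{0}{0}_Z$ reproduces exactly the no-auxiliary value, because $D$ acts trivially on the incoherent ancilla and the trace norm factorizes across the $Z$ register. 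Hence the maximization defining $M_{\lambda,\vec{\tilde{\phi}}(AZ)}(\Theta)$ is at least $M_{\lambda,\vec{\phi}}(\Theta)$.

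The substance is the reverse inequality. The first step is to use that the final measurement is incoherent on the entire output $BZ$. With $\Delta^{BZ}=\Delta^{B}\otimes\Delta^{Z}$, only the $Z$-diagonal blocks survive and the objective factorizes over the incoherent outcomes $k$ of the auxiliary register,
\begin{equation*}
\big\|\Delta^{BZ}(\Theta\otimes\id^{Z})\Phi\,D(\rho)\big\|_1=\sum_k\big\|\Delta^{B}\Theta(\omega_k)\big\|_1,\qquad \omega_k:=\bra{k}_Z\Phi\,D(\rho)\ket{k}_Z .
\end{equation*}
Expanding the input in $Z$-blocks, $\rho=\sum_{k'l'}\rho^{(k'l')}\otimes\ketbra{k'}{l'}_Z$, and using $D=D_A\otimes\id^{Z}$, each diagonal block $k'=l'$ contributes through the map $X\mapsto\bra{k}_Z\Phi(X\otimes\ketbra{k'}{k'}_Z)\ket{k}_Z$, which the detection-incoherent condition $\Delta^{AZ}\Phi=\Delta^{AZ}\Phi\Delta^{AZ}$ certifies to be a subnormalized element of $\mathcal{DI}(A)$; together with $D_A$ placed in front of it this is exactly a no-auxiliary instance. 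Were only these diagonal blocks present, $\sum_k\|\Delta^{B}\Theta(\omega_k)\|_1$ would be a probability-weighted convex combination of no-auxiliary objectives, hence bounded by $M_{\lambda,\vec{\phi}}(\Theta)+|\lambda-\mu|$, which gives the claim after subtracting $|\lambda-\mu|$.

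The main obstacle I anticipate is the off-diagonal blocks $k'\neq l'$. The detection-incoherent constraint does force them out of the \emph{populations} of every conditional, $\Delta^{A}\omega_k$ receiving no contribution from $k'\neq l'$ because $\Delta^{Z}\ketbra{k'}{l'}_Z=0$; but since $\Theta$ is an arbitrary coherence-detecting channel, the coherences that $\Phi$ funnels from these $Z$-off-diagonals into $A$ are in principle visible through $\Theta$, and one must rule out that they supply extra phase sensitivity. This is the step I expect to demand the most care. The cleanest route is the dual formulation $\|\Delta^{B}\Theta(\,\cdot\,)\|_1=\max\{\Tr(R\,\cdot\,):R=\Theta^\dagger(W),\ -\mathbbm{1}\le R\le\mathbbm{1}\}$ with $W$ diagonal on $BZ$, so that the pulled-back witness $R$ is $Z$-diagonal; combining this with the Heisenberg form $\Phi^\dagger\Delta^{AZ}=\Delta^{AZ}\Phi^\dagger\Delta^{AZ}$ of the detection-incoherent constraint and the identity $\Delta^{A}D_A^\dagger=(\lambda-\mu)\Delta^{A}$, one aims to show that the extremal witness for the auxiliary problem never outperforms an $A$-only witness, thereby collapsing the auxiliary optimization onto the no-auxiliary one and closing the bound.
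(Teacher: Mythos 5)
Your easy direction (embedding a no-auxiliary strategy as $\Phi'\otimes\id^Z$ acting on $\rho\otimes\ketbra{0}{0}_Z$) is sound and matches the paper's argument in spirit. The problem is the reverse inequality, where your proposal contains a genuine gap that you yourself flag but do not close. After splitting the objective over the incoherent outcomes $k$ of the auxiliary register, the conditional operators $\omega_k=\bra{k}_Z\Phi\,D(\rho)\ket{k}_Z$ receive contributions from the $Z$-off-diagonal blocks $\rho^{(k'l')}$, $k'\neq l'$, of a correlated input (e.g.\ $\ket{\psi}_{AZ}=\sum_m\psi_m\ket{m}_A\ket{m}_Z$, precisely the case one must worry about). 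The detection-incoherent condition on $\Phi$ only constrains the populations of its output, not the coherences that these blocks inject into the input of the arbitrary detecting channel $\Theta$; so the convex-combination bound you obtain for the diagonal blocks simply does not cover the full objective. Your proposed remedy --- a dual-witness argument showing that the pulled-back, $Z$-diagonal witness ``never outperforms an $A$-only witness'' --- is a restatement of the claim to be proved rather than a proof, and no mechanism is given for why the extremal witness should collapse.

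The paper resolves this by a completely different device: rather than decomposing the auxiliary strategy at the output, it simulates the auxiliary \emph{input} within the no-auxiliary game. Lemma~\ref{lem:free_exp} constructs, for every $\ket{\psi}_{AZ}$, a detection-incoherent isometry $\Phi^{AZ\leftarrow A}$ (of the form $U=\sum_{m,n}v_{m,n}\ket{m,n}_{AZ}\bra{m}_A$) and a state $\ket{\varphi}_A$ such that $\Phi^{AZ\leftarrow A}(\lambda-\mu\Lambda_{\vec{\phi}}^A)(\ketbra{\varphi}{\varphi}_A)=((\lambda-\mu\Lambda_{\vec{\phi}}^A)\otimes\id^Z)(\ketbra{\psi}{\psi}_{AZ})$; i.e.\ the state expansion intertwines with the phase encoding and can be absorbed into the free pre-processing. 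Combined with the reduction to pure inputs (convexity of the trace norm) and constancy under tensor product from the proof of Thm.~\ref{theo:meas_pre}, this turns every auxiliary strategy $(\ket{\psi}_{AZ},\Phi^{B\leftarrow AZ})$ into the no-auxiliary strategy $(\ket{\varphi}_A,\Phi^{B\leftarrow AZ}\Phi^{AZ\leftarrow A})$ with the same value. If you want to salvage your route, you would need an analogue of this intertwining isometry; as written, the off-diagonal-block issue leaves the hard inequality unproven.
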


	These results imply that the pre-processed improvements describe the maximal usefulness of an operation's ability to detect coherence in our guessing games: if we have access to $\Theta$ and are allowed to combine it with arbitrary operations that cannot detect coherence, but with none that can, the pre-preocessed improvements quantify the advantage that $\Theta$ grants. The pre-processed improvements are thus resource measures with a clear operational interpretation in terms of the games.
	
	Moreover, these guessing games are directly connected to our interferometric setup: Bob's preparation of the arbitrary $\rho$  
	corresponds to a choice of the incoherent $\tau$ as well as $\Theta_1$ in Fig.~\ref{fig:multi-path}, and $\Theta_2$ represents the combination of $\Phi$ and $\Theta$. 
	Since we are only interested in $\Theta_1(\tau)$, but not in $\Theta_1$ alone, we can always choose a pair $\Theta_1$ and $\tau$ such that $\Theta_1$ does not detect coherence  ($\Theta_1(\tau)=\rho\Tr(\tau)$, where $\rho$ is our optimal state).
	The pre-processed improvements $M_{\lambda,\vec{\phi}}(\Theta)$ therefore describe the maximal usefulness of an operation's ability to detect coherence in a concrete interferometric task, i.e., deciding if, e.g., a set of samples was present or not. Since the pre-processed improvements are also valid resource measures in the detection incoherent setting, we showed one of the main results of this paper: the ability to detect coherence is a resource in interferometry.
	
	As we pointed out in Sec.~\ref{subsec:rts}, another desirable property for measures is faithfulness. In our operational setting, it would ensure that every non-free channel is at least a little helpful for the interferometric task that we intend to accomplish. Faithfulness of the functionals $M_{\lambda,\vec{\phi}}(\Theta)$ depends however on $\lambda$, which is the content of the following Theorem.
	\begin{theorem}\label{thm:faithfulnessPre}
		For all $\vec{\phi}\in\mathbbm{R}^M$ with at least two different components, the functionals $M_{\lambda,\vec{\phi}}(\Theta)$ are faithful if and only if $\lambda=\frac{1}{2}$.
	\end{theorem}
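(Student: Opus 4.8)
The plan is to prove the two implications separately, treating $\lambda=\tfrac12$ and $\lambda\ne\tfrac12$ by genuinely different arguments, and invoking Theorem~\ref{theo:meas_pre} only for nullity and non-negativity. In particular, since $M_{\lambda,\vec{\phi}}$ is a measure, $M_{\lambda,\vec{\phi}}(\Theta)\ge 0$ always, and evaluating the norm at any state shows that each $\Phi$ contributes at least $|\lambda-\mu|$; hence $M_{\lambda,\vec{\phi}}(\Theta)=0$ is equivalent to $\|\Delta\Theta\Phi(\lambda\,\id-\mu\Lambda_{\vec{\phi}})\|_1=|\lambda-\mu|$ for every $\Phi\in\mathcal{DI}$.

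For the \emph{if} direction I would first exploit that at $\lambda=\mu=\tfrac12$ the baseline $|\lambda-\mu|$ vanishes and $\lambda\,\id-\mu\Lambda_{\vec{\phi}}=\tfrac12(\id-\Lambda_{\vec{\phi}})$, so that $M_{1/2,\vec{\phi}}(\Theta)=\tfrac12\max_{\Phi\in\mathcal{DI}}\|\Delta\Theta\Phi(\id-\Lambda_{\vec{\phi}})\|_1$. Thus $M_{1/2,\vec{\phi}}(\Theta)=0$ forces $\Delta\Theta\Phi=\Delta\Theta\Phi\Lambda_{\vec{\phi}}$ for all $\Phi\in\mathcal{DI}$. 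Evaluating on $\ketbra{k}{l}$ for the guaranteed pair $\phi_k\ne\phi_l$ and using $\Lambda_{\vec{\phi}}(\ketbra{k}{l})=e^{i(\phi_k-\phi_l)}\ketbra{k}{l}$ with $e^{i(\phi_k-\phi_l)}\ne 1$ gives $\Delta\Theta\Phi(\ketbra{k}{l})=0$ for all $\Phi\in\mathcal{DI}$. The key step is then to propagate this single vanishing coherence to all of them: for any $m\ne n$ pick a permutation of the incoherent basis sending $k\mapsto m$, $l\mapsto n$, implemented by an incoherent unitary $V$; since $\Phi=V(\cdot)V^{\dagger}\in\mathcal{DI}$ and $\Phi(\ketbra{k}{l})=\ketbra{m}{n}$, one gets $\Delta\Theta(\ketbra{m}{n})=0$. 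As this holds for every off-diagonal $\ketbra{m}{n}$, linearity yields $\Delta\Theta=\Delta\Theta\Delta$, i.e.\ $\Theta\in\mathcal{DI}$, so the measure is faithful.

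For the \emph{only if} direction I would, given $\lambda\ne\tfrac12$, construct a non-free channel with $M_{\lambda,\vec{\phi}}=0$. First I reduce the vanishing condition to a pointwise one: because $\lambda\,\id-\mu\Lambda_{\vec{\phi}}$ is Hermiticity-preserving and $\Delta\Theta\Phi$ is trace-preserving, the induced trace norm is attained on Hermitian inputs and, by convexity of $X\mapsto\|\cdot(X)\|_1$, at pure states $\ketbra{\psi}{\psi}$. Writing the diagonal probability vectors $p=\Delta\Theta\Phi(\ketbra{\psi}{\psi})$ and $r=\Delta\Theta\Phi\Lambda_{\vec{\phi}}(\ketbra{\psi}{\psi})$, the output is $\lambda p-\mu r$ with trace $\lambda-\mu$; using $\|H\|_1\ge|\Tr H|$ with equality exactly when $H$ is sign-definite, $M_{\lambda,\vec{\phi}}(\Theta)=0$ becomes the entrywise domination $r_i\le\tfrac{\lambda}{\mu}p_i$ (for $\lambda>\tfrac12$; the reversed inequality for $\lambda<\tfrac12$) for all $i$, all pure $\psi$, and all $\Phi\in\mathcal{DI}$.

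The main obstacle, and the crux of the construction, is to meet this domination while keeping $\Theta$ genuinely non-free. I would resolve it by admixing a constant channel: fix any non-free $\Theta_0$ and set $\Theta_\epsilon:=(1-\epsilon)\tfrac{\mathbbm{1}}{M}\Tr(\cdot)+\epsilon\,\Theta_0$. This simultaneously bounds the outputs away from zero, $p_i\ge\tfrac{1-\epsilon}{M}$, and keeps the coherence-induced response uniformly small, $|r_i-p_i|\le\epsilon$, so that $r_i\le\tfrac{1-\epsilon}{M}+\epsilon\le\tfrac{\lambda}{\mu}p_i$ holds whenever $\epsilon\le\frac{\lambda-\mu}{\mu M+\lambda-\mu}$, which is positive precisely because $\lambda\ne\mu$. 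Since $\Delta\Theta_\epsilon=(1-\epsilon)\tfrac{\mathbbm{1}}{M}\Tr+\epsilon\,\Delta\Theta_0\ne\Delta\Theta_\epsilon\Delta$ for $\epsilon>0$, the channel $\Theta_\epsilon$ is non-free yet has $M_{\lambda,\vec{\phi}}(\Theta_\epsilon)=0$, so faithfulness fails; the case $\lambda<\tfrac12$ is entirely analogous with inequalities reversed. I expect the delicate points to be the justification that the norm is attained on pure Hermitian inputs and the verification that the admixture bounds hold uniformly over \emph{all} pre-processings $\Phi\in\mathcal{DI}$, not merely for $\Phi=\id$.
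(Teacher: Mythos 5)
Your proposal is correct, and both directions take a genuinely different route from the paper's own proof. For the ``if'' direction the paper argues directly: given a non-free $\Theta$ with $\Theta^{\tilde k,\tilde l}_{0,0}\neq 0$, it constructs an explicit coherent input state and a SWAP pre-processing that aligns the subspace in which $\Lambda_{\vec{\phi}}$ encodes with the one in which $\Theta$ detects, and derives the quantitative lower bound $M_{\frac12,\vec{\phi}}(\Theta)\ge|\Theta^{\tilde k,\tilde l}_{0,0}|\,|1-e^{i\phi}|$. You instead argue by contraposition: vanishing of the measure forces $\Delta\Theta\Phi=\Delta\Theta\Phi\Lambda_{\vec{\phi}}$ for every $\Phi\in\mathcal{DI}$, and sweeping over permutation pre-processings kills every off-diagonal response of $\Delta\Theta$, giving $\Delta\Theta=\Delta\Theta\Delta$. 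This is shorter and avoids the explicit state construction, at the price of not producing a lower bound; it shares with the paper the implicit reading of ``two different components'' as $e^{i(\phi_k-\phi_l)}\neq1$, and you should add a line for the case where the input dimension of $\Theta$ differs from $M$, replacing the permutation by a $\mathcal{DI}$ map sending $\ketbra{k}{l}$ to $\ketbra{m}{n}$ (built, e.g., from the Kraus operator $\ketbra{m}{k}+\ketbra{n}{l}$ plus rank-one completions). For the ``only if'' direction the divergence is larger: the paper takes $\tilde\Theta=p_1Q(\cdot)Q^\dagger+p_2\,\id$ with $Q$ a costly unitary, restricts the input dimension to two, and runs a case analysis on the index representation via Lem.~\ref{lem:detect_two}. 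Your admixture with the completely depolarizing channel, $\Theta_\epsilon=(1-\epsilon)\tfrac{\mathbbm{1}}{N}\Tr(\cdot)+\epsilon\,\Theta_0$, yields the uniform bounds $p_i\ge\tfrac{1-\epsilon}{N}$ and $|r_i-p_i|\le\epsilon$ simultaneously for every pre-processing and every input state, so the sign-definiteness criterion $\|\lambda p-\mu r\|_1=|\lambda-\mu|$ follows from an explicit $\epsilon$ threshold in any dimension and for any non-free seed $\Theta_0$; this is both simpler and more general than the paper's counterexample. The only cosmetic fix is that the maximally mixed state must live on the output space of $\Theta_0$, so the $M$ in your threshold should be the output dimension rather than the number of paths.
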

	In App.~\ref{ap:peek}, we discuss in more detail why the functionals $M_{\lambda,\vec{\phi}}(\Theta)$ are only faithful for $\lambda=\frac{1}{2}$. In essence, the intuitive reason behind this can be understood on purely classical grounds: if  $\lambda\approx 1$ and the conclusions that can be drawn from the measurement outcomes are very uncertain, the best guessing strategy is determined exclusively by the knowledge of $\lambda$, i.e., by always guessing the a priory more likely case. Moreover, there exist resourceful operations arbitrarily close to the free ones and, using free operations, the measurement outcomes are insensitive to whether $\Lambda_{\vec{\phi}}$ was applied or not. This implies that for $\lambda\ne\frac{1}{2}$, we can always find a resourceful operation such that the optimal guessing strategy only depends on $\lambda$ (which we exploit in the proof of this Theorem given in App.~\ref{ap:proofs}). However, since every resourceful operation can detect coherence and the measurement outcomes thus depend on whether $\Lambda_{\vec{\phi}}$ was applied or not (given that Bob plays the game ideally), $M_{\frac{1}{2},\vec{\phi}}(\Theta)$ is faithful.

	Now that we connected the pre-processed improvements with interferometry, a natural question to ask is how to evaluate them numerically. This is for example relevant if we want to decide if one operation outperforms another one in our discrimination games. Since the pre-processed improvements are defined via the optimizations in Eq.~\eqref{eq:pre_proc} and the induced trace norm includes an additional optimization over states, evaluating $M_{\lambda,\vec{\phi}}(\Theta)$ is however not straightforward. In App.~\ref{ap:eval}, we propose a method based on semidefinite programming that in addition leads to an optimal state and pre-processing.

	\subsection{Creating coherence and interferometry}
	Here, we analyse a setting that connects the creation of coherence to interferometry. As in Sec.~\ref{subsec:DetMain}, we introduce it as a game between Alice and Bob (see Fig.~\ref{fig:op-post}). 
	This time, Bob is provided with a fixed \emph{creating operation} $\Theta$ that he applies to an incoherent state of his choice. Before sending this state to Alice, Bob is allowed to further apply an arbitrary creation incoherent operation $\Psi$ onto $\Theta(\tau)$. Alice receives the state $\Psi\Theta(\tau)$, applies again with probability $\mu$ a fixed operation $\Lambda_{\vec{\phi}}$, and sends the resulting state back to Bob. Finally, Bob performs a generic measurement on the state he retrieved and guesses whether Alice applied $\Lambda_{\vec{\phi}}$ or not.

	\begin{figure}[ht]
		\includegraphics[width=.95\linewidth]{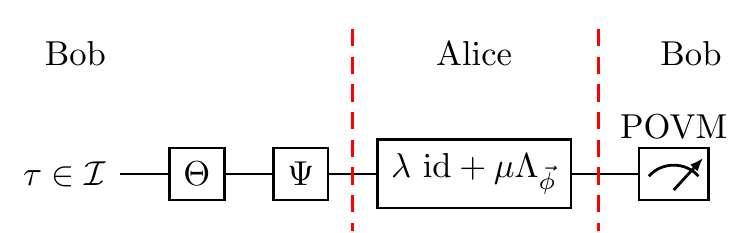}
		\caption{Sketch of the game played by Alice and Bob that is used to describe the role of the creation of coherence in interferometry.}\label{fig:op-post}
	\end{figure}
	
	In this case, assuming that Bob prepares the best incoherent state $\tau$, applies the best post-processing $\Psi$, and the best final measurement, his probability of guessing correctly whether Alice applied $\Lambda_{\vec{\phi}}$ or not is given by~\cite{Watrous2018}
	\begin{equation}
		p^{\text{max}}_{\lambda,\vec{\phi}}(\Theta)=\frac{1}{2}+ \frac{1}{2}\max_{\Psi\in\mathcal{MIO}}\left|\left|\left(\lambda-\mu \Lambda_{\vec{\phi}}\right) \Psi \Theta\Delta\right|\right|_1.
	\end{equation} 
	Analogously to the previous setting, we define the functionals 
	\begin{equation}
		N_{\lambda,\vec{\phi}}(\Theta):=\max_{\Psi\in\mathcal{MIO}}\left|\left|\left(\lambda-\mu \Lambda_{\vec{\phi}}\right) \Psi \Theta\Delta\right|\right|_1-|\lambda-\mu |.\label{eq:post_proc}
	\end{equation}
	and call them \emph{post-processed improvements}. The connection to our interferometric setup is again straight forward: the operation $\Theta_1$ in Fig.~\ref{fig:multi-path} is represented by the joint action of $\Theta$ and $\Psi$, and $\Theta_2$ and the incoherent measurement afterwards form the general POVM. Analogously to Thm.~\ref{theo:meas_pre}, we further find
	
	\begin{theorem}\label{theo:meas_post}
		The functionals $N_{\lambda,\vec{\phi}}(\Theta)$ are convex measures in the creation incoherent setting for all $\lambda\in [0,1]$ and for all $\vec{\phi}\in\mathbbm{R}^M$.
	\end{theorem}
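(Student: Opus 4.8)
The plan is to establish, in turn, every defining property of a convex resource measure in the creation incoherent setting for the functionals $N_{\lambda,\vec{\phi}}$ defined in Eq.~\eqref{eq:post_proc}, following the template of Theorem~\ref{theo:meas_pre} with the roles of pre- and post-processing interchanged. The two workhorse observations are that $\Lambda_{\vec{\phi}}$ acts as the identity on incoherent (diagonal) operators, because the phase factors $e^{i(\phi_i-\phi_j)}$ only survive for $i=j$, and that any composition of channels has induced trace norm equal to $1$. For \emph{non-negativity} I would evaluate the maximand at the particular choice $\Psi=\Delta$, which lies in $\mathcal{MIO}$ since $\Delta$ is idempotent: then $\Psi\Theta\Delta=\Delta\Theta\Delta$ has incoherent output, $\Lambda_{\vec{\phi}}$ drops out, and $\left|\left|(\lambda-\mu\Lambda_{\vec{\phi}})\Delta\Theta\Delta\right|\right|_1=|\lambda-\mu|\,\left|\left|\Delta\Theta\Delta\right|\right|_1=|\lambda-\mu|$, so the maximum in Eq.~\eqref{eq:post_proc} is at least $|\lambda-\mu|$ and $N_{\lambda,\vec{\phi}}(\Theta)\ge 0$. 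For \emph{nullity}, if $\Theta\in\mathcal{MIO}$ then $\Theta\Delta=\Delta\Theta\Delta$ produces incoherent outputs, which every $\Psi\in\mathcal{MIO}$ preserves; the same collapse of $\Lambda_{\vec{\phi}}$ then yields norm exactly $|\lambda-\mu|$ for \emph{every} admissible $\Psi$, so the maximum equals $|\lambda-\mu|$ and $N_{\lambda,\vec{\phi}}(\Theta)=0$.

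\emph{Convexity} is immediate: for fixed $\Psi$ the map $\Theta\mapsto\left|\left|(\lambda-\mu\Lambda_{\vec{\phi}})\Psi\Theta\Delta\right|\right|_1$ is the norm of an expression linear in $\Theta$, hence convex, a pointwise maximum over $\Psi$ of convex functions is convex, and subtracting the constant $|\lambda-\mu|$ does not affect this. The two composition monotonicities rest on structural properties of $\mathcal{MIO}$. For left composition, Eq.~\eqref{eq:sx}, closure of $\mathcal{MIO}$ under composition gives $\Psi\Phi\in\mathcal{MIO}$ whenever $\Phi\in\mathcal{MIO}$, so the optimization defining $N_{\lambda,\vec{\phi}}(\Phi\Theta)$ ranges over a subset of the one for $N_{\lambda,\vec{\phi}}(\Theta)$ and the inequality follows. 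For right composition, Eq.~\eqref{eq:dx}, I would use the defining relation $\Phi\Delta=\Delta\Phi\Delta$ to rewrite $\Theta\Phi\Delta=\Theta\Delta\,(\Phi\Delta)$; since $\Phi\Delta$ is a channel with unit induced trace norm, submultiplicativity of the induced trace norm gives $\left|\left|(\lambda-\mu\Lambda_{\vec{\phi}})\Psi\Theta\Phi\Delta\right|\right|_1\le\left|\left|(\lambda-\mu\Lambda_{\vec{\phi}})\Psi\Theta\Delta\right|\right|_1$, and maximizing over $\Psi$ yields $N_{\lambda,\vec{\phi}}(\Theta\Phi)\le N_{\lambda,\vec{\phi}}(\Theta)$.

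The main obstacle is stability under tensoring with an ancilla, Eq.~\eqref{eq:ce}, i.e.\ $N_{\lambda,\vec{\phi}}(\Theta\otimes\id^Z)\le N_{\lambda,\vec{\phi}}(\Theta)$, where on $AZ$ the phase channel acts as $\Lambda_{\vec{\phi}}^A\otimes\id^Z$ and the dephasing as $\Delta^A\otimes\Delta^Z$. The ``$\ge$'' direction is routine: restricting to product inputs $X_A\otimes\omega_Z$ with $\omega_Z$ a normalized incoherent state and to product post-processings $\Psi=\Psi'\otimes\id^Z\in\mathcal{MIO}(AZ)$ factorizes the whole expression and reproduces the single-system value, so that in fact equality $N_{\lambda,\vec{\phi}}(\Theta\otimes\id^Z)=N_{\lambda,\vec{\phi}}(\Theta)$ holds once ``$\le$'' is known. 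The difficulty lies in ``$\le$'', since a general $\Psi\in\mathcal{MIO}(AZ)$ may correlate $A$ and $Z$ and, because the induced trace norm is not stable under tensoring with the identity, one cannot simply discard $Z$. Here I would write the (already diagonal, after $\Delta^{AZ}$) input as $X=\sum_a X_A^{(a)}\otimes\ketbra{a}{a}$ with $\sum_a\left|\left|X_A^{(a)}\right|\right|_1=\left|\left|X\right|\right|_1$, pass to the dual formulation of the trace norm, and reduce the estimate to a per-block bound in which the combined action of the joint post-processing and the $A$-local phase channel is dominated by an $\mathcal{MIO}$ post-processing on $A$ alone followed by the single-system phase channel. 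This is precisely the ancilla-irrelevance phenomenon established for the pre-processed family in Theorems~\ref{theo:meas_pre} and~\ref{theo:aux}, and I expect adapting that argument to the creation incoherent setting, i.e.\ controlling the correlating freedom of $\Psi$ across the $Z$-blocks, to be the decisive technical step; the remaining properties above are essentially bookkeeping.
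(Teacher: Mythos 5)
Your treatment of nullity, non-negativity, convexity, and the two composition monotonicities is sound and matches the paper's proof in substance; the small deviations (evaluating at $\Psi=\Delta$ for non-negativity instead of the bound $\Tr|X|\ge|\Tr X|$, and using submultiplicativity of the induced trace norm together with $\Theta\Phi\Delta=(\Theta\Delta)(\Phi\Delta)$ for right composition instead of the paper's observation that $\Phi$ maps incoherent inputs to incoherent inputs) are both valid and essentially equivalent. However, the one property you flag as ``the decisive technical step'' -- monotonicity under tensoring with the identity -- is exactly the part you do not prove, and it is where all the real work in this theorem lies. Announcing that you ``expect adapting that argument to the creation incoherent setting'' to go through is not a proof of it, so as it stands the argument has a genuine gap.

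The paper closes this gap with Lem.~\ref{lem:post_tensor}, and the construction is more elementary than the route you sketch: no dual formulation of the trace norm is needed. Since the input $\rho_{AZ}$ is incoherent, conditioning on the incoherent basis of $Z$ gives $\left(\Theta^{B\leftarrow A}\otimes\id^Z\right)(\rho_{AZ})=\sum_b p_b\,\Theta^{B\leftarrow A}(\rho_{|b})\otimes\ketbra{b}{b}_Z$ with each $\rho_{|b}$ incoherent; then, writing $\Psi^{C\leftarrow BZ}$ in Kraus form $\{K_n\}$, the operators $L_{n,b}:=K_n\ket{b}_Z$ define channels $\Psi_b^{C\leftarrow B}$ satisfying $\Psi_b\Delta=\Delta\Psi_b\Delta$ whenever $\Psi\in\mathcal{MIO}$, because $\Psi_b\Delta(\rho_B)=\Psi\Delta(\rho_B\otimes\ketbra{b}{b}_Z)$. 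This turns the joint expression into the convex combination $\sum_bp_b\Psi_b\Theta(\rho_{|b})$, and convexity of the trace norm finishes the estimate. Note also a conceptual slip in your setup of this step: in the creation incoherent setting the phase channel $\Lambda_{\vec{\phi}}$ acts on the \emph{output} of the post-processing $\Psi$, which Bob chooses, so it is the same fixed channel in $N_{\lambda,\vec{\phi}}(\Theta\otimes\id^Z)$ and in $N_{\lambda,\vec{\phi}}(\Theta)$ -- it is not replaced by $\Lambda_{\vec{\phi}}^A\otimes\id^Z$ (that replacement is the content of Thm.~\ref{theo:aux} in the \emph{detection} setting). Relatedly, your product ansatz $\Psi=\Psi'\otimes\id^Z$ for the reverse inequality has mismatched output spaces; take $\Psi'\otimes\Tr_Z$ instead, or argue as the paper does via $G_{\lambda,\vec{\phi}}(\Theta\otimes\id^Z)\ge G_{\lambda,\vec{\phi}}\left(\Tr_Z(\Theta\otimes\id^Z)\right)=G_{\lambda,\vec{\phi}}(\Theta)$.
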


	The post-processed improvements therefore quantify the ability of an operation $\Theta$ to create coherence and, using similar arguments as in the previous section, have the operational interpretation that they describe the advantage that $\Theta$'s ability to create coherence grants in a concrete interferometric setup. We thus established another main result, namely that the ability to create coherence is a relevant resource in interferometry too.
	
	Concerning faithfulness, we have the following result.
	\begin{theorem}\label{theo:faith_post}
		The functionals $N_{\frac{1}{2},\vec{\phi}}(\Theta)$ are faithful for all $\vec{\phi}\in\mathbbm{R}^M$ with at least two different components.
	\end{theorem}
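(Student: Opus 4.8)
The plan is to establish the two implications of faithfulness separately. Setting $\lambda=\mu=\tfrac12$ gives $|\lambda-\mu|=0$, so the functional simplifies to $N_{\frac12,\vec\phi}(\Theta)=\tfrac12\max_{\Psi\in\mathcal{MIO}}\left\|(\id-\Lambda_{\vec\phi})\Psi\Theta\Delta\right\|_1$. The implication ``$\Theta$ free $\Rightarrow N_{\frac12,\vec\phi}(\Theta)=0$'' is just the nullity already guaranteed by Thm.~\ref{theo:meas_post}. The substance of the theorem is therefore the converse, which I would prove in contrapositive form: if $\Theta\notin\mathcal{MIO}$, then $N_{\frac12,\vec\phi}(\Theta)>0$.

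First I would translate non-freeness into a concrete witness. Since $\mathcal{MIO}$ is characterized by $\Theta\Delta=\Delta\Theta\Delta$, a channel $\Theta\notin\mathcal{MIO}$ must differ from $\Delta\Theta\Delta$ on the diagonal operators. Because $\Delta$ maps onto the span of the $\ketbra{i}{i}$ and $\Theta$ is linear, there is then a basis state $\ketbra{i}{i}$ whose image carries a nonzero off-diagonal entry, i.e.\ $[\Theta(\ketbra{i}{i})]_{kl}\neq 0$ for some $k\neq l$. This $\ketbra{i}{i}$ is the incoherent input I would feed into the induced trace norm; note $\Delta\ketbra{i}{i}=\ketbra{i}{i}$ and $\|\ketbra{i}{i}\|_1=1$, so it is an admissible witness.

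The key step is to route this created coherence into a phase-sensitive pair. From Eq.~\eqref{eq:UnitPhases}, $(\id-\Lambda_{\vec\phi})$ multiplies the matrix entry in position $(a,b)$ by $1-e^{i(\phi_a-\phi_b)}$, which vanishes precisely when $\phi_a=\phi_b$; in particular it annihilates all diagonal entries. Hence the choice $\Psi=\id$ need not suffice, since the pair $(k,l)$ may satisfy $\phi_k=\phi_l$. Here I would exploit the freedom in $\Psi$: because $\vec\phi$ has at least two different components there exist $a\neq b$ with $\phi_a\neq\phi_b$, and the incoherent permutation channel sending $\ket{k}\mapsto\ket{a}$ and $\ket{l}\mapsto\ket{b}$ (extend the partial injection to a full permutation, possible since $k\neq l$ and $a\neq b$) lies in $\mathcal{MIO}$. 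For this $\Psi$, the $(a,b)$ entry of $\Psi\Theta(\ketbra{i}{i})$ equals $[\Theta(\ketbra{i}{i})]_{kl}\neq0$, so $(\id-\Lambda_{\vec\phi})\Psi\Theta(\ketbra{i}{i})$ has nonzero entry $(1-e^{i(\phi_a-\phi_b)})[\Theta(\ketbra{i}{i})]_{kl}$ and therefore nonzero trace norm. Evaluating the induced trace norm on $\ketbra{i}{i}$ then forces $N_{\frac12,\vec\phi}(\Theta)>0$.

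I expect the main obstacle to be precisely this matching of the created coherence to an active phase pair: the operator $(\id-\Lambda_{\vec\phi})$ is blind to coherences between equal-phase basis states, so the argument hinges on showing that a free post-processing can always relocate whatever coherence $\Theta$ produces onto a distinct-phase pair. The two facts that make this go through are that incoherent permutations genuinely belong to $\mathcal{MIO}$ and that any off-diagonal position can be mapped to any prescribed off-diagonal target by such a permutation. A minor point to verify is that the unstabilized induced trace norm already suffices, which it does because the single-system witness $\ketbra{i}{i}$ certifies positivity on its own, so that no auxiliary system is required.
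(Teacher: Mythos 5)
Your proposal is correct and follows essentially the same route as the paper: the same incoherent witness $\ketbra{\tilde i}{\tilde i}$ extracted from $\Theta\Delta\neq\Delta\Theta\Delta$, and the same idea of using a free permutation (the paper's $U_{\text{SWAP}}$) to relocate the created coherence onto a pair of levels with distinct phases. The only difference is cosmetic: the paper additionally composes with a block-dephasing channel (Kraus operators $K=\ketbra{\tilde m}{\tilde m}+\ketbra{\tilde n}{\tilde n}$, $L_j=\ketbra{j}{j}$) so that $\left\|\left(\id-\Lambda_{\vec\phi}\right)\tilde\Psi\Theta(\tilde\rho)\right\|_1$ can be evaluated in closed form, whereas you simply invoke that a nonzero operator has strictly positive trace norm, which suffices.
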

	This ensures again that the ability to create coherence contributes in some interferometric tasks, namely in the ones described by our games with $\lambda=\mu=\frac{1}{2}$.

	\section{Conclusions}\label{sec:conc}
	In this work, we introduced families of dynamical resource measures that allowed us to establish a connection between an operation's ability to detect or create coherence and the performance of interferometric experiments. This shows that the abstract resource theories defined in Ref.~\cite{Theurer2019} have  an operational meaning. Our results concerning the ability of operations to create coherence should be compared to the static results of Ref.~\cite{Biswas2017}, where it was shown  that every visibility functional that satisfies some meaningful properties can be used to define a coherence measure that is strongly monotonic~\cite{Baumgratz2014} under strictly incoherent operations~\cite{Winter2016,Lami2019,Lami2020b}. One obtains these static coherence measures from the visibilities via optimizations over all measurements. Similar to our case, such optimizations are necessary to ensure that the coherence is used ideally and not only present. 
	Here, we took a more direct approach that did not rely on visibilities. This allowed us to define measures that are not restricted to strictly incoherent operations, but hold for the larger class of maximally incoherent operations instead. An interesting question is whether one could also define dynamical resource measures based on visibility, and whether this would lead back to strictly incoherent operations. In this context, see also Ref.~\cite{Paul2017}, which connected coherence as measured by a normalized version~\cite{Bera2015} of the $l_1$-norm of coherence~\cite{Baumgratz2014} with a visibility in multislit interference. 
	
	One should also compare our results to the role that measures of robustness~\cite{Vidal1999} (weight~\cite{Lewenstein1998}) play in discrimination (exclusion) games~\cite{Piani2016,Napoli2016,Takagi2019a,Skrzypczyk2019a,Skrzypczyk2019b,Uola2019,Mori2020,Ducuara2020a,Uola2020,Ducuara2020b}: whilst our families of measures quantify the advantage that a resource grants in a specific binary discrimination or equivalently exclusion game, measures of robustness (weight) describe the achievable advantage that resourceful objects (e.g., states, measurements, channels, or combinations thereof) give in an ideal discrimination (exclusion) game over free objects. Technically, this is another way to ensure that the resources are used appropriately.
	
	Whilst we provided a method to compute the pre-processed improvements and showed that they are not faithful for $\lambda\ne\mu$, it is an open question whether there exist analogous results for the post-processed cases. Moreover, one could combine the two resource theories we applied and consider a fixed operation used for the creation of coherence and one for its detection. Potentially, the resulting success probabilities could then be expressed as products of two measures. 
	Another interesting idea is to remove, e.g., the optimal pre-processing in our measure, and require that Alice applies an optimal $\Lambda_{\vec{\phi}}$ instead. Whilst we investigated this approach too, we were not able to prove monotonicity (for input dimensions of $\Theta$ greater than three). 
	
	In our investigations, we considered the application of fixed phases that are known to Bob. Whilst this is certainly a relevant scenario, e.g., if one checks whether a known sample is present or not, one often uses interferometers to gather information about unknown phases. It is then an open question whether one could use, e.g., Fisher information, to construct measures in these scenarios~\cite{Biswas2017,Feng2017,Tan2018}. In conclusion, the investigation of the technological relevance of dynamical coherence is far from being completed, but our proof of principle shows that these theories might help to better understand and thus exploit quantum properties.

	\begin{acknowledgments}
		We thank Mirko Rossini, Dario Egloff, and Ludovico Lami for discussions and feedback. 
	\end{acknowledgments}
	
	\appendix
	
	\section{Technical results}\label{ap:index}
	In this Appendix, we collect some technical results that are needed for the proofs of the results in the main text presented in App.~\ref{ap:proofs}.
	In both Appendices, we often represent the action of a quantum channel $\,\Theta$ on matrix elements $\ketbra{i}{j}$ as 
	\begin{equation}\label{eq:ind_ch}
		\Theta(\ketbra{i}{j})=\sum_{k,l}\Theta^{i,j}_{k,l}\ketbra{k}{l}.
	\end{equation} 
	With this notation, we have~\cite[Prop.~15]{Theurer2019}
	\begin{equation}\label{eq:indexCreIn}
		\Theta^{i,i}_{k,l}\propto\delta_{k,l}\,\forall i,k,l,
	\end{equation}
	for all creation incoherent operations $\Theta$, whilst for an operation that cannot detect coherence
	\begin{equation}\label{eq:indexDetIn}
		\Theta^{i,j}_{k,k}\propto\delta_{i,j}\,\forall i,j,k
	\end{equation}
	is satisfied.
	Moreover, the following Proposition holds.
	
	\begin{prop}\label{prop:ind_pr} If $\Theta$ is a quantum channel, the corresponding coefficients $\Theta^{i,j}_{k,l}$ fulfill the following properties:
		\begin{itemize}
			\item[1.] $\Theta^{n,n}_{m,m}\geq 0\quad \forall m,n$,
			\item[2.] $\Theta^{i,j}_{k,l}=\Theta^{j,i*}_{l,k}\quad\forall i,j,k,l$,
			\item[3.] $\sum_{m}\Theta_{m,m}^{i,j}=\delta_{i,j}\quad \forall i, j,m$.
		\end{itemize}
	\end{prop}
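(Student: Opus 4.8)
The plan is to read off each property directly from the defining features of a quantum channel --- that $\Theta$ is completely positive and trace preserving --- by evaluating its action on a conveniently chosen basis operator and extracting the relevant coefficients from the expansion in Eq.~\eqref{eq:ind_ch}. All three arguments are short index computations, so the only thing that really needs care is the ordering of indices in the convention $\Theta(\ketbra{i}{j})=\sum_{k,l}\Theta^{i,j}_{k,l}\ketbra{k}{l}$.

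I would begin with property~3, which follows from trace preservation. On the one hand $\Tr\!\big(\Theta(\ketbra{i}{j})\big)=\Tr(\ketbra{i}{j})=\delta_{i,j}$; on the other hand, taking the trace of the expansion in Eq.~\eqref{eq:ind_ch} and using $\Tr\ketbra{k}{l}=\delta_{k,l}$ gives $\sum_{m}\Theta^{i,j}_{m,m}$. Equating the two expressions yields the claim. For property~2 I would use that $\Theta$, being completely positive, is in particular positive, and that every positive map preserves Hermiticity, i.e.\ $\Theta(X^{\dagger})=\Theta(X)^{\dagger}$. Applying this to $X=\ketbra{i}{j}$, so that $X^{\dagger}=\ketbra{j}{i}$, I would compare the expansion of $\Theta(\ketbra{j}{i})$ with the adjoint of the expansion of $\Theta(\ketbra{i}{j})$. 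The latter equals $\sum_{k,l}\Theta^{i,j*}_{k,l}\ketbra{l}{k}$; relabeling the summation indices $k\leftrightarrow l$ and matching coefficients with $\Theta(\ketbra{j}{i})=\sum_{k,l}\Theta^{j,i}_{k,l}\ketbra{k}{l}$ gives $\Theta^{j,i}_{k,l}=\Theta^{i,j*}_{l,k}$, which, since $i$ and $j$ are arbitrary, is the asserted identity.

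Finally, for property~1 I would invoke positivity in its most basic form. Since $\ketbra{n}{n}$ is a positive operator and $\Theta$ is positive, $\Theta(\ketbra{n}{n})$ is positive as well, so each of its diagonal matrix elements is non-negative: $\langle m|\,\Theta(\ketbra{n}{n})\,|m\rangle\ge 0$. Evaluating this expectation value with the help of Eq.~\eqref{eq:ind_ch} and $\langle m|k\rangle\langle l|m\rangle=\delta_{m,k}\delta_{l,m}$ isolates precisely the coefficient $\Theta^{n,n}_{m,m}$, which therefore is non-negative.

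I do not expect a genuine obstacle here: the content is essentially bookkeeping, and the single point requiring attention is the consistent relabeling of dummy indices in the Hermiticity argument for property~2. It is worth noting that the three statements do not need the full strength of complete positivity --- positivity together with trace preservation already suffices --- so the same coefficient identities hold more generally for any positive trace-preserving map.
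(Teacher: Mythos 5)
Your proof is correct, and it takes a mildly but genuinely different route from the paper's. For properties 1 and 2 the paper works at the level of the Choi matrix $J_\Theta=\sum_{i,j,k,l}\Theta^{i,j}_{k,l}\ketbra{ki}{lj}$, extracting positivity of the diagonal coefficients from $\bra{mn}J_\Theta\ket{mn}\ge0$ and the conjugation symmetry from $J_\Theta^\dagger=J_\Theta$; you instead apply positivity and Hermiticity preservation of $\Theta$ directly to the basis operators $\ketbra{n}{n}$ and $\ketbra{i}{j}$, which is equivalent but avoids introducing the Choi state and, as you correctly observe, shows that ordinary positivity plus trace preservation already suffices---the Choi route by contrast genuinely invokes complete positivity for property 1. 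For property 3 your argument is noticeably shorter: you apply $\Tr\Theta(\ketbra{i}{j})=\delta_{i,j}$ directly, which tacitly uses that trace preservation on density matrices extends by linearity to all operators. The paper makes exactly this linearity step explicit, first feeding in $\rho=\ketbra{n}{n}$ and then the family $\rho=\frac{1}{2}(\ketbra{\tilde i}{\tilde i}+\ketbra{\tilde j}{\tilde j}+e^{i\xi}\ketbra{\tilde i}{\tilde j}+e^{-i\xi}\ketbra{\tilde j}{\tilde i})$ and varying the phase $\xi$ to isolate $\sum_m\Theta^{\tilde i,\tilde j}_{m,m}=0$. Your version buys brevity and a slightly more general statement; the paper's buys a self-contained verification that does not presuppose the reader accepts the linear extension of trace preservation off the state space. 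There is no gap in either.
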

	
	\begin{proof}
		The first claim follows from complete positivity. Let us write the Choi state corresponding to the quantum operation $\Theta$ as
		\begin{equation*}
			J_\Theta=\Theta\otimes\id\sum_{i,j}\ketbra{ii}{jj}=\sum_{i,j,k,l}\Theta^{i,j}_{k,l}\ketbra{ki}{lj}.
		\end{equation*}
		Complete positivity of $\Theta$ is then equivalent to $\bra{v}J_\Theta\ket{v}\geq 0 \quad \forall\ket{v}$. Choosing $\ket{v}=\ket{nm}$, we hence find 
		\begin{equation*}
			\bra{mn}J_\Theta\ket{mn}=\Theta^{n,n}_{m,m}\geq 0 \quad \forall m,n
		\end{equation*}
		as a necessary but not sufficient condition for complete positivity.
		
		The second claim follows from the fact that a quantum operation preserves hermiticity. The hermitian conjugate of our Choi matrix is given by
		\begin{equation*}
			J_\Theta^\dagger=\sum_{i,j,k,l}\Theta^{i,j*}_{k,l}\ketbra{lj}{ki}=\sum_{i,j,k,l}\Theta^{j,i*}_{l,k}\ketbra{ki}{lj}.
		\end{equation*}
		Now, since $J_\Theta^\dagger=J_\Theta$ must hold, we have $\Theta^{j,i*}_{l,k}=\Theta^{i,j}_{k,l}\quad\forall i,j,k,l$.
		
		The third claim follows from trace preservation. Let us write a state as $\rho=\sum_{i,j}\rho_{i,j}\ketbra{i}{j}$ and the action of the channel $\Theta$ on it as
		\begin{equation*}
			\Theta(\rho)=\sum_{i,j,k,l}\rho_{i,j}\Theta^{i,j}_{k,l}\ketbra{k}{l}.
		\end{equation*}
		Choosing $\rho$ such that $\rho_{n,n}=1$ for a fixed $n$ and $\rho_{i,j}=0$ for all other coefficients, we find the following necessary condition for trace preservation
		\begin{equation*}
			1=\Tr\left(\Theta(\rho)\right)=\sum_{m,i,j}\rho_{i,j}\Theta^{i,j}_{m,m}=\sum_m\Theta^{n,n}_{m,m} \quad \forall n.
		\end{equation*}
		
		The case for $i\neq j$ can be proven by changing the choice of $\rho$. Let us take a state such that $\rho_{\tilde{i},\tilde{i}}=\rho_{\tilde{j},\tilde{j}}=\frac{1}{2}$ and $\rho_{\tilde{i},\tilde{j}}=\frac{1}{2}e^{i\xi}=\rho_{\tilde{j},\tilde{i}}^*$ for a pair of indices $\tilde{i}\ne\tilde{j}$. This automatically implies that all other coefficients are zero. Thus, using again trace preservation and the previous result, we find
		\small
		\begin{equation*}
			\sum_m\frac{1}{2}\!\left(\Theta_{m,m}^{\tilde{i},\tilde{i}}+\Theta_{m,m}^{\tilde{j},\tilde{j}}\right)+\sum_m\frac{1}{2}\!\left(\Theta_{m,m}^{\tilde{i},\tilde{j}}e^{i\xi}+\Theta_{m,m}^{\tilde{j},\tilde{i}}e^{-i\xi}\right)=1\  \forall\xi
		\end{equation*}
		\normalsize
		\begin{equation*}
			\iff \Re\left(e^{i\xi}\sum_m\Theta_{m,m}^{\tilde{i},\tilde{j}}\right)=0 \quad \forall\xi
		\end{equation*}
		\begin{equation*}
			\iff \sum_m\Theta_{m,m}^{\tilde{i},\tilde{j}}=0.
		\end{equation*}
	\end{proof}
	
	The above results now allow us to prove the following Lemma, where we use the notation introduced below Eq.~\eqref{eq:super}: when we need to specify that a quantum channel $\Theta$ has an input system $A$ and an output system $B$, we will write  $\Theta^{B\leftarrow A}$ and simply $\Theta^A$ when input and output system are identical.
	\begin{lem}\label{lem:tensor}
		An operator 
		\begin{equation}
			\alpha=\left(\id^A\otimes\Delta^B\right) \Theta^{AB\leftarrow C}_{\mathcal{DI}} \left(\lambda-\mu\Lambda_{\vec{\phi}}\right)(\rho),
		\end{equation} where $\Theta^{AB\leftarrow C}_{\mathcal{DI}}$ is a detection incoherent quantum channel, can be decomposed as 
		\begin{equation}
			\alpha=\sum_bp_b\left(\tilde{\Theta}^{A\leftarrow C}_{\mathcal{DI},b} \left(\lambda-\mu\Lambda_{\vec{\phi}}\right)(\rho)\right)\otimes\ketbra{b}{b}_B,
		\end{equation} where $\tilde{\Theta}^{A\leftarrow C}_{\mathcal{DI},b}$ are detection incoherent quantum channels and $p_b$ are probabilities.
	\end{lem}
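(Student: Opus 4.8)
My plan is to diagonalize the $B$-register and then recognize the result as a detection-incoherent instrument acting $C\to A$. Since $\id^A\otimes\Delta^B$ projects $B$ onto its incoherent basis, for any $X\in\mathcal{L}(AB)$ we have $(\id^A\otimes\Delta^B)(X)=\sum_b\big[(\id^A\otimes\bra{b}_B)\,X\,(\id^A\otimes\ket{b}_B)\big]\otimes\ketbra{b}{b}_B$. Applying this to $X=\Theta^{AB\leftarrow C}_{\mathcal{DI}}(\lambda-\mu\Lambda_{\vec{\phi}})(\rho)$ and abbreviating $\mathcal{E}_b(\cdot):=(\id^A\otimes\bra{b}_B)\,\Theta^{AB\leftarrow C}_{\mathcal{DI}}(\cdot)\,(\id^A\otimes\ket{b}_B)$ gives at once $\alpha=\sum_b\mathcal{E}_b\big((\lambda-\mu\Lambda_{\vec{\phi}})(\rho)\big)\otimes\ketbra{b}{b}_B$, so everything reduces to showing that each $\mathcal{E}_b$ is, up to a probability, a detection-incoherent channel. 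I would first record that $\mathcal{E}_b$ is completely positive, being $\Theta_{\mathcal{DI}}$ followed by the CP projection onto $\ket{b}_B$, and that the $\mathcal{E}_b$ form an instrument because $\sum_b\mathcal{E}_b=\Tr_B\circ\,\Theta_{\mathcal{DI}}$ is trace preserving.

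The detection-incoherence of each $\mathcal{E}_b$ is where I would use that the incoherent basis of $AB$ factorizes as $\ket{a}\ket{b}$ together with the index characterization~\eqref{eq:indexDetIn}. In the notation of Eq.~\eqref{eq:ind_ch} the coefficients of $\mathcal{E}_b$ are $(\mathcal{E}_b)^{i,j}_{a,a'}=(\Theta_{\mathcal{DI}})^{i,j}_{(a,b),(a',b)}$, so on a diagonal output index they read $(\mathcal{E}_b)^{i,j}_{a,a}=(\Theta_{\mathcal{DI}})^{i,j}_{(a,b),(a,b)}\propto\delta_{i,j}$ by Eq.~\eqref{eq:indexDetIn}; this is exactly the index condition~\eqref{eq:indexDetIn} for a $C\to A$ map, so no $\mathcal{E}_b$ can detect coherence. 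Equivalently, writing $\Theta_{\mathcal{DI}}(\cdot)=\sum_r K_r(\cdot)K_r^\dagger$ and $K_{r,b}:=(\id^A\otimes\bra{b}_B)K_r$, the effects $Q_b:=\sum_r K_{r,b}^\dagger K_{r,b}$ are diagonal in the incoherent basis of $C$, i.e. the outcome $b$ is read out by an incoherent measurement. This is the concrete form of detection-incoherence that I expect to reuse for the normalization.

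I would then set $p_b:=\Tr\big(\mathcal{E}_b(\rho)\big)=\Tr(Q_b\rho)$ and $\tilde{\Theta}^{A\leftarrow C}_{\mathcal{DI},b}:=p_b^{-1}\mathcal{E}_b$, so that $\sum_b p_b=\Tr\Theta_{\mathcal{DI}}(\rho)=1$ and the claimed identity holds term by term. The structural fact that makes this normalization consistent is that $\Lambda_{\vec{\phi}}$ leaves populations untouched, $\Delta\Lambda_{\vec{\phi}}=\Delta$; together with $Q_b$ being diagonal this yields $\Tr(Q_b\Lambda_{\vec{\phi}}\rho)=\Tr(Q_b\rho)=p_b$, hence $\Tr\mathcal{E}_b\big((\lambda-\mu\Lambda_{\vec{\phi}})(\rho)\big)=(\lambda-\mu)\,p_b$, exactly the weight factored out in front of $\tilde{\Theta}_{\mathcal{DI},b}$.

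The main obstacle, and the step I would treat most carefully, is confirming that the $\{p_b,\tilde{\Theta}_{\mathcal{DI},b}\}$ really constitute a detection-incoherent decomposition rather than a mere collection of completely positive maps: complete positivity and the index condition~\eqref{eq:indexDetIn} are inherited by $\tilde{\Theta}_{\mathcal{DI},b}$ from $\mathcal{E}_b$ and survive the positive rescaling, but the outcome probabilities $\langle i|Q_b|i\rangle$ need not be uniform in $i$, so the normalization of $\mathcal{E}_b$ is input dependent and the $\tilde{\Theta}_{\mathcal{DI},b}$ must be handled within the subselection formalism introduced below Eq.~\eqref{eq:super} — precisely as the paper treats instruments $\Gamma_x$ with probabilities $p_x=\Tr(\Gamma_x(\rho))$. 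I expect the population-preservation $\Delta\Lambda_{\vec{\phi}}=\Delta$ and the diagonality of $Q_b$ to be exactly the two ingredients that make this bookkeeping go through when the maps are evaluated on $(\lambda-\mu\Lambda_{\vec{\phi}})(\rho)$, and I would present that verification as the technical heart of the lemma.
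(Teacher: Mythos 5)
Your proposal is correct and follows essentially the same route as the paper: project the $B$ register onto its incoherent basis, identify each block $\bra{b}_B\Theta_{\mathcal{DI}}(\cdot)\ket{b}_B$ as $p_b$ times a completely positive map, verify detection incoherence via the index condition $\Theta^{i,j}_{(a,b),(a,b)}\propto\delta_{i,j}$, and obtain the weight $(\lambda-\mu)p_b$ from the fact that $\Lambda_{\vec{\phi}}$ preserves populations (the paper does this by an explicit index computation of $\Tr(\alpha_{|b})$, you via the diagonality of the effects $Q_b$, which is the same content). Your closing caveat about the normalization being input dependent is well taken — the paper likewise only verifies trace preservation of $\tilde{\Theta}_{\mathcal{DI},b}$ on the specific $\rho$ and restricts the sum to $b$ with $p_b\neq 0$, so you are at least as careful as the original.
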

	\begin{proof}
		Let us write down $\alpha$ using a Kraus representation of $\left(\id^A\otimes\Delta^B\right) \Theta^{AB\leftarrow C}_{\mathcal{DI}}$, i.e.,
		\begin{equation*}
			\alpha=\sum_{n,b}(\mathbbm{1}_A\otimes\ketbra{b}{b}_B) K_n \left(\lambda-\mu \Lambda_{\vec{\phi}}\right)(\rho) K_n^\dagger (\mathbbm{1}_A\otimes\ketbra{b}{b}_B).
		\end{equation*}
		Now, we define the operator
		\begin{equation*}
			\alpha_{|b}:= \bra{b}_B \alpha \ket{b}_B=\sum_n\bra{b}_B K_n\left(\lambda-\mu\Lambda_{\vec{\phi}}\right)(\rho) K_n^\dagger\ket{b}_B
		\end{equation*}
		and evaluate its trace using the index representation, Eq.~\eqref{eq:indexDetIn}, and Prop.~\ref{prop:ind_pr},
		\begin{align*}
			\Tr \left(\alpha_{|b}\right)&=\sum_a \bra{a,b}_{AB}\Theta^{AB\leftarrow C}_{\mathcal{DI}} \left(\lambda-\mu\Lambda_{\vec{\phi}}\right)(\rho) \ket{a,b}_{AB} \\
			&=\sum_a \bra{a,b}_{AB} \Bigg(\sum_{i,j,k,l,m,n}\Theta^{i,j}_{kl,mn} \\
			& \pushright{\left(\lambda-\mu e^{i(\phi_i-\phi_j)}\right)\rho_{i,j} \ketbra{kl}{mn}_{AB}\Bigg) \ket{a,b}_{AB}}\\
			&=(\lambda-\mu )\sum_a \sum_{i}\Theta^{i,i}_{ab,ab}\rho_{i,i} \\
			&=(\lambda-\mu )\Tr\left(\mathbbm{P}_b\Theta^{AB\leftarrow C}_{\mathcal{DI}}(\rho)\right)\\
			&=(\lambda-\mu)p_b,
		\end{align*} 
		where $p_b$ is the probability of collapsing the state $\Theta^{AB\leftarrow C}_{\mathcal{DI}}(\rho)$ to the subspace onto which the operator 
		\begin{equation*}
			\mathbbm{P}_b=\sum_a\ketbra{a,b}{a,b}_{AB}
		\end{equation*} projects. 
		At this point, we define $\tilde{K}_{n,b}:=\frac{\bra{b}_B K_n}{\sqrt{p_b}}\ \forall b$ such that $p_b\neq 0$. This allows us to write
		\begin{align*}
			\alpha &=\sum_b \alpha_{|b}\otimes\ketbra{b}{b}_B \\
			&=\sum_{b\ :\ p_b\neq 0} p_b\sum_n\tilde{K}_{n,b}\left(\lambda-\mu\Lambda_{\vec{\phi}}\right)(\rho)\tilde{K}_{n,b}^\dagger\otimes\ketbra{b}{b}_B.
		\end{align*}
		For all $b$ with $p_b\ne0$, we now interpret $\{\tilde{K}_{n,b}\}_{n}$ a as set of Kraus operators associated to an operation $\tilde{\Theta}^{A\leftarrow C}_{\mathcal{DI},b}$, which is thus hermiticity preserving and completely positive. Trace preservation of $\tilde{\Theta}^{A\leftarrow C}_{\mathcal{DI},b}$ follows from
		\begin{align*}
			&\Tr  \left(\tilde{\Theta}^{A\leftarrow C}_{\mathcal{DI},b}(\rho)\right)\nonumber\\
			&=\frac{1}{p_b}\Tr\Bigg(\left(\mathbbm{1}_A\otimes\bra{b}_B\right) \sum_n K_n\rho K_n^\dagger\left(\mathbbm{1}_A\otimes\ket{b}_B\right)\Bigg)\\
			&=\frac{1}{p_b}\sum_a\bra{a}_A\otimes\bra{b}_B\sum_n K_n\rho K_n^\dagger\ket{a}_A\otimes\ket{b}_B\\
			&=\frac{1}{p_b}\Tr\left(\mathbbm{P}_b\Theta^{AB\leftarrow C}_{\mathcal{DI}}(\rho)\right)=\frac{p_b}{p_b}=1.
		\end{align*} 
		It remains to prove that that $\Theta^{A\leftarrow C}_{\mathcal{DI},b}$ is detection incoherent. From
		\begin{align*}
			\tilde{\Theta}^{A\leftarrow C}_{\mathcal{DI},b}(\rho)&=\frac{\bra{b}_B}{\sqrt{p_b}}\Theta^{AB\leftarrow C}_{\mathcal{DI}}(\rho)\frac{\ket{b}_B}{\sqrt{p_b}} \\
			&=\frac{1}{p_b}\bra{b}_B\left(\sum_{i,j,k,l,m,n}\Theta^{i,j}_{kl,mn}\rho_{i,j}\ketbra{kl}{mn}\right)\ket{b}_B\\
			&=\sum_{i,j,k,m}\frac{\Theta^{i,j}_{kb,mb}\rho_{i,j}}{p_b}\ketbra{k}{m}_A \\
			&=\sum_{i,j,k,l}\tilde{\Theta}^{i,j}_{k,l}(b)\rho_{i,j}\ketbra{k}{l}_A
		\end{align*} 
		follows that $\tilde{\Theta}^{i,j}_{k,k}(b)=\frac{\Theta^{i,j}_{kb,kb}}{p_b}\propto\delta_{i,j}$. Due to Ref.~\cite[Prop.~15]{Theurer2019} the $\tilde{\Theta}^{A\leftarrow C}_{\mathcal{DI},b}$ are thus detection incoherent because $\Theta^{AB\leftarrow C}_{\mathcal{DI}}$ was.
	\end{proof}
	In addition, we will need the following lemmas to prove the results in the main text.
	\begin{lem}\label{lem:detect_two}
		Consider an operation $\Theta^{B\leftarrow A}$ with $\dim(B)=2$. If there exists an $m$ such that $\Theta^{i,j}_{m,m}=0 \ \forall i,j$, then $\Theta^{i,j}_{k,l}=0 \ \forall k\neq l, \forall i,j $.
	\end{lem}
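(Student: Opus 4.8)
The plan is to pass to the Choi matrix $J_\Theta$ and to exploit positive semidefiniteness through the elementary fact that a positive semidefinite matrix with a vanishing diagonal entry has the whole corresponding row and column equal to zero (seen, e.g., by inspecting the $2\times2$ principal submatrices, whose determinants $-|J_{pq}|^2$ must be non-negative). As in the proof of Prop.~\ref{prop:ind_pr}, I would write $J_\Theta=\sum_{i,j,k,l}\Theta^{i,j}_{k,l}\ketbra{ki}{lj}$, so that the matrix elements in the product basis are $\bra{ki}J_\Theta\ket{lj}=\Theta^{i,j}_{k,l}$; in particular the diagonal entries are $\bra{ki}J_\Theta\ket{ki}=\Theta^{i,i}_{k,k}$.

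The first substantive step is to translate the hypothesis. The assumption $\Theta^{i,j}_{m,m}=0$ for all $i,j$ includes the case $i=j$, which says exactly that every diagonal entry of $J_\Theta$ labelled by the output index $m$ vanishes, i.e.\ $\bra{mi}J_\Theta\ket{mi}=\Theta^{i,i}_{m,m}=0$ for all $i$. Since $\Theta$ is a channel, $J_\Theta\ge 0$, so the zero-diagonal fact quoted above forces the corresponding rows (and columns) to vanish entirely: $\bra{mi}J_\Theta\ket{lj}=\Theta^{i,j}_{m,l}=0$ for all $i,j,l$.

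The last step uses $\dim(B)=2$ to convert this into the claimed statement about all off-diagonal output coefficients. Because $k,l\in\{0,1\}$, the only off-diagonal output index pairs are $(0,1)$ and $(1,0)$, and whichever value $m$ takes, one of these two pairs has $m$ as its first index and is therefore already killed by the previous step. The remaining pair is then handled by hermiticity, property~2 of Prop.~\ref{prop:ind_pr}, namely $\Theta^{i,j}_{k,l}=\Theta^{j,i*}_{l,k}$, which relates the two off-diagonal coefficients so that the vanishing of one yields the vanishing of the other, giving $\Theta^{i,j}_{k,l}=0$ for all $k\ne l$ and all $i,j$. The computation is routine once the zero-diagonal lemma is in place, so there is no serious obstacle; the one point requiring care is that the conclusion genuinely needs $\dim(B)=2$, since in higher output dimension the positive-semidefiniteness argument annihilates only those off-diagonal coefficients with an index equal to $m$, leaving pairs $(k,l)$ with $k,l\ne m$ untouched. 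Recognizing that the hypothesis constrains precisely the $m$-labelled rows of $J_\Theta$, and that dimension two is exactly what makes those rows cover every off-diagonal output coefficient, is thus the crux of the argument.
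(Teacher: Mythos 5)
Your proof is correct, but it takes a different route from the paper's. The paper works at the level of the output states: it writes the $2\times2$ matrix $\Theta(\rho)$, observes that its $(0,0)$ entry vanishes by hypothesis, and imposes $\det(\Theta(\rho))=-\bigl|\sum_{i,j}\Theta^{i,j}_{1,0}\rho_{i,j}\bigr|^2\ge0$ for two families of input states ($\rho=\ketbra{i}{i}$ and a phase-swept coherent superposition of $\ket{i},\ket{j}$) to extract $\Theta^{i,i}_{1,0}=0$ and then $\Theta^{i,j}_{1,0}=0$ separately, finishing with hermiticity. You instead apply the same underlying fact --- a positive semidefinite matrix with a vanishing diagonal entry has the entire corresponding row and column equal to zero --- directly to the Choi matrix $J_\Theta\ge0$, whose entries are $\bra{ki}J_\Theta\ket{lj}=\Theta^{i,j}_{k,l}$. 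This kills all coefficients $\Theta^{i,j}_{m,l}$ in one stroke, and $\dim(B)=2$ plus hermiticity (Prop.~\ref{prop:ind_pr}.2) then covers both off-diagonal index pairs, exactly as you say. Your version is the more economical one: it dispenses with the choice of input states and the $\forall\xi$ averaging, it makes transparent that only the diagonal part $\Theta^{i,i}_{m,m}=0\ \forall i$ of the hypothesis is actually needed, and it isolates cleanly where $\dim(B)=2$ enters (namely, that the $m$-labelled rows of $J_\Theta$ together with hermiticity exhaust all off-diagonal output pairs). The paper's state-based computation buys nothing extra here beyond staying closer to the notation used in the surrounding lemmas; both arguments are equally rigorous.
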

	
	\begin{proof}
		Let us assume that $m=0$. If this is not the case we just need to relabel our Hilbert space. Applying $\Theta^{B\leftarrow A}$ to a state $\rho$, we get
		\begin{align*}
			\Theta(\rho)&=\sum_{i,j}\Theta^{i,j}_{0,0}\rho_{i,j}\ketbra{0}{0}+\sum_{i,j}\Theta^{i,j}_{1,1}\rho_{i,j}\ketbra{1}{1} \\
			&\pushright{ +\sum_{i,j}\sum_{m\neq n}\Theta^{i,j}_{m,n}\rho_{i,j}\ketbra{m}{n}} \\
			&=\sum_{i,j}\Theta^{i,j}_{1,1}\rho_{i,j}\ketbra{1}{1}+\sum_{i,j}\Theta^{i,j}_{0,1}\rho_{i,j}\ketbra{0}{1}\\
			&\pushright{+\sum_{i,j}\Theta^{i,j}_{1,0}\rho_{i,j}\ketbra{1}{0}.}
		\end{align*} 
		At this point, we recall that $\Theta(\rho)$ must be positive semidefinite and that a necessary condition for positive semidefiniteness of an operator is that its determinant is non-negative. Moreover, this must hold for all states $\rho$. Using Proposition~\ref{prop:ind_pr}.2, we get
		\begin{equation*}
			\det(\Theta(\rho))=-\left|\sum_{i,j}\Theta^{i,j}_{1,0}\rho_{i,j}\right|^2.
		\end{equation*} 
		We now consider two particular choices of states for which the previous condition must hold. The first one is $\rho=\ketbra{i}{i}$. With this, we obtain
		\begin{equation*}
			\det(\Theta(\rho))=-\left|\Theta^{i,i}_{1,0}\right|^2 \geq 0 \ \forall i \quad\iff\quad\Theta^{i,i}_{1,0}=0 \ \forall i.
		\end{equation*}
		The second choice is 
		\begin{equation*}
			\rho=\frac{1}{2}\left(\ketbra{i}{i}+\ketbra{j}{j}+\ketbra{i}{j}e^{i\xi}+\ketbra{j}{i}e^{-i\xi}\right).
		\end{equation*} We get
		\begin{align*}
			\det(\Theta(\rho))&=-\frac{1}{2}\left|\Theta^{i,i}_{1,0}+\Theta^{j,j}_{1,0}+\Theta^{i,j}_{1,0}e^{i\xi}+\Theta^{j,i}_{1,0}e^{-i\xi}\right|^2 \\
			&=-\frac{1}{2}\left|\Theta^{i,j}_{1,0}e^{i\xi}+\Theta^{j,i}_{1,0}e^{-i\xi}\right|^2 \geq 0 \ \forall i\neq j,\ \forall \xi \\
			&\iff\quad\Theta^{i,j}_{1,0}=0 \ \forall i\neq j.
		\end{align*}
		Hence, $\Theta^{i,j}_{1,0}=0 \ \forall i,j$. Using again Proposition~\ref{prop:ind_pr}.2 finishes the proof. 
	\end{proof}
	
	\begin{lem}\label{lem:free_exp}
		For all states $\ket{\psi}_{AZ}$ there exist a detection incoherent operation $\Phi^{AZ\leftarrow A}$ and a state $\ket{\varphi}_A$ such that 
		\begin{align}
			\Phi^{AZ\leftarrow A} &\left(\lambda-\mu \Lambda_{\vec{\phi}}^A\right)([\ketbra{\varphi}{\varphi}_{A}) \nonumber \\
			&=\left(\left(\lambda-\mu \Lambda_{\vec{\phi}}^A\right)\otimes\id^{Z}\right)(\ketbra{\psi}{\psi}_{AZ}) . \label{eq:isometry}
		\end{align} 
	\end{lem}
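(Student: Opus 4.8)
The plan is to exploit that $\Lambda_{\vec{\phi}}$ is the unitary channel associated with $U_{\vec{\phi}}:=\sum_i e^{i\phi_i}\ketbra{i}{i}$, so that both sides of Eq.~\eqref{eq:isometry} are the same affine combination of two pure-state projectors. Writing $\ket{\psi_{\vec{\phi}}}:=(U_{\vec{\phi}}^A\otimes\mathbbm{1}^Z)\ket{\psi}$, the right-hand side equals $\lambda\ketbra{\psi}{\psi}-\mu\ketbra{\psi_{\vec{\phi}}}{\psi_{\vec{\phi}}}$, while by linearity of $\Phi$ the left-hand side equals $\lambda\,\Phi(\ketbra{\varphi}{\varphi})-\mu\,\Phi(U_{\vec{\phi}}\ketbra{\varphi}{\varphi}U_{\vec{\phi}}^\dagger)$. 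It therefore suffices to produce a detection incoherent channel $\Phi^{AZ\leftarrow A}$ and a state $\ket{\varphi}_A$ obeying the two pure-state conditions $\Phi(\ketbra{\varphi}{\varphi})=\ketbra{\psi}{\psi}$ and $\Phi(U_{\vec{\phi}}\ketbra{\varphi}{\varphi}U_{\vec{\phi}}^\dagger)=\ketbra{\psi_{\vec{\phi}}}{\psi_{\vec{\phi}}}$ simultaneously.

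First I would expand the target in the incoherent basis of $A$ as $\ket{\psi}=\sum_i c_i\ket{i}_A\ket{e_i}_Z$, with $c_i\ge 0$ and the $\ket{e_i}_Z$ normalized (phases absorbed into $c_i$, and $\ket{e_i}$ chosen to be an arbitrary normalized vector whenever $c_i=0$). I then set $\ket{\varphi}_A:=\sum_i c_i\ket{i}_A$ and define $V\colon A\to AZ$ on the incoherent basis by $V\ket{i}:=\ket{i}_A\ket{e_i}_Z$, putting $\Phi(\cdot):=V(\cdot)V^\dagger$. The images $\ket{f_i}:=V\ket{i}=\ket{i}_A\ket{e_i}_Z$ carry pairwise distinct $A$-labels, so $\langle f_i|f_j\rangle=\delta_{i,j}$; hence $V$ is an isometry, $V^\dagger V=\mathbbm{1}$, and $\Phi$ is a genuine quantum channel.

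The decisive observation is the intertwining relation $V\,U_{\vec{\phi}}=(U_{\vec{\phi}}^A\otimes\mathbbm{1}^Z)\,V$, which holds on the incoherent basis because $U_{\vec{\phi}}$ multiplies $\ket{i}$ by $e^{i\phi_i}$ while $V$ preserves the $A$-label. From $V\ket{\varphi}=\sum_i c_i\ket{i}_A\ket{e_i}_Z=\ket{\psi}$ I obtain the first condition, and applying the intertwining relation gives $V\,U_{\vec{\phi}}\ket{\varphi}=(U_{\vec{\phi}}^A\otimes\mathbbm{1}^Z)\ket{\psi}=\ket{\psi_{\vec{\phi}}}$, which delivers the second; thus one and the same $\Phi$ reproduces both projectors.

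It remains to verify that $\Phi$ is detection incoherent. In the index representation $\Phi(\ketbra{i}{j})=\ket{f_i}\bra{f_j}$, so for an output incoherent basis vector $\ket{k}_A\ket{z}_Z$ one finds $\Phi^{i,j}_{kz,kz}=\delta_{k,i}\delta_{k,j}\,\langle z|e_i\rangle\overline{\langle z|e_j\rangle}$, which vanishes unless $i=j$ and is therefore $\propto\delta_{i,j}$, precisely the condition of Eq.~\eqref{eq:indexDetIn}; hence $\Phi\in\mathcal{DI}$. The main obstacle is conceptual rather than computational: a priori it is not clear that a single fixed free channel can match both the phased and the unphased state, and the argument turns entirely on choosing $V$ so that it intertwines the input phase $U_{\vec{\phi}}^A$ with the output phase $U_{\vec{\phi}}^A\otimes\mathbbm{1}^Z$. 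Reassuringly, the disjoint-support structure (distinct $A$-labels) that makes this intertwining compatible with isometricity is the very same structure that guarantees detection incoherence.
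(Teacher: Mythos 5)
Your construction is the same as the paper's: you pick $\ket{\varphi}_A$ with $|\varphi_i|^2=\sum_n|\psi_{i,n}|^2$ and build the isometry $V\ket{i}=\ket{i}_A\ket{e_i}_Z$ (the paper's $U=\sum_{m,n}v_{m,n}\ket{m,n}\bra{m}$ with $v_{m,n}=\psi_{m,n}/\varphi_m$), then verify detection incoherence via the index condition $\Phi^{i,j}_{kz,kz}\propto\delta_{i,j}$, which is equivalent to the paper's check that $\Delta U(\cdot)U^\dagger=\Delta U\Delta(\cdot)U^\dagger$. The proof is correct and essentially identical; the only cosmetic difference is that you package the final verification through the intertwining relation $VU_{\vec{\phi}}=(U_{\vec{\phi}}\otimes\mathbbm{1})V$ rather than expanding $\left(\lambda-\mu\Lambda_{\vec{\phi}}\right)(\ketbra{\varphi}{\varphi})$ in components as the paper does.
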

	
	\begin{proof}
		Given $\ket{\psi}_{AZ}:=\sum_{m,n}\psi_{m,n}\ket{m,n}_{AZ}$, we choose $\ket{\varphi}_A$ such that
		\begin{equation*}
			\ket{\varphi}_A:=\sum_m\varphi_m\ket{m}_A \quad\text{with}\quad|\varphi_m|^2=\sum_n|\psi_{m,n}|^2 \ \forall m.
		\end{equation*}
		Next, we define the operator 
		\begin{equation*}
			U:= \sum_{m,n}v_{m,n}\ket{m,n}_{AZ}\bra{m}_A,
		\end{equation*} where
		\begin{align*}
			&v_{m,n}=\frac{\psi_{m,n}}{\varphi_m} &\forall m \quad \text{s.t.} \quad \varphi_m\neq 0,\quad\forall n, \\
			&v_{m,n}= \frac{1}{\dim(Z)} &\forall m \quad \text{s.t.} \quad \varphi_m= 0,\quad \forall n,
		\end{align*} 
		and find that 
		\begin{align*}
			\sum_n|v_{m,n}|^2=\sum_n\frac{|\psi_{m,n}|^2}{|\varphi_m|^2}=1 \quad &\forall m \quad \text{s.t.} \quad \varphi_m\neq 0, \\
			\sum_n|v_{m,n}|^2=\sum_n\frac{1}{\dim(Z)}=1 \quad &\forall m \quad \text{s.t.} \quad \varphi_m= 0.
		\end{align*} 
		This ensures that 
		\begin{equation*}
			U^\dagger U=\sum_m\left(\sum_n|v_{m,n}|^2\right)\ketbra{m}{m}_A=\mathbbm{1}_A,
		\end{equation*} hence $U$ is an isometry and it defines a CPTP map. 
		
		Let us now verify that this map is detection incoherent. Given a state $\omega_A=\sum_{i,j}\omega_{i,j}\ketbra{i}{j}_A$, we have
		\begin{align*}
			\Delta &\left(U\Delta(\omega_A)U^\dagger\right)=\Delta\Bigg(\left(\sum_{m,n}v_{m,n}\ket{m,n}_{AZ}\bra{m}_A\right) \\
			&\pushright{\left(\sum_{i}\omega_{i,i}\ketbra{i}{i}_A\right)\left(\sum_{o,p}v_{o,p}\ket{o,p}_{AZ}\bra{o}_A\right)^\dagger\Bigg)} \\
			&=\Delta\left(\sum_{i,n,p}v_{i,n}\omega_{i,i}v_{i,p}^*\ketbra{in}{ip}_{AZ} \right) \\
			&=\sum_{i,n}|v_{i,n}|^2\omega_{i,i}\ketbra{in}{in}_{AZ},
		\end{align*}while
		\begin{align*}
			\Delta &\left(U\omega_A U^\dagger\right)=\Delta\Bigg(\left(\sum_{m,n}v_{m,n}\ket{m,n}_{AZ}\bra{m}_A\right) \\
			&\qquad\left(\sum_{i,j}\omega_{i,j}\ketbra{i}{j}_A\right)\left(\sum_{o,p}v_{o,p}\ket{o,p}_{AZ}\bra{o}_A\right)^\dagger\Bigg) \\
			&=\Delta\left(\sum_{i,j,n,p}v_{i,n}\omega_{i,j}v_{j,p}^*\ketbra{in}{jp}_{AZ} \right) \\
			&=\sum_{i,n}|v_{i,n}|^2\omega_{i,i}\ketbra{in}{in}_{AZ}=\Delta\left(U\Delta(\omega_A)U^\dagger\right).
		\end{align*} Therefore, the map defined by $U$ is detection incoherent. Let us call this map $\Phi^{AZ\leftarrow A}$ and use it to show Eq.~(\ref{eq:isometry}). We obtain
		\begin{align*}
			\Phi^{AZ\leftarrow A} &\left(\lambda-\mu \Lambda_{\vec{\phi}}^A\right)((\ketbra{\varphi}{\varphi}_{A}) \\
			&=U\left(\sum_{i,j}\left(\lambda-\mu e^{i(\phi_i-\phi_j)}\right)\varphi_i\varphi_j^*\ketbra{i}{j}_A\right)U^\dagger \\
			&=\sum_{i,j,n,p}\left(\lambda-\mu e^{i(\phi_i-\phi_j)}\right)\varphi_i\varphi_j^*v_{i,n}v_{j,p}^*\ketbra{in}{jp}_{AZ} \\
			&=\sum_{i,j,n,p}\left(\lambda-\mu e^{i(\phi_i-\phi_j)}\right)\psi_{i,n}\psi_{j,p}^*\ketbra{in}{jp}_{AZ} \\
			&=\left(\left(\lambda-\mu \Lambda_{\vec{\phi}}^A\right)\otimes\id^{Z}\right)(\ketbra{\psi}{\psi}_{AZ}),
		\end{align*}
		where the third equality follows from the definition of $v_{m,n}$. We recall that $\varphi_m=0$ for some $m$ iff for that $m$ we had $\psi_{m,n}=0 \quad \forall n$.
	\end{proof}

	\begin{lem}\label{lem:post_tensor}
		Let $\Psi^{C\leftarrow BZ},\Theta^{B\leftarrow A}$ be  quantum channels and  $\rho_{AZ}$ an incoherent state. Then there exists a probability distribution $\{p_i\}_i$, incoherent states $\{\rho_{|i}\}_i$, and quantum channels $\{\Psi_i^{C\leftarrow B}\}_i$ such that 
		\begin{equation*}
			\Psi^{C\leftarrow BZ}\!\left(\Theta^{B\leftarrow A}\otimes\id^Z\right)(\rho_{AZ})=\sum_i p_i\Psi_i^{C\leftarrow B} \Theta^{B\leftarrow A}(\rho_{|i}).
		\end{equation*} 
		Moreover, if $\Psi^{C\leftarrow BZ}\in\mathcal{MIO}$, one can choose $\Psi_i^{C\leftarrow B}\in\mathcal{MIO}$ too.
	\end{lem}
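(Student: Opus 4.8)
The plan is to exploit the only available structural hypothesis, namely that $\rho_{AZ}$ is incoherent, which by definition means it is diagonal in the product incoherent basis $\{\ket{a}_A\otimes\ket{z}_Z\}$ of the composite system. First I would write $\rho_{AZ}=\sum_{a,z}\rho_{a,z}\ketbra{a,z}{a,z}_{AZ}$ and regroup the terms according to the $Z$-label, obtaining
$\rho_{AZ}=\sum_z p_z\,\rho_{|z}^A\otimes\ketbra{z}{z}_Z$,
where $p_z:=\sum_a\rho_{a,z}$ defines a probability distribution (since $\sum_z p_z=\Tr(\rho_{AZ})=1$) and, for $p_z\ne0$, $\rho_{|z}^A:=\frac{1}{p_z}\sum_a\rho_{a,z}\ketbra{a}{a}_A$ is manifestly an incoherent state on $A$. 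This is the one place where the incoherence assumption enters, and it is what turns the $Z$ register into a classical index.

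Next I would push this decomposition through the two maps. Since $\Theta^{B\leftarrow A}\otimes\id^Z$ acts trivially on $Z$, linearity gives $(\Theta\otimes\id)(\rho_{AZ})=\sum_z p_z\,\Theta(\rho_{|z}^A)\otimes\ketbra{z}{z}_Z$, and applying $\Psi^{C\leftarrow BZ}$ yields $\Psi(\Theta\otimes\id)(\rho_{AZ})=\sum_z p_z\,\Psi\big(\Theta(\rho_{|z}^A)\otimes\ketbra{z}{z}_Z\big)$. The key step is then to define, for each $z$ with $p_z\ne0$, the map $\Psi_z^{C\leftarrow B}(\sigma):=\Psi^{C\leftarrow BZ}(\sigma\otimes\ketbra{z}{z}_Z)$. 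Because appending the fixed state $\ketbra{z}{z}_Z$ is a CPTP operation and $\Psi$ is CPTP, each $\Psi_z$ is again a quantum channel, and identifying $i=z$ and $\rho_{|i}=\rho_{|z}^A$ produces exactly the claimed form $\sum_i p_i\,\Psi_i^{C\leftarrow B}\Theta^{B\leftarrow A}(\rho_{|i})$.

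For the $\mathcal{MIO}$ refinement I would simply verify that $\Psi_z$ inherits the defining property $\Psi_z\Delta=\Delta\Psi_z\Delta$. It suffices to note that if $\sigma$ is an incoherent state on $B$, then $\sigma\otimes\ketbra{z}{z}_Z$ is incoherent on $BZ$ (as $\ketbra{z}{z}_Z$ is an incoherent state), so $\Psi^{C\leftarrow BZ}\in\mathcal{MIO}$ forces $\Psi_z(\sigma)=\Psi^{C\leftarrow BZ}(\sigma\otimes\ketbra{z}{z}_Z)$ to be incoherent on $C$; hence $\Psi_z\in\mathcal{MIO}$.

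I do not anticipate a serious obstacle here: the argument is really a change of perspective in which the incoherence of $\rho_{AZ}$ collapses the $Z$ register to a classical label that can be absorbed into the definition of the output channel. The only points requiring a little care are the convention that the incoherent basis of $AZ$ is the product basis, so that incoherence of $\rho_{AZ}$ genuinely gives diagonality in $\{\ket{a,z}\}$, and the harmless treatment of labels $z$ with $p_z=0$, for which $\rho_{|z}$ and $\Psi_z$ may be chosen arbitrarily since they do not contribute to the sum.
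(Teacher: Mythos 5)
Your proposal is correct and follows essentially the same route as the paper: decompose the incoherent $\rho_{AZ}$ over the incoherent basis of $Z$, define $\Psi_z(\sigma)=\Psi(\sigma\otimes\ketbra{z}{z}_Z)$ (the paper realizes the same channel via the Kraus operators $L_{n,b}=K_n\ket{b}_Z$), and use that $\mathcal{MIO}$ maps the incoherent state $\Delta(\sigma)\otimes\ketbra{z}{z}_Z$ to an incoherent state. The only cosmetic difference is that you define $\Psi_z$ by composition with the (free) state-appending channel rather than writing out Kraus operators, which is equivalent and arguably cleaner.
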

	
	\begin{proof}		
		The probability to obtain outcome $b$ if subsystem $Z$ of the state $\rho_{AZ}$ is projectively measured in its incoherent basis is given by
		\begin{align*}
			p_b=\Tr\left\{\bra{b}_Z\rho_{AZ}\ket{b}_Z\right\}, 
		\end{align*} 
		and, for $p_b\neq 0$, we denote the corresponding post measurement states of system $A$ by
		\begin{equation*}
			\rho_{|b}:=\frac{\bra{b}_Z\rho_{AZ}\ket{b}_Z}{p_b}.
		\end{equation*} 
		Since $\rho_{AZ}\in\mathcal{I}$ by assumption, the $\rho_{|b}$ are incoherent too and we have
		\begin{align*}
			&\left(\Theta^{B\leftarrow A}\otimes\id^Z\right) (\rho_{AZ}) \\
			&=\left(\id^B\otimes\Delta^Z\right)\left(\Theta^{B\leftarrow A}\otimes\id^Z\right)(\rho_{AZ}) \\
			&=\sum_b(\mathbbm{1}_B\otimes\ketbra{b}{b}_Z)\left(\Theta^{B\leftarrow A}\otimes\id^Z(\rho_{AZ})\right)\left(\mathbbm{1}_B\otimes\ketbra{b}{b}_Z\right) \\
			&=\sum_b\bra{b}_Z\left(\Theta^{B\leftarrow A}\otimes\id^Z(\rho_{AZ})\right)\ket{b}_Z\otimes\ketbra{b}{b}_Z \\
			&=\sum_b\Theta^{B\leftarrow A}\left(\bra{b}_Z\rho_{AZ}\ket{b}_Z\right)\otimes\ketbra{b}{b}_Z \\
			&=\sum_{b\ :\ p_b\neq 0}p_b\Theta^{B\leftarrow A}\left(\rho_{|b}\right)\otimes\ketbra{b}{b}_Z.
		\end{align*} 
		
		Now let $\{K_n\}_n$ be a set of Kraus operators corresponding to $\Psi^{C\leftarrow BZ}$ and define $L_{n,b}:= K_n\ket{b}_Z$. Since $\Psi^{C\leftarrow BZ}$ is a channel, we have $\sum_n K_n^\dagger K_n=\mathbbm{1}_{BZ}$, and hence
		\begin{align*}
			\sum_n L_{n,b}^\dagger L_{n,b}&=\sum_n\bra{b}_Z K_n^\dagger K_n\ket{b}_Z \\
			&=\bra{b}_Z\mathbbm{1}_{BZ}\ket{b}_Z=\mathbbm{1}_B.
		\end{align*} 
		For fixed $b$, the set $\{ L_{n,b}\}_n$ is thus a Kraus decomposition of a channel $\Psi_b^{C\leftarrow B}$, and we can write
		\begin{align*}
			&\Psi^{C\leftarrow BZ} \left(\Theta^{B\leftarrow A}\otimes\id^Z\right)(\rho_{AZ}) \\
			&=\sum_{b\ :\ p_b\neq 0}p_b\Psi^{C\leftarrow BZ} \left(\Theta^{B\leftarrow A}\left(\rho_{|b}\right)\otimes\ketbra{b}{b}_Z\right) \\
			&=\sum_{b\ :\ p_b\neq 0}p_b\sum_{n} K_n\left(\Theta^{B\leftarrow A}\left(\rho_{|b}\right)\otimes\ketbra{b}{b}_Z\right) K_n^\dagger \\
			&=\sum_{b\ :\ p_b\neq 0}p_b\sum_{n} L_{n,b}\left(\Theta^{B\leftarrow A}\left(\rho_{|b}\right)\right) L_{n,b}^\dagger \\
			&=\sum_{b\ :\ p_b\neq 0}p_b\Psi_b^{C\leftarrow B} \Theta^{B\leftarrow A}(\rho_{|b}),
		\end{align*} which completes the first part of the  proof. 
		
		If $\Psi^{C\leftarrow BZ}\in\mathcal{MIO}$, we further obtain
		\begin{align*}
			&\Psi_b^{C\leftarrow B} \Delta(\rho_B)=\sum_n K_n\ket{b}_Z\Delta\left(\rho_B\right)\bra{b}_Z K_n^\dagger \\
			&=\sum_n K_n\Delta\left(\rho_B\otimes\ketbra{b}{b}_Z\right) K_n^\dagger \\
			&=\Psi^{C\leftarrow BZ}\Delta\left(\rho_B\otimes\ketbra{b}{b}_Z\right) \\
			&=\Delta \Psi^{C\leftarrow BZ}\Delta\left(\rho_B\otimes\ketbra{b}{b}_Z\right) \\
			&=\Delta \Psi_b^{C\leftarrow B} \Delta(\rho_B).\\
		\end{align*}
	\end{proof}
	
	\section{Proofs of the results in the main text}\label{ap:proofs}
	\setcounter{theorem}{0}
	Here we collect the proofs of the results in the main text, which we repeat for readability. 
	
	\begin{theorem}
		The functionals $M_{\lambda,\vec{\phi}}(\Theta)$ are convex measures in the detection incoherent setting for all $\lambda\in [0,1]$ and for all $\vec{\phi}\in\mathbbm{R}^M$.
	\end{theorem}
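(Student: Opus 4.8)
The plan is to verify the four defining properties of a convex measure—nullity, non-negativity, monotonicity, and convexity—directly from the definition in Eq.~\eqref{eq:pre_proc}. Throughout I would lean on a handful of elementary facts: that compositions of detection incoherent channels are again detection incoherent (an easy consequence of $\Delta\Phi=\Delta\Phi\Delta$), that $\Delta\Lambda_{\vec{\phi}}=\Delta$ and hence $\Delta\left(\lambda-\mu\Lambda_{\vec{\phi}}\right)=(\lambda-\mu)\Delta$, that every quantum channel is contractive in trace norm on Hermitian inputs, and that dephasing factorizes as $\Delta^{BZ}=\Delta^B\otimes\Delta^Z$ over a product incoherent basis. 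The identity $\mu=1-\lambda$ and $\id\in\mathcal{DI}$ are used implicitly.

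The bookkeeping properties come quickly. For non-negativity I would feed an incoherent input $\rho$ together with $\Phi=\id$ into the optimization; since $\Lambda_{\vec{\phi}}(\rho)=\rho$, this produces $(\lambda-\mu)\Delta\Theta(\rho)$, whose trace norm is exactly $|\lambda-\mu|$, so the maximum is at least $|\lambda-\mu|$ and $M_{\lambda,\vec{\phi}}(\Theta)\ge0$. For nullity, if $\Theta\in\mathcal{DI}$ then $\Theta\Phi\in\mathcal{DI}$ for every $\Phi\in\mathcal{DI}$, hence $\Delta\Theta\Phi=\Delta\Theta\Phi\Delta$; combined with $\Delta\left(\lambda-\mu\Lambda_{\vec{\phi}}\right)=(\lambda-\mu)\Delta$ this collapses the argument to $(\lambda-\mu)\Delta\Theta\Phi\Delta$, a channel scaled by $|\lambda-\mu|$, so every term in the maximization equals $|\lambda-\mu|$ and $M_{\lambda,\vec{\phi}}(\Theta)=0$. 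Convexity follows from linearity of $\Delta\,\cdot\,\Phi\left(\lambda-\mu\Lambda_{\vec{\phi}}\right)$ in the channel slot, the triangle inequality for the trace norm at each fixed input, and subadditivity of the maximum over $\Phi$.

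For monotonicity I would use the equivalent triple of conditions in Eqs.~\eqref{eq:sx}--\eqref{eq:ce}. Right composition is immediate: replacing $\Theta$ by $\Theta\Phi_0$ with $\Phi_0\in\mathcal{DI}$ turns the optimized map into $\Delta\Theta(\Phi_0\Phi)\left(\lambda-\mu\Lambda_{\vec{\phi}}\right)$, and since $\{\Phi_0\Phi:\Phi\in\mathcal{DI}\}\subseteq\mathcal{DI}$ the maximum can only shrink. For left composition with $\Phi_0\in\mathcal{DI}$, I would use $\Delta\Phi_0=\Delta\Phi_0\Delta$ to insert a dephasing, writing $\Delta\Phi_0\Theta\Phi(\cdot)=(\Delta\Phi_0)\big[\Delta\Theta\Phi(\cdot)\big]$; as $\Delta\Phi_0$ is a channel it contracts the trace norm of the Hermitian operator in brackets, so again $M_{\lambda,\vec{\phi}}(\Phi_0\Theta)\le M_{\lambda,\vec{\phi}}(\Theta)$.

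The genuinely nontrivial condition—and the step I expect to be the main obstacle—is tensoring with an identity, Eq.~\eqref{eq:ce}, where I would in fact establish the equality $M_{\lambda,\vec{\phi}}(\Theta\otimes\id^Z)=M_{\lambda,\vec{\phi}}(\Theta)$ promised in the main text. The direction $\ge$ is easy: appending a fixed incoherent ancilla on $Z$ is a detection incoherent pre-processing, so any $\Phi$ optimal for $\Theta$ lifts to one for $\Theta\otimes\id^Z$ without changing the norm. The reverse direction is where the work lies. Writing $B$ for the output of $\Theta^{B\leftarrow A}$, I would factor the optimized map as
\begin{equation*}
\Delta^{BZ}(\Theta\otimes\id^Z)\Phi=(\Delta^B\otimes\id^Z)(\Theta\otimes\id^Z)\,(\id^A\otimes\Delta^Z)\Phi,
\end{equation*}
so that the innermost piece $(\id^A\otimes\Delta^Z)\Phi\left(\lambda-\mu\Lambda_{\vec{\phi}}\right)(\rho)$ is exactly of the form handled by Lemma~\ref{lem:tensor} (with its $B$ playing the role of my $Z$ and its $\Theta_{\mathcal{DI}}$ the role of $\Phi$). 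That lemma rewrites it as a $Z$-classical mixture $\sum_b p_b\,\big(\tilde{\Phi}_b\left(\lambda-\mu\Lambda_{\vec{\phi}}\right)(\rho)\big)\otimes\ketbra{b}{b}_Z$ with detection incoherent channels $\tilde{\Phi}_b$ and probabilities $p_b$ summing to one. Applying the remaining $\Delta^B\Theta\otimes\id^Z$ keeps the $\ketbra{b}{b}_Z$ on mutually orthogonal supports, so the trace norm becomes additive, $\sum_b p_b\,\|\Delta\Theta\tilde{\Phi}_b\left(\lambda-\mu\Lambda_{\vec{\phi}}\right)(\rho)\|_1$. Each summand is bounded by $M_{\lambda,\vec{\phi}}(\Theta)+|\lambda-\mu|$ because $\tilde{\Phi}_b\in\mathcal{DI}$ and the induced norm maximizes over both the free pre-processing and the input; since the $p_b$ sum to one the convex combination is bounded by the same quantity, giving $M_{\lambda,\vec{\phi}}(\Theta\otimes\id^Z)\le M_{\lambda,\vec{\phi}}(\Theta)$. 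Together the two directions yield the claimed equality, and with the other three properties the functionals are convex measures. The crux throughout is Lemma~\ref{lem:tensor}, which converts the potentially correlating free pre-processing $\Phi$ on the enlarged input into an innocuous classical mixture of free channels on the original system.
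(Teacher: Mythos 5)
Your proof is correct and follows essentially the same route as the paper's: the same factorization $\Delta^{BZ}(\Theta\otimes\id^Z)\Phi=(\Delta^B\Theta\otimes\id^Z)(\id^A\otimes\Delta^Z)\Phi$ reducing the tensor-product condition to Lemma~\ref{lem:tensor} followed by (sub)additivity of the trace norm over the classical register, and the same nullity, right-composition, and convexity arguments. The only cosmetic difference is that for left composition you insert a dephasing via $\Delta\Phi_0=\Delta\Phi_0\Delta$ and invoke trace-norm contractivity of the channel $\Delta\Phi_0$ on Hermitian operators, whereas the paper argues dually via the fact that a detection incoherent channel maps incoherent POVMs to incoherent POVMs; these are equivalent.
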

	
	\begin{proof}
		
		Let us start by proving nullity, i.e., we assume that $\Theta$ is detection incoherent. Then, since the composition of free operations is a free operation, we have $\Delta \Theta \Phi=\Delta \Theta \Phi \Delta$.
		
		Defining the \emph{complementary dephasing operator} as $\Delta_c:=\id-\Delta$, we can write
		\begin{align*}
			\Delta \left(\lambda-\mu\Lambda_{\vec{\phi}}\right)(\rho) & =\Delta \left(\lambda-\mu\Lambda_{\vec{\phi}}\right) (\Delta+\Delta_c)(\rho) \\
			& =\Delta (\lambda-\mu) \Delta(\rho)+\Delta \left(\lambda-\mu\Lambda_{\vec{\phi}}\right) \Delta_c(\rho)\\
			&= (\lambda-\mu)\Delta(\rho).
		\end{align*}
		Here, the second line follows from the fact that $\Lambda_{\vec{\phi}}$ does not act on diagonal elements, while the third is due to the joint action of $\Delta_c$ and $\Delta$ which cancel firstly the diagonal and then the off-diagonal terms.
		Therefore, we find
		\begin{align*}
			M_{\lambda,\vec{\phi}}(\Theta)&= \max_{\rho,\Phi}\Tr\left|\Delta \Theta \Phi  \left(\lambda-\mu \Lambda_{\vec{\phi}}\right)(\rho)\right|-\left|\lambda-\mu\right| \\
			&= \max_{\rho,\Phi}\Tr\left|\Delta \Theta \Phi  \Delta \left(\lambda-\mu \Lambda_{\vec{\phi}}\right)(\rho)\right|-\left|\lambda-\mu \right| \\
			&= \left|\lambda-\mu\right| \max_{\rho,\Phi}\Tr\left|\Delta \Theta \Phi \Delta(\rho)\right|-\left|\lambda-\mu \right| \\
			&=\left|\lambda-\mu\right| -\left|\lambda-\mu \right|=0,
		\end{align*}
		where the last line is due to complete positivity and trace preservation of the operations $\Delta$, $\Theta$, and $\Phi$.
		
		Non-negativity follows from 
		\begin{align*}
			M_{\lambda,\vec{\phi}}(\Theta)&=\max_{\rho,\Phi}\Tr\left|\Delta \Theta  \Phi \left(\lambda-\mu \Lambda_{\vec{\phi}}\right)(\rho)\right|-\left|\lambda-\mu\right| \\
			&\geq \max_{\rho,\Phi}\left|\Tr\left(\Delta \Theta  \Phi \left(\lambda-\mu \Lambda_{\vec{\phi}}\right)(\rho)\right)\right|-\left|\lambda-\mu\right| \\
			&=\left|\lambda-\mu\right|-\left|\lambda-\mu\right|=0.
		\end{align*}
		
		We proceed with the three proofs of monotonicity, which we show for the functionals
		\begin{equation}
			F_{\lambda,\vec{\phi}}(\Theta):= \max_{\Phi} \left|\left|\Delta \Theta \Phi \left(\lambda-\mu \Lambda_{\vec{\phi}}\right)\right|\right|_1.\label{eq:F_Lgen}
		\end{equation} The extension to $M_{\lambda,\vec{\phi}}(\Theta)$ is straightforward since the functionals only differ by a constant.
		
		We start proving that the functionals $F_{\lambda,\vec{\phi}}(\Theta)$ satisfy Eq.~(\ref{eq:sx}) by exploiting that a detection incoherent channel $\Theta_{\mathcal{DI}}$ cannot turn an incoherent POVM into a coherent one~\cite{Meznaric2013}. We therefore find 
		\begin{align*}
			{F}_{\lambda,\vec{\phi}} &(\Theta_{\mathcal{DI}} \Theta) = \max_{\Phi,\rho}\Tr\left|\Delta  \Theta_{\mathcal{DI}} \Theta \Phi  \left(\lambda-\mu \Lambda_{\vec{\phi}}\right)(\rho)\right| \\
			&= \max_{\Phi,\rho}\max_{P_0\in\mathcal{P}_I}\Tr\left(P_0 \Theta_{\mathcal{DI}} \Theta \Phi  \left(\lambda-\mu \Lambda_{\vec{\phi}}\right)(\rho)\right) \\
			&= \max_{\Phi,\rho}\max_{P_0\in\mathcal{P}_I,P_0'=P_0\Theta_{\mathcal{DI}}}\Tr\left(P_0'\Theta \Phi  \left(\lambda-\mu \Lambda_{\vec{\phi}}\right)(\rho)\right)\\
			&\leq \max_{\Phi,\rho}\max_{P_0'\in\mathcal{P}_I}\Tr\left(P_0'\Theta \Phi  \left(\lambda-\mu \Lambda_{\vec{\phi}}\right)(\rho)\right) = {F}_{\lambda,\vec{\phi}}(\Theta),
		\end{align*}
		where the inequality is due to an extension of the set over which we maximize.
		
		That the $F_{\lambda,\vec{\phi}}(\Theta)$ also satisfy Eq.~(\ref{eq:dx}) is proven by
		\begin{align*}
			{F}_{\lambda,\vec{\phi}}(\Theta &\Theta_{\mathcal{DI}}) =\max_{\Phi\in\mathcal{DI},\rho}\Tr\left|\Delta  \Theta \Theta_{\mathcal{DI}} \Phi  \left(\lambda-\mu \Lambda_{\vec{\phi}}\right)(\rho)\right| \\
			&=\max_{\Phi\in\mathcal{DI},\,\Phi'= \Theta_{\mathcal{DI}}\Phi,\rho}\Tr\left|\Delta \Theta \Phi'  \left(\lambda-\mu \Lambda_{\vec{\phi}}\right)(\rho)\right| \\
			&\leq \max_{\Phi'\in\mathcal{DI},\rho}\Tr\left|\Delta  \Theta \Phi'  \left(\lambda-\mu \Lambda_{\vec{\phi}}\right)(\rho)\right| = {F}_{\lambda,\vec{\phi}}(\Theta).
		\end{align*}
		
		Next, we prove satisfaction of Eq.~(\ref{eq:ce}). Indeed, we will prove the slightly more general statement that the functionals $F_{\lambda,\vec{\phi}}(\Theta)$ are constant under tensor product with the identity. We begin with
		\small
		\begin{align*}
			F_{\lambda,\vec{\phi}} &\left(\Theta^{A}\otimes \id^B\right) =\max_{\Phi^{AB\leftarrow C}}  \bigg| \bigg| \Delta \left(\Theta^{A}\otimes \id^B\right) \\
			&\pushright{\Phi^{AB\leftarrow C} \left(\lambda-\mu \Lambda_{\vec{\phi}}\right)\bigg|\bigg|_1} \\
			&=\max_{\Phi^{AB\leftarrow C}}  \bigg| \bigg| \left(\Delta^A \Theta^{A}\otimes\id^B\right) \left(\id^{A}\otimes\Delta^B\right) \\
			&\pushright{ \Phi^{AB\leftarrow C} \left(\lambda-\mu \Lambda_{\vec{\phi}}\right)\bigg|\bigg|_1} \\
			&\leq\max_{p_b,\tilde{\Phi}^{A\leftarrow C}_{b},\rho}  \Bigg| \Bigg| \left(\Delta^A \Theta^{A}\otimes\id^B\right) \\
			&\pushright{\sum_bp_b\left(\tilde{\Phi}^{A\leftarrow C}_{b} \left(\lambda-\mu\Lambda_{\vec{\phi}}\right)(\rho)\right)\otimes\ketbra{b}{b}_B\Bigg|\Bigg|_1} \\
			&\leq \max_{p_b,\tilde{\Phi}^{A\leftarrow C}_{b},\rho} \sum_b p_b \bigg| \bigg| \left(\Delta^A \Theta^{A}\otimes\id^B\right) \\
			&\pushright{\left(\tilde{\Phi}^{A\leftarrow C}_{b} \left(\lambda-\mu\Lambda_{\vec{\phi}}\right)(\rho)\right)\otimes\ketbra{b}{b}_B\bigg|\bigg|_1} \\
			&= \max_{p_b,\tilde{\Phi}^{A\leftarrow C}_b,\rho} \sum_bp_b\left| \left| \Delta^A \Theta^{A}\left(\tilde{\Phi}^{A\leftarrow C}_{b} \left(\lambda-\mu\Lambda_{\vec{\phi}}\right)(\rho)\right)\right|\right|_1 \\
			&= \max_{\tilde{\Phi}^{A\leftarrow C}} \left| \left| \Delta^A \Theta^{A}\left(\tilde{\Phi}^{A\leftarrow C} \left(\lambda-\mu\Lambda_{\vec{\phi}}\right)\right)\right|\right|_1= F_{\lambda,\vec{\phi}}\left(\Theta^{A}\right),
		\end{align*}
		\normalsize
		where the first inequality follows from Lem.~\ref{lem:tensor} and the second from convexity of the trace norm. 
		
		On the other hand, we can also prove the inverse inequality. We have
		\small
		\begin{align*}
			F_{\lambda,\vec{\phi}}&\left(\Theta^{A}\otimes \id^B\right) =\max_{\Phi^{AB\leftarrow C}}  \bigg| \bigg| \Delta \left(\Theta^{A}\otimes \id^B\right) \\
			&\pushright{\Phi^{AB\leftarrow C} \left(\lambda-\mu \Lambda_{\vec{\phi}}^C\right)\bigg|\bigg|_1} \\
			&\geq \max_{\Phi^{A\leftarrow C},\rho_B,\rho} \bigg| \bigg| \Delta \left(\Theta^{A}\otimes \id^B\right) \\
			&\pushright{\left(\Phi^{A\leftarrow C}\otimes\rho_B\right) \left( \left(\lambda-\mu \Lambda_{\vec{\phi}}^C\right)(\rho)\right)\bigg|\bigg|_1}\\
			&= \max_{\Phi^{A\leftarrow C},\rho,\rho_B} \left| \left| \Delta \Theta^{A} \Phi^{A\leftarrow C}  \left(\lambda-\mu \Lambda_{\vec{\phi}}^C\right)(\rho)\otimes\Delta^B(\rho_B)\right|\right|_1\\
			&= \max_{\Phi_A,\rho} \left| \left| \Delta \Theta^{A} \Phi^{A\leftarrow C}  \left(\lambda-\mu \Lambda_{\vec{\phi}}^C\right)(\rho)\right|\right|_1=F_{\lambda,\vec{\phi}}\left(\Theta^{A}\right),
		\end{align*}
		\normalsize
		where the inequality is due to a restriction of the set over which we maximize. We conclude that
		\begin{align*}
			M_{\lambda,\vec{\phi}}(\Theta\otimes\id^B)= M_{\lambda,\vec{\phi}}(\Theta)
		\end{align*} 
		for all $B$, $\Theta$, $\lambda$, and $\vec{{\phi}}$. 
		Finally, convexity follows straightforwardly from convexity of the trace norm. 
	\end{proof}

	\begin{theorem}
		An auxiliary system does not increase the pre-processed improvements, i.e., for $\vec{\tilde{\phi}}(AZ)$ such that 
		\begin{equation*}
			\Lambda_{\vec{\tilde{\phi}}(AZ)}:=\Lambda_{\vec{\phi}}^{A}\otimes\id^{Z},
		\end{equation*}
		it holds that
		\begin{equation*}
			M_{\lambda,\vec{\tilde{\phi}}(AZ)}(\Theta)= M_{\lambda,\vec{\phi}}(\Theta) \quad \forall \Theta\ \forall \lambda \ \forall {\vec{\phi}} \ \forall Z .
		\end{equation*}
	\end{theorem}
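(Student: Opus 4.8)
The plan is to reduce the claim to the auxiliary functional $F_{\lambda,\vec{\phi}}(\Theta)=\max_{\Phi\in\mathcal{DI}}\left|\left|\Delta\Theta\Phi\left(\lambda-\mu\Lambda_{\vec{\phi}}\right)\right|\right|_1$ introduced in the proof of Thm.~\ref{theo:meas_pre}, since $M_{\lambda,\vec{\phi}}$ and $M_{\lambda,\vec{\tilde{\phi}}(AZ)}$ both differ from the corresponding $F$ by the same constant $|\lambda-\mu|$. It therefore suffices to establish, for every fixed $Z$, the two inequalities $F_{\lambda,\vec{\tilde{\phi}}(AZ)}(\Theta)\ge F_{\lambda,\vec{\phi}}(\Theta)$ and $F_{\lambda,\vec{\tilde{\phi}}(AZ)}(\Theta)\le F_{\lambda,\vec{\phi}}(\Theta)$. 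Throughout I write the induced trace norm as the maximisation $\max_{\Phi,\rho}\Tr\left|\Delta\Theta\Phi(\lambda-\mu\Lambda_{\vec{\phi}})(\rho)\right|$ over free pre-processings $\Phi$ and input states $\rho$, exactly as in the proof of Thm.~\ref{theo:meas_pre}.

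For the direction $\ge$, I would embed every strategy of the $A$-setting into the $AZ$-setting. Given a free $\Phi$ with input $A$ and a state $\rho_A$, I take the product input $\rho_A\otimes\omega_Z$ for an arbitrary fixed state $\omega_Z$ and the pre-processing $\Phi\circ\Tr_Z$. The partial trace is detection incoherent (one checks $\Delta^A\Tr_Z=\Delta^A\Tr_Z\Delta^{AZ}$ directly, or notes it is a free operation), so $\Phi\circ\Tr_Z\in\mathcal{DI}$ by closure of $\mathcal{DI}$ under composition; moreover $\Tr_Z\left(\lambda-\mu\Lambda_{\vec{\phi}}^A\otimes\id^Z\right)=\left(\lambda-\mu\Lambda_{\vec{\phi}}^A\right)\Tr_Z$ because the phase gate on $A$ commutes with tracing out $Z$. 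Hence this pair reproduces the value $\Tr\left|\Delta\Theta\Phi(\lambda-\mu\Lambda_{\vec{\phi}}^A)(\rho_A)\right|$ inside the $AZ$ optimisation, which yields $F_{\lambda,\vec{\tilde{\phi}}(AZ)}(\Theta)\ge F_{\lambda,\vec{\phi}}(\Theta)$.

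The direction $\le$ is where the real work lies, and it is exactly what Lem.~\ref{lem:free_exp} is tailored for. Since $\rho_{AZ}\mapsto\Tr\left|\Delta\Theta\Phi(\lambda-\mu\Lambda_{\vec{\phi}}^A\otimes\id^Z)(\rho_{AZ})\right|$ is convex, its maximum over the (convex, compact) state space is attained on a pure state $\ketbra{\psi}{\psi}_{AZ}$, so I may restrict the input to pure states. For such a $\ket{\psi}_{AZ}$, Lem.~\ref{lem:free_exp} supplies a detection incoherent $\Phi_\psi$ with input $A$ and output $AZ$ and a state $\ket{\varphi}_A$ with $\left(\lambda-\mu\Lambda_{\vec{\phi}}^A\otimes\id^Z\right)(\ketbra{\psi}{\psi}_{AZ})=\Phi_\psi\left(\lambda-\mu\Lambda_{\vec{\phi}}^A\right)(\ketbra{\varphi}{\varphi}_A)$. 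Substituting this into the $AZ$-expression, the composition $\Phi\,\Phi_\psi$ of the $AZ$ pre-processing with $\Phi_\psi$ is again detection incoherent, hence an admissible $A$-setting pre-processing acting on the single-system state $\ket{\varphi}_A$. This bounds every term of the $AZ$-maximisation by $F_{\lambda,\vec{\phi}}(\Theta)$, giving $F_{\lambda,\vec{\tilde{\phi}}(AZ)}(\Theta)\le F_{\lambda,\vec{\phi}}(\Theta)$ and, together with the first inequality, the claimed equality.

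The main obstacle is the $\le$ direction: one must argue that any correlation with the ancilla $Z$ which the optimal input could exploit is redundant. The content of Lem.~\ref{lem:free_exp} is precisely that the operator $\left(\lambda-\mu\Lambda_{\vec{\phi}}^A\otimes\id^Z\right)(\ketbra{\psi}{\psi}_{AZ})$ — including all of its $A$–$Z$ correlations — can be manufactured from a single-system state on $A$ by a free pre-processing, so the ancilla grants no genuine advantage once the pre-processing is optimised. The two supporting points, both routine but essential for the substitution to land inside the $A$-setting optimisation, are the reduction to pure inputs via convexity of the trace norm and the closure of $\mathcal{DI}$ under composition.
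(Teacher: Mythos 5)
Your proof is correct and follows essentially the same route as the paper's: the nontrivial direction $F_{\lambda,\vec{\tilde{\phi}}(AZ)}(\Theta)\le F_{\lambda,\vec{\phi}}(\Theta)$ rests, exactly as in the paper, on the reduction to pure inputs via convexity of the trace norm, on Lem.~\ref{lem:free_exp} to absorb the ancilla correlations into a free pre-processing $\Phi_\psi^{AZ\leftarrow A}$, and on closure of $\mathcal{DI}$ under composition. The only (harmless) deviation is in the easy direction, where you compose the pre-processing with $\Tr_Z$ on a product input instead of invoking the constancy of $F_{\lambda,\vec{\phi}}$ under tensoring with the identity as the paper does; both arguments are valid.
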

	
	\begin{proof} 
		Our proof rests upon showing an inequality in both directions. Also here, we will prove the inequalities for $F_{\lambda,\vec{\tilde{\phi}}(AZ)} (\Theta)$ (defined in Eq.~(\ref{eq:F_Lgen})), with the extension to $M_{\lambda,\vec{\tilde{\phi}}(AZ)} (\Theta)$ being straightforward. 
		
		In one direction, we have
		\begin{align*}
			&F_{\lambda,\vec{\tilde{\phi}}(AZ)} (\Theta^{C\leftarrow B})=F_{\lambda,\vec{\tilde{\phi}}(AZ)} (\Theta^{C\leftarrow B}\otimes\id^D) \\
			&= \max_{\Phi^{BD\leftarrow AZ}} \bigg| \bigg| \Delta \left(\Theta^{C\leftarrow B}\otimes\id^D\right) \Phi^{BD\leftarrow AZ}  \\
			&\pushright{\left(\left(\lambda-\mu \Lambda_{\vec{\phi}}^A\right)\otimes\id^{Z}\right)\bigg|\bigg|_1} \\
			&\geq \max_{\Phi^{B\leftarrow A},\Phi^{D\leftarrow Z},\rho_A,\rho_Z} \bigg| \bigg| \Delta \left(\Theta^{C\leftarrow B}\otimes\id^D\right) \\
			&\pushright{\left(\Phi^{B\leftarrow A}\otimes\Phi^{D\leftarrow Z}\right)  \left(\left(\lambda-\mu \Lambda_{\vec{\phi}}^A\right)\otimes\id^{Z}\right)(\rho_A\otimes\rho_Z)\bigg|\bigg|_1} \\
			&= \max_{\Phi^{B\leftarrow A},\rho_A} \left| \left| \Delta \Theta^{C\leftarrow B} \Phi^{B\leftarrow A}  \left(\lambda-\mu \Lambda_{\vec{\phi}}^A\right)(\rho_A)\right|\right|_1  \\
			&\pushright{ \max_{\Phi^{D\leftarrow Z},\rho_Z} \left| \left| \Delta \id^D \Phi^{D\leftarrow Z} \id^{Z}(\rho_Z)\right|\right|_1 }\\
			&= \max_{\Phi^{B\leftarrow A},\rho_A} \left| \left| \Delta \Theta^{C\leftarrow B} \Phi^{B\leftarrow A}  \left(\lambda-\mu \Lambda_{\vec{\phi}}^A\right)(\rho_A)\right|\right|_1 \\
			&=F_{{\lambda},{\vec{\phi}}} (\Theta^{C\leftarrow B}),
		\end{align*}
		where the first equality is due to constancy under tensor product and the inequality is due to a restriction of the set of states and pre-processing operations over which we maximize.
		
		Now, we show the inverse direction,
		\begin{align*}
			F_{{\lambda},{\vec{\phi}}} &(\Theta^{C\leftarrow B})=\max_{\Phi^{B\leftarrow A}} \left| \left| \Delta \Theta^{C\leftarrow B} \Phi^{B\leftarrow A}  \left(\lambda-\mu \Lambda_{\vec{\phi}}^A\right)\right|\right|_1 \\
			&\geq \max_{\Phi^{B\leftarrow A}=\Phi^{B\leftarrow AZ} \Phi^{AZ\leftarrow A}} \bigg| \bigg| \Delta \Theta^{C\leftarrow B} \Phi^{B\leftarrow AZ} \\
			&\pushright{\Phi^{AZ\leftarrow A}  \left(\lambda-\mu \Lambda_{\vec{\phi}}^A\right)\bigg|\bigg|_1} \\
			&= \max_{\Phi^{B\leftarrow A}=\Phi^{B\leftarrow AZ} \Phi^{AZ\leftarrow A},\ket{\varphi}_A} \bigg| \bigg| \Delta \Theta^{C\leftarrow B} \Phi^{B\leftarrow AZ} \\
			&\pushright{\Phi^{AZ\leftarrow A}  \left(\lambda-\mu \Lambda_{\vec{\phi}}^A\right)(\ketbra{\varphi}{\varphi}_A)\bigg|\bigg|_1} \\
			&\geq \max_{\Phi^{B\leftarrow AZ},\Phi^{AZ\leftarrow A} \ \land \ \ket{\varphi}_A \ \text{: Eq.~(\ref{eq:isometry}) is satisfied}} \bigg| \bigg| \Delta \Theta^{C\leftarrow B} \\
			&\pushright{\Phi^{B\leftarrow AZ} \Phi^{AZ\leftarrow A}  \left(\lambda-\mu \Lambda_{\vec{\phi}}^A\right)(\ketbra{\varphi}{\varphi}_A)\bigg|\bigg|_1} \\
			&=\max_{\Phi^{B\leftarrow AZ},\ket{\psi}_{AZ}} \bigg| \bigg| \Delta_{C} \Theta^{C\leftarrow B} \Phi^{B\leftarrow AZ}  \\
			&\pushright{\left(\left(\lambda-\mu \Lambda_{\vec{\phi}}^A\right)\otimes\id^{Z}\right)(\ketbra{\psi}{\psi}_{AZ})\bigg|\bigg|_1 }\\
			&=F_{\lambda,\vec{\tilde{\phi}}(AZ)} (\Theta^{C\leftarrow B}),
		\end{align*}
		where the first inequality is due to a restriction of the set over which we  maximize. In the following equality, we used that the maximum is always achieved on pure states due to convexity of the trace norm. The second inequality follows from a further restriction of the set over which we maximize. 
		The  equality thereafter is due to Lem.~\ref{lem:free_exp}.
	\end{proof}

	\begin{theorem}
		For all $\vec{\phi}\in\mathbbm{R}^M$ with at least two different components, the functionals $M_{\lambda,\vec{\phi}}(\Theta)$ are faithful if and only if $\lambda=\frac{1}{2}$.
	\end{theorem}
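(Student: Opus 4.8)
The plan is to prove both implications of the ``if and only if''. Since nullity (a free $\Theta$ gives $M_{\lambda,\vec{\phi}}(\Theta)=0$) is already established in the proof of Thm.~\ref{theo:meas_pre}, faithfulness amounts to the converse $M_{\lambda,\vec{\phi}}(\Theta)=0\Rightarrow\Theta\in\mathcal{DI}$, and I must show this converse holds precisely for $\lambda=\tfrac12$ (recall $\mu=1-\lambda$). The key preliminary observation is a sign criterion: for every $\Phi\in\mathcal{DI}$ and every state $\rho$, the operator $D:=\Delta\Theta\Phi(\lambda-\mu\Lambda_{\vec{\phi}})(\rho)$ is \emph{diagonal} in the incoherent basis and Hermitian, and since $\Phi,\Theta,\Delta$ are trace preserving it has fixed trace $\Tr D=\lambda-\mu$. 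Writing $D=\sum_m d_m\ketbra{m}{m}$ with $d_m\in\mathbbm{R}$, one has $\left|\left|D\right|\right|_1-\left|\lambda-\mu\right|=\sum_m|d_m|-\left|\sum_m d_m\right|$, which is strictly positive exactly when some $d_m$ has the sign opposite to $\lambda-\mu$. Hence $M_{\lambda,\vec{\phi}}(\Theta)=0$ if and only if, for all $\Phi$ and $\rho$, every diagonal entry of $D$ shares the sign of $\lambda-\mu$.

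\emph{Faithfulness for $\lambda=\tfrac12$.} Here $\lambda-\mu=0$, so the criterion says $M=0$ iff $D=0$ for all $\Phi,\rho$; I prove the contrapositive. Assume $\Theta\notin\mathcal{DI}$; by Eq.~\eqref{eq:indexDetIn} there are $i\neq j$ and $m_0$ with $\Theta^{i,j}_{m_0,m_0}\neq0$, and by hypothesis there is a pair $k\neq l$ with $\phi_k\neq\phi_l$. I route the coherence from the $(k,l)$ block to the $(i,j)$ block by the incoherent (hence detection incoherent) pre-processing $\Phi$ with Kraus operators $K_0=\ketbra{i}{k}+\ketbra{j}{l}$ and $K_m=\ketbra{g(m)}{m}$ for $m\notin\{k,l\}$, where $g$ sends the remaining basis labels to incoherent ones; a direct check gives $\sum_\alpha K_\alpha^\dagger K_\alpha=\mathbbm{1}$ and $\Phi(\ketbra{k}{l})=\ketbra{i}{j}$. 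Feeding $\rho=\ketbra{\varphi}{\varphi}$ with $\ket{\varphi}=\tfrac{1}{\sqrt2}(\ket{k}+e^{i\beta}\ket{l})$ and using Prop.~\ref{prop:ind_pr}.2 to write $\Theta^{j,i}_{m,m}=(\Theta^{i,j}_{m,m})^*$, one finds $D=\sum_m 2\,\mathrm{Re}\!\left(c\,\Theta^{i,j}_{m,m}\right)\ketbra{m}{m}$ with $c=\tfrac14(1-e^{i(\phi_k-\phi_l)})e^{-i\beta}\neq0$. Choosing the free phase $\beta$ so that $\mathrm{Re}(c\,\Theta^{i,j}_{m_0,m_0})\neq0$ makes $D\neq0$, hence $M_{1/2,\vec{\phi}}(\Theta)>0$.

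\emph{Non-faithfulness for $\lambda\neq\tfrac12$.} Assume $\lambda>\mu$ (the case $\lambda<\mu$ is symmetric), so $\lambda-\mu>0$. I exhibit non-free channels with $M=0$ by perturbing the free ``reset'' channel $\Theta_0(\sigma)=\Tr(\sigma)\,\tfrac{\mathbbm{1}}{d}$: set $\Theta_\epsilon=(1-\epsilon)\Theta_0+\epsilon\,\Xi$, where $\Xi$ is any channel with $\Xi^{i,j}_{m,m}\neq0$ for some $i\neq j$, so that $(\Theta_\epsilon)^{i,j}_{m,m}=\epsilon\,\Xi^{i,j}_{m,m}\neq0$ and $\Theta_\epsilon\notin\mathcal{DI}$ for every $\epsilon>0$. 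For $\Theta_0$ the operator $D$ equals $(\lambda-\mu)\tfrac{\mathbbm{1}}{d}$, with every entry $(\lambda-\mu)/d>0$, while the $\Xi$-contribution to any diagonal entry is bounded in modulus by $\left|\left|\Xi\Phi(\lambda-\mu\Lambda_{\vec{\phi}})(\rho)\right|\right|_1\le\left|\left|(\lambda-\mu\Lambda_{\vec{\phi}})(\rho)\right|\right|_1\le1$, uniformly in $\Phi$ and $\rho$, by trace-norm contractivity of channels and the triangle inequality. Hence every entry of $D$ for $\Theta_\epsilon$ is at least $(1-\epsilon)(\lambda-\mu)/d-\epsilon$, which is non-negative once $\epsilon$ is small enough; then all entries share the sign of $\lambda-\mu$ and the criterion gives $M_{\lambda,\vec{\phi}}(\Theta_\epsilon)=0$ although $\Theta_\epsilon$ is resourceful.

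The conceptual heart of the argument, and the step I expect to be most delicate, is the sign criterion of the first paragraph together with its uniform use in the $\lambda\neq\tfrac12$ direction: one must control the off-diagonal-to-population leakage of the perturbation $\Xi$ simultaneously over the unbounded families of free pre-processings $\Phi$ and input states $\rho$, which is exactly what the uniform trace-norm bound achieves. By contrast, the coherence-routing channel of the $\lambda=\tfrac12$ direction is a routine explicit construction once one knows which matrix element of $\Theta$ to target.
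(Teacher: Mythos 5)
Your proof is correct, and while the faithfulness direction ($\lambda=\tfrac12$) is essentially the paper's argument in different clothing, the non-faithfulness direction takes a genuinely different and cleaner route. For $\lambda=\tfrac12$ you build an explicit incoherent Kraus channel that routes the $(\tilde k,\tilde l)$ coherence of the probe into the $(i,j)$ slot that $\Theta$ detects; the paper does the same job with a SWAP unitary and then lower-bounds the trace norm via the triangle inequality and $\sum_m\Theta^{\tilde k,\tilde l}_{m,m}=0$, whereas you simply pick the free phase $\beta$ so that one diagonal entry of $D$ is nonzero — same mechanism, equivalent outcome. The real divergence is in the $\lambda\neq\tfrac12$ direction. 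The paper perturbs the \emph{identity} channel by a costly unitary, restricts to input dimension two, invokes Lem.~\ref{lem:detect_two} to dispose of a degenerate case, and chooses the mixing probability $p_1$ via a bound whose numerator $(\lambda-\mu)\sum_i\Phi^{i,i}_{m,m}\rho_{i,i}$ depends on $\Phi$ and $\rho$, so that uniformity over the optimization family is left somewhat implicit. You instead perturb the \emph{completely depolarizing} channel, whose output populations are $(\lambda-\mu)/d$ independently of $\Phi$ and $\rho$; the perturbation's contribution to any population is then controlled by a single trace-norm contractivity bound that is manifestly uniform over all free pre-processings and input states. This buys you a counterexample in every dimension, no case analysis, and an explicitly uniform choice of $\epsilon$ — arguably a tighter argument than the paper's on this half of the statement. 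Your sign criterion ($M=0$ iff all populations of $D$ share the sign of $\lambda-\mu$, for all $\Phi,\rho$) is also the implicit engine of the paper's proof, just stated up front, and it is correct since $D$ is diagonal, Hermitian, and of fixed trace $\lambda-\mu$.
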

	\begin{proof}
		For this proof, we define the functionals
		\begin{equation}
			F_{\lambda,\vec{\phi}}(\Theta,\Phi,\rho):= \left|\left|\Delta \Theta \Phi \left(\lambda-\mu \Lambda_{\vec{\phi}}\right)(\rho)\right|\right|_1. \label{eq:F_L-nomax}
		\end{equation}
		
		Let us start with $\lambda=\mu=\frac{1}{2}$. In this case, we have $M_{\frac{1}{2},\vec{\phi}}(\Theta)=F_{\frac{1}{2},\vec{\phi}}(\Theta)$. 
		
		Now assume that $\Theta$ can detect coherence. In other words, we are assuming that $\exists \,\tilde{m},\tilde{k},\tilde{l}$ such that $\tilde{k}\neq \tilde{l}$ and $\Theta^{\tilde{k},\tilde{l}}_{\tilde{m},\tilde{m}}:=\bra{\tilde{m}}\Theta(\ketbra{\tilde{k}}{\tilde{l}})\ket{\tilde{m}}\neq 0$. For simplicity, we take $\tilde{m}=0$ since we can always relabel the indices. We will show that in this case, there exists a choice of $\Phi$ and $\rho$ such that $F_{\frac{1}{2},\vec{\phi}}(\Theta,\Phi,\rho)>0$, for $F_{\frac{1}{2},\vec{\phi}}(\Theta,\Phi,\rho)$ as defined in Eq.~(\ref{eq:F_L-nomax}). This implies that for non-free $\Theta$, we find $M_{\frac{1}{2},\vec{\phi}}(\Theta)>0$, which proves faithfulness. 
		
		First, by assumption, there exists a couple $\tilde{i}\ne\tilde{j}$ such that $\phi:=\phi_{\tilde{i}}-\phi_{\tilde{j}}\neq 0$. Writing again $\rho=\sum_{i,j} \rho_{i,j}\ketbra{i}{j}$, we have
		\begin{equation}
			\left(\lambda-\mu\Lambda_{\vec{\phi}}\right)(\rho)=\sum_{i,j}\ketbra{i}{j}\left(\lambda-\mu e^{i(\phi_i-\phi_j)}\right)\rho_{i,j}. \label{eq:lamu}
		\end{equation}
		Now we choose our initial state $\tilde{\rho}$ such that $\rho_{\tilde{i},\tilde{i}}=\rho_{\tilde{j},\tilde{j}}=\frac{1}{2}$ and $\rho_{\tilde{i},\tilde{j}}=\frac{1}{2}e^{i\xi}=\rho_{\tilde{j},\tilde{i}}^*$. Thus
		\begin{align*}
			\bigg(\lambda &-\mu\Lambda_{\vec{\phi}}\bigg)(\tilde{\rho})
			=\frac{1}{2}\left(\ketbra{\tilde{i}}{\tilde{i}}+\ketbra{\tilde{j}}{\tilde{j}}\right)(\lambda-\mu) \\
			&\pushright{+\frac{1}{2}\left(\ketbra{\tilde{i}}{\tilde{j}}e^{i\xi}\left(\lambda-\mu e^{i\phi}\right)+\ketbra{\tilde{j}}{\tilde{i}}e^{-i\xi}\left(\lambda-\mu e^{-i\phi}\right)\right).}
		\end{align*}
		The free operation $\Phi$ we choose is  $\Phi_{\text{SWAP}}$ defined via the unitary
		\begin{align*}
			U_{\text{SWAP}}=&\ketbra{\tilde{i}}{\tilde{k}}+\ketbra{\tilde{k}}{\tilde{i}}+\ketbra{\tilde{j}}{\tilde{l}}+\ketbra{\tilde{l}}{\tilde{j}}+\mathbbm{1} \\
			&\pushright{-\ketbra{\tilde{i}}{\tilde{i}}-\ketbra{\tilde{k}}{\tilde{k}}-\ketbra{\tilde{j}}{\tilde{j}}-\ketbra{\tilde{l}}{\tilde{l}}.}
		\end{align*} 
		According to our assumptions, $\Theta$ detects relative phases between $\tilde{k}$, $\tilde{l}$ and $\Lambda_{\vec{\phi}}$ encodes a relative phase between $\tilde{i}$, $\tilde{j}$. The operation $\Phi_{\text{SWAP}}$ swaps the qubits that these pairs of indices define and leaves the remainder of the space unchanged, which ensures that the subspace in which $\Theta$ detects is aligned with the one in which $\Lambda_{\vec{\phi}}$ encodes (see also App.~\ref{ap:preproc} why this might be necessary).
		This will now allow us to prove faithfulness. From
		\begin{align*}
			\Phi_{\text{SWAP}}\!&\left(\lambda-\mu\Lambda_{\vec{\phi}}\right)(\tilde{\rho}) =U_{\text{SWAP}} \left(\lambda-\mu\Lambda_{\vec{\phi}}\right)(\tilde{\rho})U_{\text{SWAP}}^{\dagger} \\
			=&\frac{1}{2}\bigg(\left(\ketbra{\tilde{k}}{\tilde{k}}+ \ketbra{\tilde{l}}{\tilde{l}}\right)(\lambda-\mu) \\
			& +\ketbra{\tilde{k}}{\tilde{l}}e^{i\xi}\left(\lambda-\mu e^{i\phi}\right)
			+\ketbra{\tilde{l}}{\tilde{k}}e^{-i\xi}\left(\lambda-\mu e^{-i\phi}\right)\bigg),
		\end{align*}
		and inserting explicit $\lambda=\frac{1}{2}$, follows
		\begin{align*}
			\Delta \Theta & \Phi_{\text{SWAP}} \left(\lambda-\mu\Lambda_{\vec{\phi}}\right)(\tilde{\rho})
			=\frac{1}{4}\sum_m\bigg(\Theta^{\tilde{k},\tilde{l}}_{m,m}\ketbra{m}{m}e^{i\xi} \\
			&\pushright{\left(1- e^{i\phi}\right) +\Theta^{\tilde{l},\tilde{k}}_{m,m}\ketbra{m}{m}e^{-i\xi}\left(1- e^{-i\phi}\right) \bigg)}\\
			&=\frac{1}{2}\sum_m\ketbra{m}{m}\Re\left(\Theta^{\tilde{k},\tilde{l}}_{m,m}e^{i\xi}\left(1- e^{i\phi}\right)\right),
		\end{align*}
		where we used the index representation introduced in App.~\ref{ap:index}.
		Now, we have
		\begin{align*}
			F_{\frac{1}{2},\vec{\phi}}(\Theta,\Phi_{\text{SWAP}},\tilde{\rho})
			=&\frac{1}{2}\sum_{m=0}^{M-1}\left|\Re\left(\Theta^{\tilde{k},\tilde{l}}_{m,m}e^{i\xi}\left(1- e^{i\phi}\right)\right)\right| \\
			=&\frac{1}{2}\left|\Re\left(\Theta^{\tilde{k},\tilde{l}}_{0,0}e^{i\xi}\left(1- e^{i\phi}\right)\right)\right| \\
			& +\frac{1}{2}\sum_{m=1}^{M-1}\left|\Re\left(\Theta^{\tilde{k},\tilde{l}}_{m,m}e^{i\xi}\left(1- e^{i\phi}\right)\right)\right| \\
			\geq& \frac{1}{2}\left|\Re\left(\Theta^{\tilde{k},\tilde{l}}_{0,0}e^{i\xi}\left(1- e^{i\phi}\right)\right)\right| \\
			& +\frac{1}{2}\left|\Re\left(\sum_{m=1}^{M-1}\Theta^{\tilde{k},\tilde{l}}_{m,m}e^{i\xi}\left(1- e^{i\phi}\right)\right)\right| \\
			=&\left|\Re\left(\Theta^{\tilde{k},\tilde{l}}_{0,0}e^{i\xi}\left(1- e^{i\phi}\right)\right)\right|,
		\end{align*}
		where the inequality stems from the triangle inequality and the last line from Proposition~\ref{prop:ind_pr}.3 in App.~\ref{ap:index}.
		
		Finally, recalling that $\max_{\xi}\Re(Ae^{i\xi})=|A|$, we conclude that
		\begin{align*}
			M_{\frac{1}{2},\vec{\phi}}(\Theta)&\geq \max_{\xi}F_{\frac{1}{2},\vec{\phi}}(\Theta,\Phi_{\text{SWAP}},\tilde{\rho}) \\
			& \geq \left|\Theta^{\tilde{k},\tilde{l}}_{0,0}\left(1- e^{i\phi}\right)\right|=\left|\Theta^{\tilde{k},\tilde{l}}_{0,0}\right| \left|1- e^{i\phi}\right|>0
		\end{align*}
		because by assumption $\Theta^{\tilde{k},\tilde{l}}_{0,0}\neq 0$ and $\phi\neq 0$. Together with Thm.~\ref{theo:meas_pre}, this finishes the case $\lambda=\mu=\frac{1}{2}$.
		
		Next, we show that whenever $\lambda\neq\mu$, there exists a channel $\tilde{\Theta}\notin \mathcal{DI}$ such that $M_{\lambda,\vec{\phi}}(\tilde{\Theta})=0$ for all $\vec{\phi}$. 
		With $\Phi\in \mathcal{DI}$ and starting from Eq.~(\ref{eq:lamu}), we have
		\begin{align}
			\Phi &\left(\lambda-\mu\Lambda_{\vec{\phi}}\right)(\rho)=\sum_{i,k}\ketbra{k}{k}\Phi^{i,i}_{k,k}\rho_{i,i}(\lambda-\mu ) \nonumber \\
			&\pushright{+\sum_{i,j,k\neq l}\ketbra{k}{l}\Phi^{i,j}_{k,l}\rho_{i,j}\left(\lambda-\mu e^{i(\phi_i-\phi_j)}\right),}
		\end{align} 
		where we made use of Eq.~\eqref{eq:indexDetIn} of App.~\ref{ap:index}.
		Using this representation, we get
		\begin{align}\label{eq:F_L}
			F_{\lambda,\vec{\phi}}&(\Theta,\Phi,\rho)=\sum_m\Bigg|\sum_{i,k}\Theta^{k,k}_{m,m}\Phi^{i,i}_{k,k}\rho_{i,i}(\lambda-\mu ) \nonumber \\
			&\pushright{+\sum_{i,j,k\neq l}\Theta^{k,l}_{m,m}\Phi^{i,j}_{k,l}\rho_{i,j}\left(\lambda-\mu e^{i(\phi_i-\phi_j)}\right) \Bigg| \nonumber} \\
			=&\sum_m\Bigg|\sum_{i,k}\Theta^{k,k}_{{m},{m}}\Phi^{i,i}_{k,k}\rho_{i,i}(\lambda-\mu )  \\
			&\pushright{+2\sum_{k>l,i,j}\Re\left(\Theta^{k,l}_{{m},{m}}\Phi^{i,j}_{k,l}\rho_{i,j}\left(\lambda-\mu e^{i(\phi_i-\phi_j)}\right) \right) \Bigg|.}\nonumber 
		\end{align}
		For a more convenient notation, we use the quantities
		\begin{equation*}
			A_m:= 2\sum_{k>l,i,j}\Re\left(\Theta^{k,l}_{{m},{m}}\Phi^{i,j}_{k,l}\rho_{i,j}\left(\lambda-\mu e^{i(\phi_i-\phi_j)}\right) \right)
		\end{equation*}
		from here on.
		In order for $M_{\lambda,\vec{\phi}}(\Theta)$ to be faithful, we need that $F_{\lambda,\vec{\phi}}(\Theta)>|\lambda-\mu | \ \forall\Theta\notin\mathcal{DI}$. 
		We further notice that, if the terms inside the absolute value in Eq.~(\ref{eq:F_L}) have  the same sign for all $m$, then $F_{\lambda,\vec{\phi}}(\Theta)=|\lambda-\mu |$ due to trace preservation. Hence, a necessary condition for $M_{\lambda,\vec{\phi}}(\Theta)>0$ is that there exists an $\tilde{m}$ such that
		\begin{equation*}
			\sum_{i,k}\Theta^{k,k}_{\tilde{m},\tilde{m}}\Phi^{i,i}_{k,k}\rho_{i,i}(\lambda-\mu )+A_m<0.
		\end{equation*}
		From here on, we consider a specific detecting operation that will violate this condition, namely
		\begin{equation}\label{eq:mixop}
			\tilde{\Theta}(\rho)=p_1 Q\rho Q^\dagger+p_2\rho,
		\end{equation} where $p_1+p_2=1$, $p_i\ge0$, and  $Q$ a costly unitary (e.g., a quantum Fourier transform). As long as $p_1>0$, $\tilde{\Theta}$ is thus non-free. With this choice, we find that
		\begin{equation*}
			\tilde{\Theta}^{k,k}_{m,m}=p_1 |Q_{m,k}|^2+p_2\delta_{m,k},
		\end{equation*}
		\begin{equation*}
			\tilde{\Theta}^{k,l}_{m,m}=p_1Q_{m,k}Q_{m,l}^* \quad\forall k\neq l,
		\end{equation*} where $Q_{m,k}$ are the matrix elements of the unitary $Q$. 
		For faithfulness to hold, we thus need that there exists an $\tilde{m}$ such that 
		\begin{equation*}
			\sum_{i,k}(p_1 |Q_{\tilde{m},k}|^2+p_2\delta_{\tilde{m},k})\Phi^{i,i}_{k,k}\rho_{i,i}(\lambda-\mu )+A_{\tilde{m}}<0,
		\end{equation*} 
		or, using $p_2=1-p_1$, that
		\small
		\begin{equation*}
			p_1\left((\lambda-\mu)\left(\sum_{i,k}|Q_{\tilde{m},k}|^2\Phi^{i,i}_{k,k}\rho_{i,i}-\sum_i\Phi^{i,i}_{\tilde{m},\tilde{m}}\rho_{i,i}\right)+A_{\tilde{m}} \right) 
		\end{equation*}
		\normalsize
		\begin{equation}
			<-(\lambda-\mu)\sum_i\Phi^{i,i}_{\tilde{m},\tilde{m}}\rho_{i,i}.\label{eq:p1}
		\end{equation}
		At this stage, we make some further assumptions. Firstly, we only consider the case $\lambda>\mu$ (the other case is analogous). Secondly, we choose the detecting channel such that its input dimension is two. This forces the output dimension of the pre-processing to be two as well. 
		
		To finish the proof, we now examine two different cases, beginning with the one in which $\exists\ m' \ \text{s.t.} \ \forall i \ \Phi^{i,i}_{m',m'}=0$. Then, using Lem.~\ref{lem:detect_two}, we obtain $\Phi^{i,j}_{k,l}=0 \ \forall k\neq l, \ \forall i,j$. This means that Eq.~(\ref{eq:p1}), assuming w.l.o.g. $\tilde{m}=0$, becomes 
		\begin{align*}
			&p_1(\lambda-\mu)|Q_{0,1}|^2<0 \qquad &\text{if} \ m'=0, \\
			&p_1(\lambda-\mu)|Q_{0,0}|^2<(\lambda-\mu)(p_1-1) \qquad &\text{if} \ m'=1,
		\end{align*} where we used the fact that the trace of a density operator is one. One easily verifies that the inequalities are independent of $\rho$ and $\Phi$ and cannot be satisfied in either case. 
		
		The second case is the one in which $\forall\ m \ \exists i \ \text{s.t.} \ \Phi^{i,i}_{m,m}\neq 0$. Recalling Proposition~\ref{prop:ind_pr}.1 of App.~\ref{ap:index}, we thus deduce that the right hand side of Eq.~(\ref{eq:p1}) is negative. We next choose $p_1>0$ such that
		\small
		\begin{equation*}
			p_1\leq\min_m\frac{(\lambda-\mu)\sum_i\Phi^{i,i}_{{m},{m}}\rho_{i,i}}{\left|\!(\lambda\!-\!\mu)\!\left(\sum_{i,k}|Q_{{m},k}|^2\Phi^{i,i}_{k,k}\rho_{i,i}\!-\!\sum_{i}\Phi^{i,i}_{{m},{m}}\rho_{i,i}\right)\!+\!A_m \!\right|} .
		\end{equation*}
		\normalsize
		Note that this is always possible, because all quantities in the above fraction are finite and the numerator cannot be zero by assumption. Moreover, when the denominator is zero, we choose an arbitrary $p_1$ with $0< p_1\le1$. For this choice of $p_1$, the inequality (Eq.~(\ref{eq:p1})) cannot be satisfied. 
		
		We conclude that, if $\lambda\neq\mu$, there exist non-free operations for which $M_{\lambda,\vec{\phi}}(\Theta)=0$.
	\end{proof}

	\begin{theorem}
		The functionals $N_{\lambda,\vec{\phi}}(\Theta)$ are convex measures in the creation incoherent setting for all $\lambda\in [0,1]$ and for all $\vec{\phi}\in\mathbbm{R}^M$.
	\end{theorem}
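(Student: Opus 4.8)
The plan is to mirror the proof of Thm.~\ref{theo:meas_pre}, but reading the compositions in the opposite order. Throughout I abbreviate $G_{\lambda,\vec{\phi}}(\Theta):=\max_{\Psi\in\mathcal{MIO}}\|(\lambda-\mu\Lambda_{\vec{\phi}})\Psi\Theta\Delta\|_1$, so that $N_{\lambda,\vec{\phi}}(\Theta)=G_{\lambda,\vec{\phi}}(\Theta)-|\lambda-\mu|$ and it suffices to establish the required properties for $G_{\lambda,\vec{\phi}}$. The central observation, used repeatedly, is that $\Lambda_{\vec{\phi}}$ leaves incoherent (diagonal) operators invariant, so $(\lambda-\mu\Lambda_{\vec{\phi}})$ acts on any diagonal $\sigma$ simply as the scalar $(\lambda-\mu)$, and that the leading $\Delta$ forces the relevant inputs to be incoherent.

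For nullity I would take $\Theta\in\mathcal{MIO}$ and note that then $\Psi\Theta\in\mathcal{MIO}$ for every $\Psi\in\mathcal{MIO}$, since $\mathcal{MIO}$ is closed under composition. Hence $\Psi\Theta\Delta(\rho)$ is incoherent for all $\rho$, so $(\lambda-\mu\Lambda_{\vec{\phi}})\Psi\Theta\Delta(\rho)=(\lambda-\mu)\Psi\Theta\Delta(\rho)$ has trace norm $|\lambda-\mu|$ on normalized inputs; thus $G_{\lambda,\vec{\phi}}(\Theta)=|\lambda-\mu|$ and $N=0$. Non-negativity follows because $\Lambda_{\vec{\phi}}$ is trace preserving, so for the choice $\Psi=\id$ and any normalized input the argument has trace $(\lambda-\mu)$, whence $\|\cdot\|_1\ge|\lambda-\mu|$ and $G_{\lambda,\vec{\phi}}(\Theta)\ge|\lambda-\mu|$. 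Convexity is then immediate: for each fixed $\Psi$ the map $\Theta\mapsto\|(\lambda-\mu\Lambda_{\vec{\phi}})\Psi\Theta\Delta\|_1$ is a norm composed with a map linear in $\Theta$, hence convex, and a pointwise maximum of convex functions is convex.

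The three monotonicity conditions are where the structure shows. For left composition, Eq.~\eqref{eq:sx}, I would compute $G_{\lambda,\vec{\phi}}(\Phi\Theta)$ with $\Phi\in\mathcal{MIO}$ and absorb $\Psi\Phi$ into a single free channel $\Psi'\in\mathcal{MIO}$; since $\Psi'$ then ranges only over a subset of $\mathcal{MIO}$, relaxing the constraint can only increase the maximum, giving $G_{\lambda,\vec{\phi}}(\Phi\Theta)\le G_{\lambda,\vec{\phi}}(\Theta)$. For right composition, Eq.~\eqref{eq:dx}, the free channel sits as $\Psi\Theta\Phi\Delta$; here I use that $\Phi\in\mathcal{MIO}$ maps the incoherent output of $\Delta$ to an incoherent state, i.e.\ $\Phi\Delta(\rho)=\Delta\Phi\Delta(\rho)$, so $\Theta\Phi\Delta(\rho)=\Theta\Delta(\rho')$ with $\rho':=\Phi\Delta(\rho)$ a sub-normalized incoherent state. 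Trace-norm contractivity of $\Phi\Delta$ guarantees $\rho'$ lies in the input ball, so the optimization reduces to a restricted one over $\rho'$ and the inequality follows.

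The main obstacle is the tensor condition, Eq.~\eqref{eq:ce}, which is the analogue of the step that used Lem.~\ref{lem:tensor} in Thm.~\ref{theo:meas_pre}; here the relevant tool is Lem.~\ref{lem:post_tensor}. I would write $\Psi(\Theta^{B\leftarrow A}\otimes\id^Z)\Delta^{AZ}(\rho)$ with the incoherent state $\Delta^{AZ}(\rho)$ and invoke Lem.~\ref{lem:post_tensor} to obtain a decomposition $\sum_i p_i\,\Psi_i^{C\leftarrow B}\Theta(\rho_{|i})$ with $\Psi_i\in\mathcal{MIO}$ and $\rho_{|i}$ incoherent, the key point being that the ancilla $Z$ is pushed through the free post-processing at the cost of producing a free $\Psi_i$ on the reduced system. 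Using $\rho_{|i}=\Delta(\rho_{|i})$, applying $(\lambda-\mu\Lambda_{\vec{\phi}})$ termwise, and bounding by the triangle inequality and convexity of the trace norm, each term is at most $G_{\lambda,\vec{\phi}}(\Theta)$, so $G_{\lambda,\vec{\phi}}(\Theta\otimes\id^Z)\le G_{\lambda,\vec{\phi}}(\Theta)$. A matching lower bound from restricting to product inputs and $\Psi=\Psi'\otimes\id^Z$ would give equality, paralleling the constancy shown for $M_{\lambda,\vec{\phi}}$, though only the displayed inequality is needed for Eq.~\eqref{eq:ce}. The delicate point throughout is the bookkeeping of the system on which $\Lambda_{\vec{\phi}}$ acts: because it sits to the \emph{left} of the free operation, one must verify that $\Psi$ (and each $\Psi_i$) maps onto the fixed output system carrying $\vec{\phi}$, which is exactly what Lem.~\ref{lem:post_tensor} secures.
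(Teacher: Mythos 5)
Your proposal follows essentially the same route as the paper's proof: the same nullity and non-negativity arguments, the same absorption of the free channel into $\Psi$ (left composition) and of the incoherent input into $\rho$ (right composition), and the same use of Lem.~\ref{lem:post_tensor} together with convexity of the trace norm for the tensor condition. The only cosmetic differences are that the paper obtains the (not strictly needed) reverse tensor inequality via a partial trace rather than your product ansatz, and that $\Phi\Delta(\rho)$ is in fact exactly normalized rather than sub-normalized; neither affects the argument.
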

	
	\begin{proof} Large parts of this proof are very similar to the proof of Thm.~\ref{theo:meas_pre}. For completeness, we give the full proof nevertheless, starting with nullity. Let us assume that $\Theta\in\mathcal{MIO}$, i.e., $\Theta \Delta=\Delta \Theta \Delta$. If $\Psi$ is also creation incoherent, then $\Psi\Theta \Delta=\Delta \Psi\Theta \Delta$.  
		Noticing that $\Lambda_{\vec{\phi}} \Delta=\Delta$, we thus get
		\begin{align*}
			N_{\lambda,\vec{\phi}} &(\Theta)=\max_{\Psi\in\mathcal{MIO}}\left|\left|\left(\lambda-\mu\Lambda_{\vec{\phi}}\right) \Psi \Theta\Delta\right|\right|_1-|\lambda-\mu | \\
			&=\max_{\Psi\in\mathcal{MIO}}\left|\left|\left(\lambda-\mu\Lambda_{\vec{\phi}}\right) \Delta \Psi \Theta \Delta\right|\right|_1-|\lambda-\mu | \\
			&=\max_{\Psi\in\mathcal{MIO}}\left|\left|\left(\lambda-\mu\right)\Delta \Psi \Theta \Delta\right|\right|_1-|\lambda-\mu | \\
			&=|\lambda-\mu |\max_{\Psi\in\mathcal{MIO}}\left|\left|\Delta \Psi \Theta \Delta\right|\right|_1-|\lambda-\mu | \\
			&=|\lambda-\mu |-|\lambda-\mu |=0.
		\end{align*}
		Non-negativity holds due to
		\begin{align*}
			N_{\lambda,\vec{\phi}}&(\Theta)=\max_{\Psi\in\mathcal{MIO},\rho\in\mathcal{I}}\Tr\left|\left(\lambda-\mu\Lambda_{\vec{\phi}}\right) \Psi \Theta(\rho)\right|-|\lambda-\mu | \\
			&\geq\max_{\Psi\in\mathcal{MIO},\rho\in\mathcal{I}}\left|\Tr\left(\left(\lambda-\mu\Lambda_{\vec{\phi}}\right) \Psi \Theta(\rho)\right)\right|-|\lambda-\mu | \\
			&=|\lambda-\mu |-|\lambda-\mu |=0.
		\end{align*}
		
		We now proceed with the three proofs of monotonicity, again for the functionals
		\begin{equation}
			G_{\lambda,{\vec{\phi}}}(\Theta):=\max_{\Psi\in\mathcal{MIO}}\left|\left|\left(\lambda-\mu \Lambda_{\vec{\phi}}\right) \Psi \Theta\Delta\right|\right|_1, \label{eq:G_L}
		\end{equation} since the extensions to $N_{\lambda,{\vec{\phi}}}(\Theta)$ are straightforward.
		
		We begin showing that all $G_{\lambda,{\vec{\phi}}}(\Theta)$ satisfy Eq.~(\ref{eq:sx}). Let us assume $\Theta_0\in\mathcal{MIO}$. Then
		\begin{align*}
			G_{\lambda,\vec{\phi}}&(\Theta_0 \Theta)=\max_{\Psi\in\mathcal{MIO},\rho\in\mathcal{I}}\Tr\left|\left(\lambda-\mu\Lambda_{\vec{\phi}}\right) \Psi \Theta_0 \Theta(\rho)\right| \\
			&=\max_{\Psi'=\Psi \Theta_0\in\mathcal{MIO},\rho\in\mathcal{I}}\Tr\left|\left(\lambda-\mu\Lambda_{\vec{\phi}}\right) \Psi' \Theta(\rho)\right| \\
			&\leq\max_{\Psi'\in\mathcal{MIO},\rho\in\mathcal{I}}\Tr\left|\left(\lambda-\mu\Lambda_{\vec{\phi}}\right) \Psi' \Theta(\rho)\right|=G_{\lambda,\vec{\phi}}(\Theta).
		\end{align*}
		Eq.~(\ref{eq:dx}) is always satisfied too because, with $\Theta_0\in\mathcal{MIO}$ again, we have
		\begin{align*}
			G_{\lambda,\vec{\phi}}&(\Theta \Theta_0)=\max_{\Psi\in\mathcal{MIO},\rho\in\mathcal{I}}\Tr\left|\left(\lambda-\mu\Lambda_{\vec{\phi}}\right) \Psi \Theta \Theta_0(\rho)\right| \\
			&=\max_{\Psi\in\mathcal{MIO},\rho'=\Theta_0(\rho)\in\mathcal{I}}\Tr\left|\left(\lambda-\mu\Lambda_{\vec{\phi}}\right) \Psi \Theta\left(\rho'\right)\right| \\
			&\leq\max_{\Psi\in\mathcal{MIO},\rho'\in\mathcal{I}}\Tr\left|\left(\lambda-\mu\Lambda_{\vec{\phi}}\right) \Psi \Theta(\rho')\right|=G_{\lambda,\vec{\phi}}(\Theta).
		\end{align*}
		To conclude, we need to prove validity of Eq.~(\ref{eq:ce}). Again, we will show that the $G_{\lambda,\vec{\phi}}(\Theta)$ are constant under tensor product by establishing inequalities in both directions. We begin with
		\begin{align*}
			&G_{\lambda,\vec{\phi}}\left(\Theta^{B\leftarrow A}\otimes\id^Z\right)\\
			&=\max_{\Psi^{C\leftarrow BZ}\in\mathcal{MIO},\rho_{AZ}\in\mathcal{I}}\bigg|\bigg|\left(\lambda-\mu\Lambda_{\vec{\phi}}\right) \\
			&\pushright{ \Psi^{C\leftarrow BZ} \left(\Theta^{B\leftarrow A}\otimes\id^Z\right)(\rho_{AZ})\bigg|\bigg|_1} \\
			&=\max_{\Psi^{C\leftarrow BZ}\in\mathcal{MIO},\rho_{AZ}\in\mathcal{I}}\bigg|\bigg|\left(\lambda-\mu\Lambda_{\vec{\phi}}\right) \\
			&\pushright{\sum_ip_i(\rho_{AZ})\Psi_i^{C\leftarrow B}(\Psi^{C\leftarrow BZ}) \Theta^{B\leftarrow A}\left(\rho_{|i}(\rho_{AZ})\right)\bigg|\bigg|_1} \\
			&\leq\max_{p_i,\Psi_i^{C\leftarrow B}\in\mathcal{MIO},\rho_{|i}\in\mathcal{I}}\bigg|\bigg|\left(\lambda-\mu\Lambda_{\vec{\phi}}\right) \\
			&\pushright{\sum_ip_i\Psi_i^{C\leftarrow B} \Theta^{B\leftarrow A}(\rho_{|i})\bigg|\bigg|_1} \\
			&\leq\max_{p_i,\Psi_i^{C\leftarrow B}\in\mathcal{MIO},\rho_{|i}\in\mathcal{I}}\sum_ip_i \\
			&\pushright{ \left|\left|\left(\lambda-\mu\Lambda_{\vec{\phi}}\right) \Psi_i^{C\leftarrow B} \Theta^{B\leftarrow A}(\rho_{|i})\right|\right|_1} \\
			&\leq\max_{\Psi^{C\leftarrow B}\in\mathcal{MIO},\rho\in\mathcal{I}}\left|\left|\left(\lambda-\mu\Lambda_{\vec{\phi}}\right) \Psi^{C\leftarrow B} \Theta^{B\leftarrow A}(\rho)\right|\right|_1 \\
			&=G_{\lambda,\vec{\phi}}(\Theta^{B\leftarrow A}),
		\end{align*} 
		where the second equality is to be understood in the sense of Lem.~\ref{lem:post_tensor}. The first inequality is due to an extension of the set over which we maximize and the second inequality follows from convexity of the trace norm.
		
		Finally, we prove the inverse inequality with
		\begin{align*}
			G_{\lambda,\vec{\phi}}(\Theta^{B\leftarrow A}\otimes\id^Z)&\geq G_{\lambda,\vec{\phi}}\left(\Tr_Z \left(\Theta^{B\leftarrow A}\otimes\id^Z\right)\right) \\
			&=G_{\lambda,\vec{\phi}}(\Theta^{B\leftarrow A}),
		\end{align*}
		where we used monotonicity under right composition, and conclude that the post-processed improvements are constant under tensor product with the identity channel.
		
		Also here, convexity follows straightforwardly from convexity of the trace norm. The post-processed improvements are thus convex measures in the detection incoherent setting.
	\end{proof}

	\begin{theorem}
		The functionals $N_{\frac{1}{2},\vec{\phi}}(\Theta)$ are faithful for all $\vec{\phi}\in\mathbbm{R}^M$ with at least two different components.
	\end{theorem}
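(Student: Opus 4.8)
Since nullity and non-negativity of $N_{\frac12,\vec\phi}$ are already established in Thm.~\ref{theo:meas_post}, the only content of faithfulness is the converse implication: every $\Theta\notin\mathcal{MIO}$ satisfies $N_{\frac12,\vec\phi}(\Theta)>0$. The plan is to run the argument dual to the faithfulness proof of $M_{\frac12,\vec\phi}$, with input and output interchanged. There, a free \emph{pre}-processing $\Phi_{\mathrm{SWAP}}$ aligned the detection subspace of $\Theta$ with the phase-encoding subspace of $\Lambda_{\vec\phi}$; here I would use a free \emph{post}-processing $\Psi\in\mathcal{MIO}$ to align the subspace in which $\Theta$ \emph{creates} coherence with a phase-sensitive subspace.

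First I would observe that for $\lambda=\mu=\frac12$ one has $|\lambda-\mu|=0$ and $\lambda-\mu\Lambda_{\vec\phi}=\tfrac12(\id-\Lambda_{\vec\phi})$, so $N_{\frac12,\vec\phi}(\Theta)=\tfrac12\max_{\Psi\in\mathcal{MIO}}\|(\id-\Lambda_{\vec\phi})\Psi\Theta\Delta\|_1$. It therefore suffices to produce a single incoherent input state and a single $\Psi\in\mathcal{MIO}$ for which the trace norm is strictly positive. By assumption $\Theta\notin\mathcal{MIO}$, so the index characterization of creation incoherent channels (Eq.~\eqref{eq:indexCreIn}) supplies indices $\tilde i,\tilde k,\tilde l$ with $\tilde k\neq\tilde l$ and $\Theta^{\tilde i,\tilde i}_{\tilde k,\tilde l}\neq0$; equivalently, the incoherent state $\tau=\ketbra{\tilde i}{\tilde i}$ (which satisfies $\Delta\tau=\tau$) is mapped by $\Theta$ to a state with nonzero $(\tilde k,\tilde l)$ coherence. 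Because $\vec\phi$ has at least two different components, I would fix a pair $\tilde p\neq\tilde q$ with $\phi:=\phi_{\tilde p}-\phi_{\tilde q}\neq0$ and take $\Psi=\Psi_\pi$ to be the permutation unitary of the incoherent basis with $\pi(\tilde k)=\tilde p$ and $\pi(\tilde l)=\tilde q$. Such a permutation maps incoherent states to incoherent states and hence lies in $\mathcal{MIO}$, and it relocates the created coherence into the block on which $\Lambda_{\vec\phi}$ acts nontrivially.

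The key computation is then that $(\id-\Lambda_{\vec\phi})\Psi_\pi\Theta(\tau)$ has all diagonal entries annihilated, since $\Lambda_{\vec\phi}$ leaves populations invariant, while its $(\tilde p,\tilde q)$ entry equals $c:=\Theta^{\tilde i,\tilde i}_{\tilde k,\tilde l}\,(1-e^{i\phi})$. To turn this into a lower bound on the trace norm I would compress the Hermitian output onto the two-dimensional subspace $\vspan\{\ket{\tilde p},\ket{\tilde q}\}$; since $\|PXP\|_1\le\|X\|_1$ for any projector $P$, and the resulting $2\times2$ block is the off-diagonal matrix with entries $c$ and $c^*$ (trace norm $2|c|$), this gives $N_{\frac12,\vec\phi}(\Theta)\ge\tfrac12\cdot2|c|=|\Theta^{\tilde i,\tilde i}_{\tilde k,\tilde l}|\,|1-e^{i\phi}|>0$, using $\Theta^{\tilde i,\tilde i}_{\tilde k,\tilde l}\neq0$ and $\phi\neq0$.

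I expect the only real obstacle to be bookkeeping rather than conceptual difficulty: one must confirm that $\Psi_\pi$ is genuinely free and treat index collisions when extending the partial assignment $\tilde k\mapsto\tilde p,\ \tilde l\mapsto\tilde q$ to a full permutation (which is always possible since $\tilde k\neq\tilde l$ and $\tilde p\neq\tilde q$). Notably, in contrast to the detection case no appeal to trace preservation (Prop.~\ref{prop:ind_pr}.3) or to the triangle inequality over many diagonal terms is needed, because here the compression isolates a single coherent $2\times2$ block instead of a sum of populations.
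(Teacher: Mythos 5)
Your proposal is correct and follows essentially the same route as the paper's proof: identify $\tilde i,\tilde k\neq\tilde l$ with $\Theta^{\tilde i,\tilde i}_{\tilde k,\tilde l}\neq 0$ via the index characterization, feed in $\ketbra{\tilde i}{\tilde i}$, align the created coherence with a phase-sensitive pair through a free permutation unitary, and extract the lower bound $|\Theta^{\tilde i,\tilde i}_{\tilde k,\tilde l}|\,|1-e^{i\phi}|>0$ from a purely off-diagonal $2\times2$ block. The only (harmless) difference is that the paper appends a second free pinching channel (Kraus operators $K=\ketbra{\tilde m}{\tilde m}+\ketbra{\tilde n}{\tilde n}$, $L_j=\ketbra{j}{j}$) to $\tilde\Psi$ so that the operator $A$ is exactly that $2\times2$ block, whereas you discard the extraneous coherences afterwards via the contraction $\left|\left|PXP\right|\right|_1\leq\left|\left|X\right|\right|_1$ — both are valid.
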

	
	\begin{proof} The proof is similar to the corresponding part of the proof of Thm.~\ref{thm:faithfulnessPre}. By assumption,  $\Lambda_{\vec{\phi}}$ cannot add only a global phase, thus
		\begin{equation*}
			\exists \tilde{m},\tilde{n} \quad \text{s.t.} \quad \tilde{m}\neq\tilde{n} \ \text{and} \ \phi_{\tilde{m}}\neq\phi_{\tilde{n}}.
		\end{equation*}
		We further assume that the operation $\Theta:A\rightarrow B$ is not free, i.e., 
		\begin{equation*}
			\exists \tilde{i},\tilde{k},\tilde{l} \quad \text{s.t.} \quad \tilde{k}\neq\tilde{l} \ \text{and} \ \Theta^{\tilde{i},\tilde{i}}_{\tilde{k},\tilde{l}}\neq 0.
		\end{equation*}
		As in the proof of Thm.~\ref{thm:faithfulnessPre}, we now consider an explicit choice of an incoherent state, $\tilde{\rho}=\ketbra{\tilde{i}}{\tilde{i}}$, and a free operation $\tilde{\Psi}$
		that is composed of two creation incoherent channels. The first one is a  SWAP operation defined via the unitary
		\begin{align*}
			U_{\text{SWAP}}&=\ketbra{\tilde{m}}{\tilde{k}}+\ketbra{\tilde{k}}{\tilde{m}}+\ketbra{\tilde{n}}{\tilde{l}}+\ketbra{\tilde{l}}{\tilde{n}}+\mathbbm{1} \\
			&\quad -\ketbra{\tilde{m}}{\tilde{m}}-\ketbra{\tilde{k}}{\tilde{k}}-\ketbra{\tilde{n}}{\tilde{n}}-\ketbra{\tilde{l}}{\tilde{l}}
		\end{align*} 
		that aligns again the relevant subspaces (see the proof of Thm.~\ref{thm:faithfulnessPre})
		and the second channel is defined via Kraus operators
		\begin{equation*}
			K:=\ketbra{\tilde{m}}{\tilde{m}}+\ketbra{\tilde{n}}{\tilde{n}}, \qquad L_j:=\ketbra{j}{j}\quad\forall j \neq\tilde{m},\tilde{n}.
		\end{equation*}
		We therefore have
		\begin{equation*}
			\Theta(\tilde{\rho})=\sum_{k,l}\Theta^{\tilde{i},\tilde{i}}_{k,l}\ketbra{k}{l},
		\end{equation*}
		and
		\begin{align*}
			&\tilde{\Psi} \Theta(\tilde{\rho})\\
			&=\Theta^{\tilde{i},\tilde{i}}_{\tilde{k},\tilde{l}}\ketbra{\tilde{m}}{\tilde{n}}+\Theta^{\tilde{i},\tilde{i}}_{\tilde{l},\tilde{k}}\ketbra{\tilde{n}}{\tilde{m}}+\Theta^{\tilde{i},\tilde{i}}_{\tilde{k},\tilde{k}}\ketbra{\tilde{m}}{\tilde{m}}+\Theta^{\tilde{i},\tilde{i}}_{\tilde{l},\tilde{l}}\ketbra{\tilde{n}}{\tilde{n}} \\
			&\quad+\Theta^{\tilde{i},\tilde{i}}_{\tilde{m},\tilde{m}}\ketbra{\tilde{k}}{\tilde{k}}+\Theta^{\tilde{i},\tilde{i}}_{\tilde{n},\tilde{n}}\ketbra{\tilde{l}}{\tilde{l}}+\sum_{j\notin \{ \tilde{k},\tilde{l},\tilde{m},\tilde{n}\}}\Theta^{\tilde{i},\tilde{i}}_{j,j}\ketbra{j}{j}.
		\end{align*}
		This time, we define
		\begin{align*}
			A:=&\left(\id-\Lambda_{\vec{\phi}}\right) \tilde{\Psi} \Theta(\tilde{\rho}) \\
			=&\left(1-e^{i(\phi_{\tilde{m}}-\phi_{\tilde{n}})}\right)\Theta^{\tilde{i},\tilde{i}}_{k,l}\ketbra{\tilde{m}}{\tilde{n}} \\
			&\quad +\left(1-e^{-i(\phi_{\tilde{m}}-\phi_{\tilde{n}})}\right)\Theta^{\tilde{i},\tilde{i}}_{l,k}\ketbra{\tilde{n}}{\tilde{m}},
		\end{align*} 
		and have thus
		\begin{align*}
			A^\dagger A&=\left|\left(1-e^{i(\phi_{\tilde{m}}-\phi_{\tilde{n}})}\right)\Theta^{\tilde{i},\tilde{i}}_{k,l}\right|^2\ketbra{\tilde{m}}{\tilde{m}} \\
			&\quad+\left|\left(1-e^{i(\phi_{\tilde{m}}-\phi_{\tilde{n}})}\right)\Theta^{\tilde{i},\tilde{i}}_{k,l}\right|^2\ketbra{\tilde{n}}{\tilde{n}}
		\end{align*}
		and
		\begin{align*}
			\sqrt{A^\dagger A}&=\left|\left(1-e^{i(\phi_{\tilde{m}}-\phi_{\tilde{n}})}\right)\Theta^{\tilde{i},\tilde{i}}_{k,l}\right|\left(\ketbra{\tilde{m}}{\tilde{m}}+\ketbra{\tilde{n}}{\tilde{n}}\right).
		\end{align*}
		Now, since $\phi_{\tilde{m}}\neq\phi_{\tilde{n}}$ and $\Theta^{\tilde{i},\tilde{i}}_{\tilde{k},\tilde{l}}\neq 0$, we find
		\begin{align*}
			||A||_1=\Tr\sqrt{A^\dagger A}&=2\left|\left(1-e^{i(\phi_{\tilde{m}}-\phi_{\tilde{n}})}\right)\Theta^{\tilde{i},\tilde{i}}_{k,l}\right|>0
		\end{align*}
		and finally
		\begin{align*}
			N_{\frac{1}{2},\vec{\phi}}(\Theta)&=\frac{1}{2}\max_{\Psi\in\mathcal{MIO},\rho\in\mathcal{I}}\left|\left|\left(\id-\Lambda_{\vec{\phi}}\right) \Psi \Theta(\rho)\right|\right|_1\\
			&\ge\frac{1}{2}\left|\left|\left(\id-\Lambda_{\vec{\phi}}\right) \tilde{\Psi} \Theta(\tilde{\rho})\right|\right|_1=\frac{1}{2}||A||_1>0.
		\end{align*} 
		Together with Thm.~\ref{theo:meas_post}, this finishes the proof. 
	\end{proof}

	\section{The need for an optimal pre-processing}\label{ap:preproc}
	As mentioned in Sec.~\ref{sec:main}, in this Appendix, we show the necessity of the optimal pre-processing in the definitions of $M_{\lambda,\vec{\phi}}(\Theta)$ (see Eq.~\eqref{eq:pre_proc}).
	To this end, we assume that Bob cannot apply the optimal pre-processing prior to the operation $\Theta$. The analogues of $M_{\lambda,\vec{\phi}}(\Theta)$ are then 
	\begin{equation}
		L_{\lambda,\vec{\phi}}(\Theta):=\left|\left|\Delta \Theta \left(\lambda-\mu \Lambda_{\vec{\phi}}\right)\right|\right|_1-|\lambda-\mu|.
	\end{equation}
	Via the construction of an explicit counterexample, we now show that the $L_{\lambda,\vec{\phi}}(\Theta)$ are not monotonic in general. 
	
	To this end, we provide Bob with a detecting operation of the form $\tilde{\Theta}=\Theta^A\otimes\id^B$, where $\Theta^A\notin\mathcal{DI}$ and $\dim(A)=\dim(B)$. In addition, we choose $\lambda=\mu=\frac{1}{2}$ and $\Lambda_{\vec{\phi}}:=\id^A\otimes\Lambda_{\vec{\phi}'}^B$. The resulting guessing game for Alice and Bob is represented in Fig.~\ref{fig:notSWAP}, 
	\begin{figure}[ht]
		\includegraphics[width=.95\linewidth]{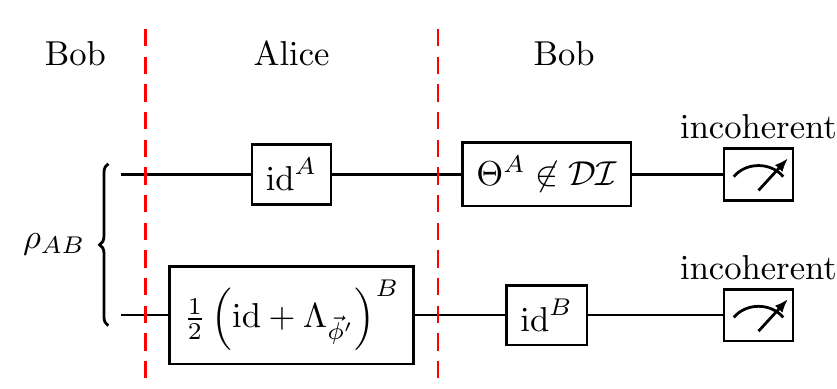}
		\caption{Schematic representation of the guessing game described in the main text.}\label{fig:notSWAP}
	\end{figure}
	and it is straightforward to verify that $L_{\vec{\phi}}(\tilde{\Theta})=0$, because $\tilde{\Theta}$ detects in subspace where $\Lambda_{\vec{\phi}}$ encodes no information. We now define a superchannel $S_0$ acting as 
	\begin{equation}
		S_0(\Theta)=\Theta \Theta_0, 
	\end{equation}
	where $\Theta_0$ is a SWAP channel that exchanges the systems $A$ and $B$, i.e., 
	\begin{equation}
		\Theta_{0}\left(\sum_{i,j,k,l}\rho_{ij,kl}\ketbra{ij}{kl}_{AB}\right)=\sum_{i,j,k,l}\rho_{ji,lk}\ketbra{ij}{kl}_{AB}.\label{eq:swapcounter}
	\end{equation} 
	The superchannel $S_0$ is free since $\Theta_0$ is free (it only relabels Hilbert spaces). Applying $S_0$ to $\tilde{\Theta}$, we obtain the situation represented in Fig.~\ref{fig:SWAP}, from where we deduce that $L_{\vec{\phi}}(S_0(\tilde{\Theta}))$ can be different from zero for specific choices of $\vec{\phi}'$ and $\Theta^A$, because now we detect in the subspace in which information is encoded. 
	\begin{figure}[ht]
		\includegraphics[width=.95\linewidth]{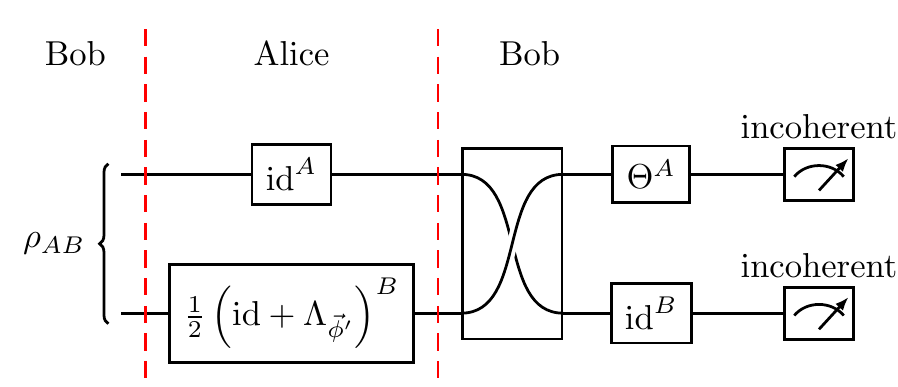}
		\caption{Addition of a SWAP operation to the game in Fig.~\ref{fig:notSWAP}, which shows that $L_{\lambda,\vec{\phi}}(\Theta)$ is not monotonic.}\label{fig:SWAP}
	\end{figure}
	We conclude that the functionals $L_{\lambda,\vec{\phi}}(\Theta)$ are in general not  measures in the detection incoherent setting.

	\section{Evaluation of the pre-processed improvements}\label{ap:eval}
	\setcounter{theorem}{10}
	Here we show how one can evaluate the pre-processed improvements numerically and how this leads to an optimal state and pre-processing. We begin with two Lemmas.
	
	\begin{lem}\label{lem:prelem1}
		The two sets 
		\begin{align*}
			\mathcal{X}:= &\Bigg\{ X_{AB}=\id^A\otimes \Theta^{B\leftarrow A} \sum_{i,j} \rho_{i,j} \ketbra{ii}{jj}_{AA} \Bigg| \\
			&\pushright{ \Theta \text{ CPTP, } \rho=\sum_{i,j}\rho_{i,j}\ketbra{i}{j} \text{ quantum state} \Bigg\},} \\
			\mathcal{Y}:= &\left\{Y_{AB}\middle \vert Y\ge0 ,\ \Tr_BY=\Delta (\sigma_A), \ \sigma_A  \text{ quantum state} \right\}
		\end{align*}
		are equal.	
	\end{lem}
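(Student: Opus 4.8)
The plan is to establish the two inclusions separately. Throughout, write $V:=\sum_i\ketbra{ii}{i}$ for the isometry embedding $A$ into $AA$, so that $\sum_{i,j}\rho_{i,j}\ketbra{ii}{jj}=V\rho V^\dagger$ and every element of $\mathcal{X}$ takes the form $X_{AB}=(\id^A\otimes\Theta^{B\leftarrow A})(V\rho V^\dagger)$. The inclusion $\mathcal{X}\subseteq\mathcal{Y}$ is then immediate: $V\rho V^\dagger\ge0$ because $\rho\ge0$, and $\id^A\otimes\Theta$ is completely positive, so $X\ge0$; moreover, using trace preservation of $\Theta$ one finds $\Tr_B X=\Tr_{A_2}(V\rho V^\dagger)=\Delta(\rho)$, which is diagonal. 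Hence $X\in\mathcal{Y}$ with the choice $\sigma_A=\rho$.

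The reverse inclusion $\mathcal{Y}\subseteq\mathcal{X}$ is the substantial direction. Given $Y\ge0$ with $\Tr_B Y=\Delta(\sigma_A)=:\sum_k p_k\ketbra{k}{k}$, I would introduce the operator blocks $Y_{k,l}:=\bra{k}_A Y\ket{l}_A$ acting on $B$, which by assumption satisfy $\Tr(Y_{k,l})=p_k\delta_{k,l}$. Since any element of $\mathcal{X}$ obeys $\bra{k}_A X\ket{l}_A=\rho_{k,l}\,\Theta(\ketbra{k}{l})$, the goal is to produce a state $\rho$ and a channel $\Theta$ with $\rho_{k,l}\,\Theta(\ketbra{k}{l})=Y_{k,l}$ for all $k,l$. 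First I would restrict attention to the support $S:=\{k:p_k>0\}$: positivity of $Y$ forces $Y_{k,k}=0$ whenever $p_k=0$ (a positive semidefinite matrix with a vanishing diagonal block has vanishing blocks throughout the corresponding row and column), so all blocks indexed outside $S$ vanish and may be ignored.

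On $S$, I would make the concrete choice $\rho=\ketbra{\psi}{\psi}$ with $\ket{\psi}=\sum_{k\in S}\sqrt{p_k}\ket{k}$, so that $\rho_{k,l}=\sqrt{p_kp_l}>0$ and $\Delta(\rho)=\Delta(\sigma_A)$, and then define $\Theta$ on matrix units by $\Theta(\ketbra{k}{l}):=Y_{k,l}/\sqrt{p_kp_l}$. Trace preservation of $\Theta$ is automatic, since $\Tr\,\Theta(\ketbra{k}{l})=\Tr(Y_{k,l})/\sqrt{p_kp_l}=\delta_{k,l}$. The main obstacle is verifying complete positivity of $\Theta$. Here the key observation is that the Choi operator $J_\Theta=\sum_{k,l}\ketbra{k}{l}\otimes\Theta(\ketbra{k}{l})$ is obtained from $Y\ge0$ by scaling its $(k,l)$ block by $1/\sqrt{p_kp_l}$; that is, it is the block-Hadamard product of $Y$ with the rank-one positive semidefinite matrix $C=\ketbra{a}{a}$, $\ket{a}=\sum_{k\in S}p_k^{-1/2}\ket{k}$. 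The Schur product theorem then yields $J_\Theta\ge0$, so $\Theta$ is completely positive.

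Finally I would extend $\Theta$ from $S$ to all of $A$ (for instance by mapping the orthogonal complement of $S$ to a fixed state) to obtain a genuine CPTP map; since $\rho$ is supported on $S$ this leaves $X$ unchanged, and by construction $X=\sum_{k,l}\rho_{k,l}\ketbra{k}{l}\otimes\Theta(\ketbra{k}{l})=\sum_{k,l}\ketbra{k}{l}\otimes Y_{k,l}=Y$, giving $Y\in\mathcal{X}$. I expect the Schur-product step, together with the careful treatment of the degenerate case $p_k=0$, to be the only genuinely nontrivial points; the remaining verifications are routine.
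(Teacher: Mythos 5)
Your proof is correct, and for the nontrivial inclusion $\mathcal{Y}\subseteq\mathcal{X}$ it takes a genuinely different route from the paper's. The paper purifies $Y$ to a vector $\ket{\psi}_{ABC}$, notes that $\ket{\psi}$ is simultaneously a purification of $\Delta(\sigma_A)$, and invokes the isometric freedom of purifications to write $\ket{\psi}=\sum_i\sqrt{\sigma_i}\ket{i}_A\otimes V\ket{i}_{\tilde A}$; the channel then appears in Stinespring form $\tilde\Theta(\tau)=\Tr_C[V\tau V^\dagger]$, so complete positivity and trace preservation come for free from the structure of the construction. You instead define the same map directly on matrix units via the rescaled blocks $\Theta(\ketbra{k}{l})=Y_{k,l}/\sqrt{p_kp_l}$ and verify complete positivity at the level of the Choi operator by a Schur-product argument; this is valid (the block rescaling is the entrywise product of $Y$ with the positive semidefinite matrix $\ketbra{a}{a}\otimes J_B$, $J_B$ the all-ones matrix, or even more directly the conjugation $DYD$ with $D=\sum_{k\in S}p_k^{-1/2}\ketbra{k}{k}\otimes\mathbbm{1}_B$, which is manifestly positive). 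Both proofs produce the identical optimal pair $(\rho,\Theta)$ and handle the degenerate populations $p_k=0$ in the same way, by restricting to the support and extending the channel trivially on the complement; this explicitness matters for the paper, since App.~E uses the constructive form of the proof to read off $\rho_{\text{opt}}$ and $\Phi_{\text{opt}}$ from the SDP solution, and your version supports that equally well. The trade-off is that the purification argument is shorter and conceptually cleaner, while your Choi-matrix argument is more elementary and makes the positivity mechanism completely transparent without appealing to the uniqueness of purifications.
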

	\begin{proof}
		First we show that every $X\in \mathcal{X}$ is also an element of $\mathcal{Y}$.
		Assume $X\in \mathcal{X}$. From this follows $X\ge0$ because we apply a quantum operation to a quantum state, leading to another quantum state. In addition
		\begin{align*}
			\Tr_B X = \sum_{i,j} \rho_{i,j} \ketbra{i}{j}_A \Tr\left( \Theta\left[ \ketbra{i}{j}_A\right] \right)=\sum_{i} \rho_{i,i}\ketbra{i}{i}_A
		\end{align*}
		is a diagonal state since $\rho$ is a state by assumption.
		
		To prove the reverse direction, we assume that $Y\in \mathcal{Y}$. From this assumption follows that $Y$ is a quantum state. Every purification $Z_{ABC}=\ketbra{\psi}{\psi}$ of $Y$, i.e., $\Tr_CZ=Y$, is by assumption also a purification of a $\Delta (\sigma_A)=\sum_i \sigma_i \ketbra{i}{i}_A$. Let us define the set $S=\left\{ i : \sigma_i\neq 0\right\}$ and $\tilde{A}=\vspan\left\{\ket{i}\in S\right\}$. Using the isometric freedom in purifications we find
		\begin{align*}
			\ket{\psi}=\sum_i \sqrt{\sigma_{i}}\ket{i}_A\otimes V^{BC\leftarrow \tilde{A}}\ket{i}_{\tilde{A}}
		\end{align*}
		where $V$ is an isometry, which we use to define the map 
		\begin{align*}
			\tilde{\Theta}(\tau)=\Tr_C\left[ V^{BC\leftarrow \tilde{A}} \tau \left(V^{CB\leftarrow \tilde{A}}\right)^\dagger\right]
		\end{align*}
		that is CPTP by construction. Then
		\small
		\begin{align}\label{eq:OpOnReducedSpace}
			Y=&\Tr_C\left[\sum_{i,j}\sqrt{\sigma_i \sigma_j}\ketbra{i}{j}_A\otimes V^{BC\leftarrow \tilde{A}} \ketbra{i}{j}_{\tilde{A}} \left(V^{CB\leftarrow \tilde{A}}\right)^\dagger\right] \nonumber \\
			=&\sum_{i,j}\sqrt{\sigma_i \sigma_j} \ketbra{i}{j}_A\otimes\Tr_C\left[ V^{BC\leftarrow \tilde{A}} \ketbra{i}{j}_{\tilde{A}} \left(V^{CB\leftarrow \tilde{A}}\right)^\dagger\right] \nonumber \\
			=&\id^A\otimes \tilde{\Theta}^{B\leftarrow \tilde{A}} \sum_{i,j} \sqrt{\sigma_i \sigma_j}\ketbra{ii}{jj}_{A\tilde{A}}.
		\end{align}
		\normalsize
		Now, we introduce the map $\Pi$ defined by Kraus operators
		\begin{equation}\label{eq:ProjectionOp}
			K:=\sum_{i\in S}\ket{i}_{\tilde{A}}\bra{i}_A, \qquad L_j:=\ket{\psi}_{\tilde{A}}\bra{j}_A \quad\forall j\not\in S,
		\end{equation} where $\ket{\psi}_{\tilde{A}}$ is a normalized quantum state. \\
		It is easy to check that $K^\dagger K+\sum_{j\not\in S}L_j^\dagger L_j=\mathbbm{1}_A$ and that the map
		\begin{equation*}
			\Theta^{B\leftarrow A}:= \tilde{\Theta}^{B\leftarrow {\tilde{A}}}\Pi^{{\tilde{A}}\leftarrow A}
		\end{equation*} is such that 
		\begin{equation*}
			Y=\id^A\otimes {\Theta}^{B\leftarrow {A}} \sum_{i,j} \sqrt{\sigma_i \sigma_j}\ketbra{ii}{jj}_{AA}.
		\end{equation*}
		Since $\tilde{\sigma}=\sum_{i,j} \sqrt{\sigma_i \sigma_j}\ketbra{i}{j}=\ketbra{\phi}{\phi}:\ \ket{\phi}=\sum_i \sqrt{\sigma_i}\ket{i}$ is by assumption a valid quantum state, we showed that $Y$ is an element of $\mathcal{X}$. 
	\end{proof}
	
	\begin{lem}\label{lem:equiv}
		The two sets 
		\begin{align*}
			\mathcal{X}:= &\Bigg\{ X_{AB}=\id^A\otimes \Phi^{B\leftarrow A} \sum_{i,j} \rho_{i,j} \ketbra{ii}{jj}_{AA} \Bigg| \\
			&\pushright{ \Phi \in \mathcal{DI},\ \rho=\sum_{i,j}\rho_{i,j}\ketbra{i}{j} \text{ quantum state} \Bigg\},} \\
			\mathcal{Y}:= &\{Y_{AB} | Y\ge0,\ \Tr_BY=\Delta (\sigma_A),\ \\
			&\quad \diag\left(\bra{i}_AY\ket{j}_A\right)=0 \ \forall i\ne j,\ \sigma_A  \text{ quantum state} \}
		\end{align*}
		are equal.	
	\end{lem}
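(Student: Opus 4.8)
The plan is to treat this as a refinement of Lemma~\ref{lem:prelem1}: the sets differ only in that $\Phi$ is now required to be detection incoherent rather than merely CPTP, and that $\mathcal{Y}$ carries the extra constraint $\diag(\bra{i}_A Y\ket{j}_A)=0$ for $i\ne j$. I would therefore show that this extra constraint is exactly the condition \eqref{eq:indexDetIn} in disguise, and that the reconstruction of Lemma~\ref{lem:prelem1} already delivers a detection incoherent channel once the constraint is imposed.

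For the inclusion $\mathcal{X}\subseteq\mathcal{Y}$, positivity and $\Tr_B X=\Delta(\sigma_A)$ follow at once from Lemma~\ref{lem:prelem1} since $\mathcal{DI}\subset\text{CPTP}$. For the new constraint I would pass to the index representation,
\begin{equation*}
	X_{AB}=\sum_{i,j,k,l}\rho_{i,j}\,\Phi^{i,j}_{k,l}\,\ketbra{i}{j}_A\otimes\ketbra{k}{l}_B,
\end{equation*}
so that $\diag\!\left(\bra{i}_A X\ket{j}_A\right)=\sum_k\rho_{i,j}\,\Phi^{i,j}_{k,k}\,\ketbra{k}{k}_B$. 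By Eq.~\eqref{eq:indexDetIn}, $\Phi\in\mathcal{DI}$ forces $\Phi^{i,j}_{k,k}=0$ whenever $i\ne j$, so the diagonal vanishes and $X\in\mathcal{Y}$.

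For the reverse inclusion $\mathcal{Y}\subseteq\mathcal{X}$, I would take an arbitrary $Y\in\mathcal{Y}$ and run the construction of Lemma~\ref{lem:prelem1} verbatim, producing the support set $S=\{i:\sigma_i\ne0\}$, the reduced channel $\tilde{\Theta}^{B\leftarrow\tilde A}$ of Eq.~\eqref{eq:OpOnReducedSpace}, and the completion $\Pi^{\tilde A\leftarrow A}$ of Eq.~\eqref{eq:ProjectionOp}. Since Eq.~\eqref{eq:OpOnReducedSpace} gives $\bra{i}_A Y\ket{j}_A=\sqrt{\sigma_i\sigma_j}\,\tilde{\Theta}(\ketbra{i}{j})$ for $i,j\in S$ and zero otherwise, the constraint $\diag(\bra{i}_A Y\ket{j}_A)=0$ for $i\ne j$ becomes $\tilde{\Theta}^{i,j}_{k,k}=0$ for all $k$ and all $i\ne j$ in $S$ (using $\sigma_i,\sigma_j>0$ there). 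As $\tilde A$ is spanned exactly by $\{\ket{i}\}_{i\in S}$, this is precisely the detection incoherent condition \eqref{eq:indexDetIn} for $\tilde{\Theta}$, so $\tilde{\Theta}\in\mathcal{DI}$. It then remains to check that $\Pi$ is detection incoherent too: a direct evaluation of the Kraus operators in Eq.~\eqref{eq:ProjectionOp} gives $\Pi(\ketbra{i}{j})=\ketbra{i}{j}_{\tilde A}$ for $i,j\in S$, $\Pi(\ketbra{i}{j})=\delta_{i,j}\,\ketbra{\psi}{\psi}_{\tilde A}$ for $i,j\notin S$, and $\Pi(\ketbra{i}{j})=0$ otherwise, so in every case the output has vanishing diagonal when $i\ne j$, i.e.\ $\Pi^{i,j}_{k,k}\propto\delta_{i,j}$ and $\Pi\in\mathcal{DI}$. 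Since $\mathcal{DI}$ is closed under composition, $\Theta=\tilde{\Theta}\,\Pi\in\mathcal{DI}$, exhibiting $Y$ as an element of $\mathcal{X}$.

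The main obstacle, and the only place where this argument does more than invoke Lemma~\ref{lem:prelem1}, is the reverse direction: the constraint on $Y$ pins down $\tilde{\Theta}$ only on the support $S$, so one must confirm that extending to the full input space through $\Pi$ does not spoil detection incoherence. This is exactly why it matters that the particular completion \eqref{eq:ProjectionOp} sends off-diagonal inputs either to off-diagonal outputs or to zero; had $\Pi$ instead generated diagonal populations from off-diagonal inputs, the composite $\Theta$ could detect coherence even though $\tilde{\Theta}$ cannot, and the reconstruction would fail.
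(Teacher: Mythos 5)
Your proof is correct and follows essentially the same route as the paper: both directions reduce to Lemma~\ref{lem:prelem1}, with the extra diagonal constraint identified as the detection-incoherence condition of Eq.~\eqref{eq:indexDetIn}. In fact you are slightly more careful than the paper's own argument, which writes $\diag(\bra{i}_A Y\ket{j}_A)=\sqrt{\sigma_i\sigma_j}\,\diag(\Phi\ketbra{i}{j})=0$ and concludes detection incoherence without commenting on indices outside the support $S$ (where $\sqrt{\sigma_i\sigma_j}=0$ and the constraint is vacuous); your explicit check that $\Pi$ of Eq.~\eqref{eq:ProjectionOp} annihilates or preserves off-diagonal inputs is exactly what closes that gap.
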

	\begin{proof}
		A quantum operation $\Phi$ is detection incoherent iff $\Delta \Phi=\Delta \Phi \Delta$.  Assume $X\in \mathcal{X}$. Using Lem.~\ref{lem:prelem1}, and the fact that
		\begin{align*}
			\diag\left(\bra{i}_AX\ket{j}_A\right)&=\diag\left(\Delta \Phi\left(\rho_{i,j}\ketbra{i}{j}\right)\right) \\
			&=\diag\left(\Delta \Phi \Delta \left(\rho_{i,j} \ketbra{i}{j}\right)\right),
		\end{align*}
		we find that $X\in \mathcal{Y}$.\\
		Now assume $Y\in \mathcal{Y}$. From the proof of Lem.~\ref{lem:prelem1}, we know that the assumptions ensure that we can write 
		\begin{align}\label{eq:ExpansionY}
			Y=&\id\otimes {\Phi} \sum_{i,j} \sqrt{\sigma_i \sigma_j}\ketbra{ii}{jj}
		\end{align}
		where ${\Phi}^{B\leftarrow A}$ is a quantum operation which is composed of $\tilde{\Phi}^{B\leftarrow {\tilde{A}}}\Pi^{{\tilde{A}}\leftarrow A}$. Then
		\begin{align*}
			\diag\left(\bra{i}_AY\ket{j}_A\right)=\sqrt{\sigma_i \sigma_j}\diag\left({\Phi}\ketbra{i}{j}\right)=0 \quad\forall i\ne j
		\end{align*}
		ensures that 
		\begin{align*}
			\Delta {\Phi}( \ketbra{i}{j})=\Delta {\Phi} \Delta (\ketbra{i}{j}) \quad \forall i\ne j .
		\end{align*}
		This allows us to conclude that ${\Phi}$ is detection incoherent. 
	\end{proof}
	
	Thanks to the previous Lemmas, we propose the following method to evaluate the pre-processed improvements numerically.
	\begin{theorem}\label{theo:semi_pre}
		Consider a quantum channel $\Theta^{C\leftarrow B}$ and let $N=\dim(C)$. Let further $(s_{m,n})_{m,n}$ be the matrix of dimension $2^N\times N$ that contains as rows all $N$-dimensional  vectors $\vec{s}_m$  whose entries are $\pm1$. The solution of the optimization problem
		\begin{equation}\label{eq:tosemi}
			F_{\lambda,\vec{\phi}}(\Theta^{C\leftarrow B} )= \max_{\Phi\in \mathcal{DI}}\left|\left|\Delta \Theta^{C\leftarrow B}  \Phi^{B\leftarrow A} \left(\lambda-\mu \Lambda_{\vec{\phi}}\right)\right|\right|_1
		\end{equation} is then equivalent to the maximum of the solutions of the following $2^N$ semidefinite programs
		\begin{align}\label{eq:semied}
			\begin{split}
				\text{maximize:}\quad\, &t_m \\
				\text{subject to:}\quad &t_m\leq \sum_{n=0}^{N-1} s_{m,n} \bra{n}_C \Theta^{C\leftarrow B} \left(Z\right) \ket{n}_C\\			
				&Z=\left(\sum_{i,j}\left(\lambda-\mu e^{i(\phi_i-\phi_j)}\right)\bra{i}_AX_{AB}\ket{j}_A\right) \\
				&X_{AB}\geq 0\\
				& \Tr_B(X_{AB})=\Delta(\sigma_A) \\
				& \sigma_A\geq 0 \\
				& \Tr(\sigma_A)=1 \\
				& \diag\left(\bra{i}_AX_{AB}\ket{j}_A\right)=0 \quad \forall i\ne j.
			\end{split}
		\end{align} 
	\end{theorem}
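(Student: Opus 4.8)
The plan is to unfold the induced trace norm, which implicitly maximizes over input states, and then recast the joint optimization over the free pre-processing $\Phi$ and the input state as an optimization over the convex, semidefinite-representable set $\mathcal{Y}$ supplied by Lem.~\ref{lem:equiv}. Writing the induced trace norm in \eqref{eq:F_Lgen} explicitly, we have
\[
F_{\lambda,\vec{\phi}}(\Theta^{C\leftarrow B})=\max_{\Phi\in\mathcal{DI},\,\rho}\left|\left|\Delta\Theta^{C\leftarrow B}\Phi^{B\leftarrow A}\left(\lambda-\mu\Lambda_{\vec{\phi}}\right)(\rho)\right|\right|_1,
\]
so the first step is to encode the pair $(\Phi,\rho)$ in a single operator.

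First I would introduce the Choi-type object $X_{AB}=\id^A\otimes\Phi^{B\leftarrow A}\sum_{i,j}\rho_{i,j}\ketbra{ii}{jj}_{AA}$, so that $\bra{i}_A X_{AB}\ket{j}_A=\rho_{i,j}\Phi(\ketbra{i}{j})$. Using the explicit form of $\Lambda_{\vec{\phi}}$ from \eqref{eq:UnitPhases}, a direct computation gives
\[
\Phi\left(\lambda-\mu\Lambda_{\vec{\phi}}\right)(\rho)=\sum_{i,j}\left(\lambda-\mu e^{i(\phi_i-\phi_j)}\right)\rho_{i,j}\Phi(\ketbra{i}{j})=\sum_{i,j}\left(\lambda-\mu e^{i(\phi_i-\phi_j)}\right)\bra{i}_A X_{AB}\ket{j}_A=Z,
\]
which is exactly the operator $Z$ appearing in the program \eqref{eq:semied}. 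By Lem.~\ref{lem:equiv}, as $(\Phi,\rho)$ ranges over detection incoherent channels and quantum states, $X_{AB}$ ranges precisely over $\mathcal{Y}$, i.e.\ over positive operators with $\Tr_B X_{AB}=\Delta(\sigma_A)$ and $\diag(\bra{i}_A X_{AB}\ket{j}_A)=0$ for $i\ne j$. Hence $F_{\lambda,\vec{\phi}}(\Theta)=\max_{X_{AB}\in\mathcal{Y}}\|\Delta\Theta(Z)\|_1$, with $Z$ an affine function of $X_{AB}$.

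The remaining step is to linearize the trace norm. Since $(\lambda-\mu\Lambda_{\vec{\phi}})(\rho)$ is Hermitian and $\Phi$, $\Theta$, $\Delta$ all preserve hermiticity, $\Delta\Theta(Z)$ is diagonal on $C$ with \emph{real} diagonal entries $a_n=\bra{n}_C\Theta(Z)\ket{n}_C$, so $\|\Delta\Theta(Z)\|_1=\sum_{n=0}^{N-1}|a_n|$. I would then invoke the elementary identity $\sum_n|a_n|=\max_{\vec{s}\in\{\pm1\}^N}\sum_n s_n a_n$, attained at $s_n=\operatorname{sign}(a_n)$, where the maximum runs over the $2^N$ rows $\vec{s}_m$ of the sign matrix. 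Because both maximizations are finite they commute,
\[
F_{\lambda,\vec{\phi}}(\Theta)=\max_{X_{AB}\in\mathcal{Y}}\max_m\sum_n s_{m,n}a_n=\max_m\left(\max_{X_{AB}\in\mathcal{Y}}\sum_n s_{m,n}\bra{n}_C\Theta(Z)\ket{n}_C\right),
\]
and the inner maximization for each fixed $m$ is exactly the $m$-th program in \eqref{eq:semied}, with the slack variable $t_m$ enforcing the objective through $t_m\le\sum_n s_{m,n}\bra{n}_C\Theta(Z)\ket{n}_C$.

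The bulk of the work is bookkeeping; the one point that needs care is checking that each of the $2^N$ programs is genuinely semidefinite. This holds because the objective $t_m$ is linear, the map $X_{AB}\mapsto Z\mapsto\Theta(Z)$ is linear so each constraint $t_m\le\sum_n s_{m,n}\bra{n}_C\Theta(Z)\ket{n}_C$ is affine, and the feasible set is cut out by the positivity constraints $X_{AB}\ge0$, $\sigma_A\ge0$ together with the affine constraints $\Tr_B X_{AB}=\Delta(\sigma_A)$, $\Tr(\sigma_A)=1$, and $\diag(\bra{i}_A X_{AB}\ket{j}_A)=0$. I expect the main obstacle to be the clean application of Lem.~\ref{lem:equiv} to collapse the channel-plus-state optimization into the single variable $X_{AB}$; once that reduction is in place, the sign-vector linearization of the trace norm of a real diagonal operator completes the equivalence.
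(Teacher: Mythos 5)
Your proposal is correct and follows essentially the same route as the paper's proof: rewrite the objective in terms of the Choi-type operator $X_{AB}=\id^A\otimes\Phi^{B\leftarrow A}\sum_{i,j}\rho_{i,j}\ketbra{ii}{jj}_{AA}$, invoke Lem.~\ref{lem:equiv} to identify the feasible set with $\mathcal{Y}$, and linearize the trace norm of the (real) diagonal output via the identity $\sum_n|a_n|=\max_{\vec{s}_m}\vec{s}_m\cdot\vec{a}$. The only cosmetic difference is that you make the reality of the diagonal entries and the commutation of the two finite maximizations explicit, which the paper leaves implicit.
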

	
	\begin{proof}
		We begin by rewriting the optimization problem in Eq.~(\ref{eq:tosemi}) as
		\begin{align*}
			\begin{split}
				\text{maximize:} \quad\, &\Tr\left|\Delta \Theta \Phi \left(\lambda-\mu \Lambda_{\vec{\phi}}\right)(\rho)\right|\\
				\text{subject to:} \quad &\Phi\in \mathcal{DI} \\
				&\rho \geq0 \\
				&\Tr(\rho)=1.
			\end{split}
		\end{align*}
		Using the index representation from App.~\ref{ap:index}, the objective function can now be expanded as
		\small
		\begin{align}\label{eq:proofStepLemma}
			\Tr&\left|\Delta \Theta \Phi \left(\lambda-\mu \Lambda_{\vec{\phi}}\right)(\rho)\right| \\
			&=\sum_{n=0}^{N-1}\left|\bra{n}\Theta\left(\sum_{i,j,k,l}\Phi_{k,l}^{i,j}\left(\lambda-\mu e^{i(\phi_i-\phi_j)}\right)\rho_{i,j}\ketbra{k}{l}\right)\ket{n}\right| \nonumber \\
			&=\sum_{n=0}^{N-1}\Bigg|\bra{n}_B\Theta\Bigg(\sum_{i,j}\left(\lambda-\mu e^{i(\phi_i-\phi_j)}\right) \nonumber \\
			&\pushright{\bra{i}_A\left(\sum_{o,p}\rho_{o,p}\ketbra{o}{p}_A\otimes\sum_{k,l}\Phi_{k,l}^{o,p}\ketbra{k}{l}_B\right)\ket{j}_A\Bigg)\ket{n}_B\Bigg| \nonumber} \\
			&=\sum_{n=0}^{N-1}\Bigg|\bra{n}_B\Theta\Bigg(\sum_{i,j}\left(\lambda-\mu e^{i(\phi_i-\phi_j)}\right) \nonumber \\
			&\pushright{\bra{i}_A\id^A\otimes\Phi^{B\leftarrow A}\left(\sum_{o,p}\rho_{o,p}\ketbra{oo}{pp}_{AA}\right)\ket{j}_A\Bigg)\ket{n}_B\Bigg|.} \nonumber
		\end{align}
		\normalsize
		Hence, we reformulate our problem into
		\small
		\begin{align*}
			\begin{split}
				\text{maximize:} \quad\, &\sum_{n=0}^{N-1}\Bigg|\bra{n}_B\Theta\Bigg(\sum_{i,j}\left(\lambda-\mu e^{i(\phi_i-\phi_j)}\right)\bra{i}_A \\
				&\pushright{\id^A\otimes\Phi^{B\leftarrow A}\left(\sum_{o,p}\rho_{o,p}\ketbra{oo}{pp}_{AA}\right)\ket{j}_A\Bigg)\ket{n}_B\Bigg|}\\
				\text{subject to:} \quad &\Phi\in \mathcal{DI} \\
				&\rho \geq0 \\
				&\Tr(\rho)=1.
			\end{split}
		\end{align*}
		\normalsize
		Finally, using Lem.~\ref{lem:equiv} and 
		\begin{equation*}
			\sum_n|f_n|=\max_{\vec{s}_m} (\vec{s}_m\cdot\vec{f}),
		\end{equation*}
		where the vectors $\vec{s}_m$ have been introduced in the statement of the Theorem, we proved that Eq.~(\ref{eq:tosemi}) and the greatest value of the solutions of Eq.~(\ref{eq:semied}) are equivalent.
	\end{proof}
	
	As one can see, the evaluation method that we propose in the above Theorem unfortunately requires us to solve a number 
	of semidefinite programs that grows exponentially in the output dimension of $\Theta$. 
	However, the solution gives direct access to an optimal pair $\Phi_{\text{opt}}, \rho_{\text{opt}}$ of input state and pre-processing. This is a consequence of the constructive proofs of Lem.~\ref{lem:prelem1} and Lem.~\ref{lem:equiv} on which the method relies. Let us assume that we solved the semidefinite program in Eq.~\eqref{eq:semied} that leads to the maximal $t_m$ and denote its optimal $X$ with $X_{\text{opt}}$, from which one obtains $\sigma_{\text{opt},i}$ via
	\begin{align*}
		\Tr_B(X_{\text{opt}})=\sum_i \sigma_{\text{opt},i} \ketbra{i}{i}.
	\end{align*}
	
	Combining Eq.~\eqref{eq:ExpansionY} with Eq.~\eqref{eq:proofStepLemma},  an optimal $\rho$ is thus given by 
	\begin{align*}
		\rho_{\text{opt}}=\sum_{i,j}\sqrt{\sigma_{\text{opt},i} \sigma_{\text{opt},j}}\ketbra{i}{j}.
	\end{align*}
	With $\tilde{A}=\vspan\left\{\ket{i}: \sigma_{\text{opt},i}\ne0 \right\}$, we now define an operation $\tilde{\Phi}^{B\leftarrow\tilde{A}}_\text{opt}$ via Eq.~\eqref{eq:OpOnReducedSpace}, i.e., 
	\begin{align*}
		\tilde{\Phi}^{B\leftarrow\tilde{A}}_\text{opt}\left(\ketbra{i}{j}_{\tilde{A}}\right)=\frac{\bra{i}_A X_{\text{opt}}\ket{j}_A}{\sqrt{\sigma_{\text{opt},i} \sigma_{\text{opt},j}}}.
	\end{align*}
	Note that due to the definition of $\tilde{A}$, division by zero is excluded, and $\tilde{\Phi}^{B\leftarrow\tilde{A}}_\text{opt}$ is determined uniquely.
	Together with the operation $\Pi^{\tilde{A}\leftarrow A}$ introduced in Eq.~\eqref{eq:ProjectionOp}, which is also well defined due to the knowledge of $\sigma_{\text{opt},i}$, an optimal $\Phi^{B\leftarrow A}$ corresponding to the $\rho_{\text{opt}}$ given above is therefore 
	\begin{align}
		\Phi^{B\leftarrow A}_\text{opt}=\tilde{\Phi}^{B\leftarrow\tilde{A}}_\text{opt} \Pi^{\tilde{A}\leftarrow A}.
	\end{align}

	\section{Interpretation of faithfulness}\label{ap:peek}
	In this Appendix, we discuss the intuition behind the results of Thm.~\ref{thm:faithfulnessPre} concerning the faithfulness of the pre-processed improvements. 
	As example, we consider a detecting qubit operation $\Theta$ that is a stochastic mixture of the Hadamard gate $H$ and the identity channel, i.e.,
	\begin{equation}
		\Theta(\rho)=p_1 H\rho H^\dagger+p_2\rho.\label{eq:had-prob}
	\end{equation}
	For
	\begin{align}
		\vec{\phi}=\left(\frac{2\pi}{3},0\right),
	\end{align}
	in Fig.~\ref{fig:poslammmu} and Fig.~\ref{fig:neglammmu}, we plotted  $M_{\lambda,\vec{\phi}}(\Theta)$ with the help of Thm.~\ref{theo:semi_pre}.
	These plots clearly show that $M_{\lambda,\vec{\phi}}(\Theta)$ is not faithful for $\lambda\neq\mu$. 
	\begin{figure}[ht]
		\includegraphics[scale=0.75]{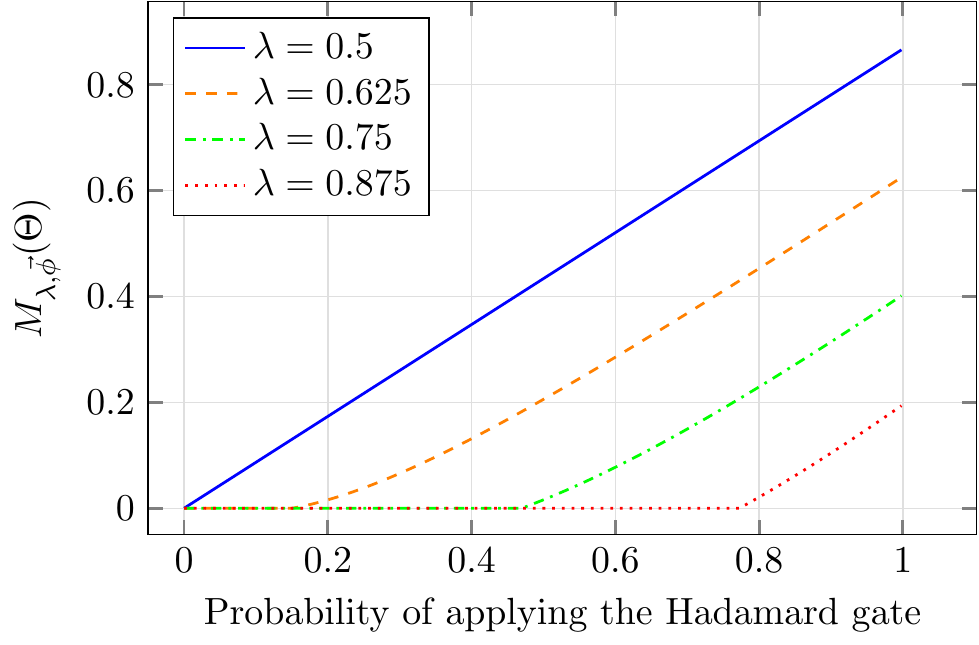}
		\caption{With $\vec{\phi}$ and $\Theta(p_1)$ as in the main text, where $p_1$ denotes the probability of applying the Hadamard gate, $M_{\lambda,\vec{\phi}}(\Theta)$ is plotted for different choices of $\lambda\geq\mu$. One clearly sees that $M_{\lambda,\vec{\phi}}(\Theta)$ is not faithful for $\lambda\ne\frac{1}{2}$. }
		\label{fig:poslammmu}
	\end{figure}
	\begin{figure}[ht]
		\includegraphics[scale=0.75]{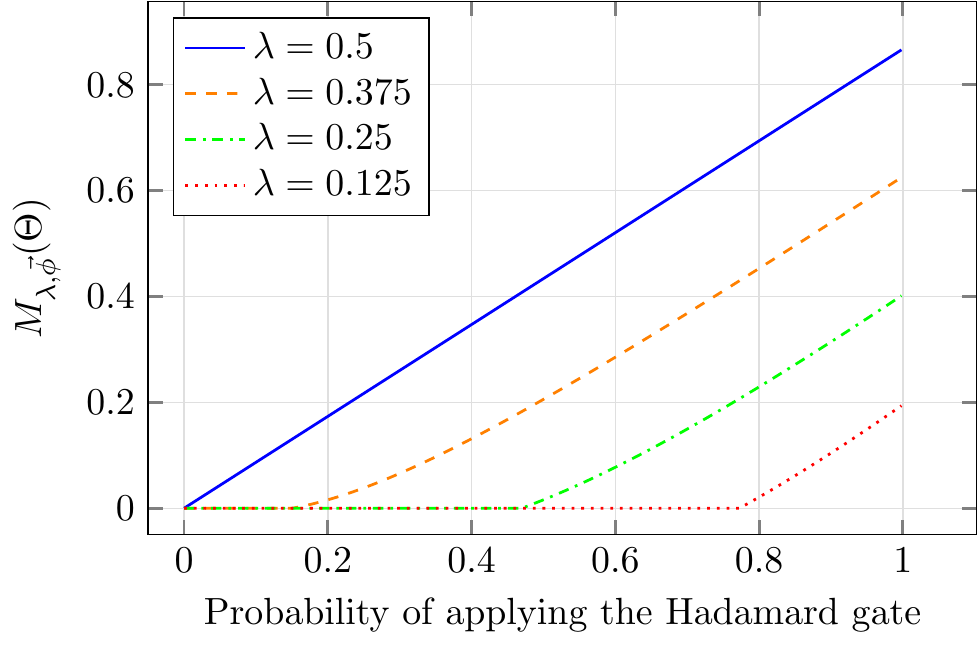}
		\caption{Plot corresponding to Fig.~\ref{fig:poslammmu} for $\lambda\leq\mu$.}
		\label{fig:neglammmu}
	\end{figure}
	This implies that, for $\lambda\ne\mu$,  not every operation able to detect coherence allows Bob to increase his probability of guessing correctly if Alice applied $\Lambda_{\vec{\phi}}$ or not.
	
	In the following, we will analyze the reasons for this fact. 
	Firstly, we notice that Bob can always guess correctly with a minimum probability of $\frac{1}{2}$ by not taking into account any information about Alice's actions but just announcing his guesses randomly with equal probability. 
	
	Secondly, by purely taking into account his knowledge of $\lambda$, he can increase this probability by the \emph{trivial bias} 
	\begin{equation}
		B_t:= \frac{1}{2}|\lambda-\mu |.
	\end{equation}
	To further increase his probability of guessing correctly, he needs to make use of information that he gains from the incoherent POVM.
	
	Now suppose that we want to distinguish two states $\sigma_0$ and $\sigma_1$ occurring with probabilities $\lambda$ and $\mu$ via an incoherent measurement. In this case, the maximal bias over $\frac{1}{2}$ that we can obtain is~\cite[Prop.~16]{Theurer2019}
	\begin{equation}
		B_m:= \frac{1}{2}||\Delta(\lambda\sigma_0-\mu\sigma_1)||_1,
	\end{equation}
	which we will call the \emph{measurement bias}.
	
	Intuitively, one might expect that $B_m$ is greater than $B_t$ for all pairs of states $\sigma_0$ and $\sigma_1$ that  have different populations, because then one can find an incoherent POVM that leads to different statistics for the two states. However, one deduces from Fig.~\ref{fig:poslammmu} and Fig.~\ref{fig:neglammmu} that there exist qubit states $\sigma_0=\Theta \Phi(\rho)$ and $\sigma_1=\Theta \Phi  \Lambda_{\vec{\phi}}(\rho)$ (with different populations for our costly $\Theta$ and ideal $\rho$ and $\Phi$), for which  $B_m=B_t$. 
	In addition, we note that for $\lambda\neq\mu$ these two plots show a discontinuity in the gradient of $M_{\lambda,\vec{\phi}}(\Theta)$ with respect to $p_1$.
	
	Since both $\sigma_0$ and $\sigma_1$ are qubit states, the respective measurement bias is  given by
	\begin{align}\label{eq:best_prot}
		B_m&= \frac{1}{2}\Tr|\Delta(\lambda\sigma_0-\mu\sigma_1)| \nonumber\\
		&=  \frac{1}{2}\Tr\left| (\lambda\sigma_{00}^{(0)}-\mu\sigma_{00}^{(1)})\ketbra{0}{0}+(\lambda\sigma_{11}^{(0)}-\mu\sigma_{11}^{(1)})\ketbra{1}{1}\right|   \nonumber\\
		&= \frac{1}{2}\max\left\{ \left|\Tr (\lambda\sigma_0-\mu\sigma_1)\right| , \left|\Tr(\sigma_z(\lambda\sigma_0-\mu\sigma_1))\right| \right\}  \nonumber \\
		&= \frac{1}{2}\max\left\{ \left|\lambda-\mu \right| , \left|\Tr(\sigma_z(\lambda\sigma_0-\mu\sigma_1))\right| \right\}. 
	\end{align} 
	From Ref.~\cite[Prop.~16]{Theurer2019} and its proof, we know that an optimal guessing strategy for the distinction of $\sigma_0$ and $\sigma_1$ that involves only incoherent measurements   consists of the following: measure the POVM $\{P_0, P_1=\mathbbm{1} -P_0\}$, where $P_0$ is the projector onto the positive part of $\Delta(\lambda\sigma_0-\mu\sigma_1)$ and $P_1$ the projector onto its negative part, and announce $i=0,1$ according to the outcome.
	From the second line of Eq.~\eqref{eq:best_prot}, we thus find the following four cases for our qubit example
	\begin{itemize}
		\item If $\lambda\sigma_{00}^{(0)}-\mu\sigma_{00}^{(1)}$ and $\lambda\sigma_{11}^{(0)}-\mu\sigma_{11}^{(1)}$ are both non-negative, we have $P_0=\mathbbm{1}$. In other words, Bob does not need to do any measurement and always claims that the phases have not been attached.
		
		\item If neither $\lambda\sigma_{00}^{(0)}-\mu\sigma_{00}^{(1)}$ nor $\lambda\sigma_{11}^{(0)}-\mu\sigma_{11}^{(1)}$ are positive, we find $P_1=\mathbbm{1}$ and Bob always claims that the phases have been attached. Thus, in these first two cases, no free measurement leads to useful information.
		
		\item If $\lambda\sigma_{00}^{(0)}-\mu\sigma_{00}^{(1)}$ is positive and $\lambda\sigma_{11}^{(0)}-\mu\sigma_{11}^{(1)}$ negative, we choose $P_0=\ketbra{0}{0}$, i.e., actually perform a measurement. Here, according to the outcome, Bob should declare that Alice encoded $\vec{\phi}$ or that she did not. 
		
		\item In the remaining case where $\lambda\sigma_{00}^{(0)}-\mu\sigma_{00}^{(1)}$ is negative and $\lambda\sigma_{11}^{(0)}-\mu\sigma_{11}^{(1)}$ positive, we find $P_0=\ketbra{1}{1}$. In the last two cases, we can actually gain additional information via a free measurement.
	\end{itemize}
	This also explains the discontinuities in the gradients, which occur at the points from  which on performing an incoherent measurement leads to an actual advantage. The above four cases are collected in the third line of Eq.~\eqref{eq:best_prot} (where the POVMs are expressed by the measurement of the observable $\pm\sigma_z$).
	
	This behavior is actually not due to the incoherent measurements or any other quantum property, but can already be explained with the following purely classical example.
	Imagine that Alice has two marbles, a dark green one and a blue one. She chooses one of them with an a priori probability that is known to Bob and throws it into a dark room. Now Bob looks at it, but since the room is dark, blue and dark green are hard to distinguish, hence Bob is not sure about the color of the marble he sees. What is now his best strategy to guess correctly which one it is? Betting based on the knowledge of the a priori probability or based on the color he believes to see? Or combining both? Intuitively, he should make the bet based on the information that is most secure. Therefore, if he knows that Alice throws the blue marble with a probability of $95\%$, but according to his eyes (which we assume to be reliable with a probability of $60\%$  under such conditions) it is the green one, he should not trust his eyes. Independent of whether he looked at the marble or not, Bob's ideal guessing strategy is to always claim that the marble is blue. As the percentages change, at some point, the ideal guessing strategy starts to depend on what he sees.  Considering the specific case that the a priori probability for the marbles is $50\%$, this will always be the case (assuming that his eyes gather any useful information), which is the reason why $M_{\lambda,\vec{\phi}}(\Theta)$ is faithful for $\lambda=\frac{1}{2}$.


\begin{thebibliography}{77}%
		\makeatletter
		\providecommand \@ifxundefined [1]{%
			\@ifx{#1\undefined}
		}%
		\providecommand \@ifnum [1]{%
			\ifnum #1\expandafter \@firstoftwo
			\else \expandafter \@secondoftwo
			\fi
		}%
		\providecommand \@ifx [1]{%
			\ifx #1\expandafter \@firstoftwo
			\else \expandafter \@secondoftwo
			\fi
		}%
		\providecommand \natexlab [1]{#1}%
		\providecommand \enquote  [1]{``#1''}%
		\providecommand \bibnamefont  [1]{#1}%
		\providecommand \bibfnamefont [1]{#1}%
		\providecommand \citenamefont [1]{#1}%
		\providecommand \href@noop [0]{\@secondoftwo}%
		\providecommand \href [0]{\begingroup \@sanitize@url \@href}%
		\providecommand \@href[1]{\@@startlink{#1}\@@href}%
		\providecommand \@@href[1]{\endgroup#1\@@endlink}%
		\providecommand \@sanitize@url [0]{\catcode `\\12\catcode `\$12\catcode
			`\&12\catcode `\#12\catcode `\^12\catcode `\_12\catcode `\%12\relax}%
		\providecommand \@@startlink[1]{}%
		\providecommand \@@endlink[0]{}%
		\providecommand \url  [0]{\begingroup\@sanitize@url \@url }%
		\providecommand \@url [1]{\endgroup\@href {#1}{\urlprefix }}%
		\providecommand \urlprefix  [0]{URL }%
		\providecommand \Eprint [0]{\href }%
		\providecommand \doibase [0]{https://doi.org/}%
		\providecommand \selectlanguage [0]{\@gobble}%
		\providecommand \bibinfo  [0]{\@secondoftwo}%
		\providecommand \bibfield  [0]{\@secondoftwo}%
		\providecommand \translation [1]{[#1]}%
		\providecommand \BibitemOpen [0]{}%
		\providecommand \bibitemStop [0]{}%
		\providecommand \bibitemNoStop [0]{.\EOS\space}%
		\providecommand \EOS [0]{\spacefactor3000\relax}%
		\providecommand \BibitemShut  [1]{\csname bibitem#1\endcsname}%
		\let\auto@bib@innerbib\@empty
		\bibitem [{\citenamefont {Feynman}(1982)}]{Feynman1982}%
		\BibitemOpen
		\bibfield  {author} {\bibinfo {author} {\bibfnamefont {R.~P.}\ \bibnamefont
				{Feynman}},\ }\bibfield  {title} {\bibinfo {title} {Simulating physics with
				computers},\ }\href {https://doi.org/10.1007/BF02650179} {\bibfield
			{journal} {\bibinfo  {journal} {Int. J. Theor. Phys.}\ }\textbf {\bibinfo
				{volume} {21}},\ \bibinfo {pages} {467} (\bibinfo {year} {1982})}\BibitemShut
		{NoStop}%
		\bibitem [{\citenamefont {Bennett}\ and\ \citenamefont
			{Brassard}(1984)}]{Bennett1984}%
		\BibitemOpen
		\bibfield  {author} {\bibinfo {author} {\bibfnamefont {C.~H.}\ \bibnamefont
				{Bennett}}\ and\ \bibinfo {author} {\bibfnamefont {G.}~\bibnamefont
				{Brassard}},\ }\bibfield  {title} {\bibinfo {title} {Quantum cryptography:
				Public key distribution and coin tossing},\ }in\ \href
		{https://doi.org/10.1016/j.tcs.2014.05.025} {\emph {\bibinfo {booktitle}
				{Proc. of IEEE Int. Conf. on Comp., Syst. and Signal Proc., Bangalore, India,
					Dec. 10-12, 1984}}}\ (\bibinfo {year} {1984})\BibitemShut {NoStop}%
		\bibitem [{\citenamefont {Deutsch}(1985)}]{Deutsch1985}%
		\BibitemOpen
		\bibfield  {author} {\bibinfo {author} {\bibfnamefont {D.}~\bibnamefont
				{Deutsch}},\ }\bibfield  {title} {\bibinfo {title} {Quantum theory, the
				{Church–Turing} principle and the universal quantum computer},\ }\href
		{https://doi.org/10.1098/rspa.1985.0070} {\bibfield  {journal} {\bibinfo
				{journal} {Proc. R. Soc. A}\ }\textbf {\bibinfo {volume} {400}},\ \bibinfo
			{pages} {97} (\bibinfo {year} {1985})}\BibitemShut {NoStop}%
		\bibitem [{\citenamefont {Shor}(1999)}]{Shor1999}%
		\BibitemOpen
		\bibfield  {author} {\bibinfo {author} {\bibfnamefont {P.~W.}\ \bibnamefont
				{Shor}},\ }\bibfield  {title} {\bibinfo {title} {Polynomial-time algorithms
				for prime factorization and discrete logarithms on a quantum computer},\
		}\href {https://doi.org/10.1137/S0036144598347011} {\bibfield  {journal}
			{\bibinfo  {journal} {SIAM Review}\ }\textbf {\bibinfo {volume} {41}},\
			\bibinfo {pages} {303} (\bibinfo {year} {1999})}\BibitemShut {NoStop}%
		\bibitem [{\citenamefont {Vedral}\ \emph {et~al.}(1997)\citenamefont {Vedral},
			\citenamefont {Plenio}, \citenamefont {Rippin},\ and\ \citenamefont
			{Knight}}]{Vedral1997}%
		\BibitemOpen
		\bibfield  {author} {\bibinfo {author} {\bibfnamefont {V.}~\bibnamefont
				{Vedral}}, \bibinfo {author} {\bibfnamefont {M.~B.}\ \bibnamefont {Plenio}},
			\bibinfo {author} {\bibfnamefont {M.~A.}\ \bibnamefont {Rippin}},\ and\
			\bibinfo {author} {\bibfnamefont {P.~L.}\ \bibnamefont {Knight}},\ }\bibfield
		{title} {\bibinfo {title} {Quantifying entanglement},\ }\href
		{https://doi.org/10.1103/PhysRevLett.78.2275} {\bibfield  {journal} {\bibinfo
				{journal} {Phys. Rev. Lett.}\ }\textbf {\bibinfo {volume} {78}},\ \bibinfo
			{pages} {2275} (\bibinfo {year} {1997})}\BibitemShut {NoStop}%
		\bibitem [{\citenamefont {Plenio}\ and\ \citenamefont
			{Virmani}(2007)}]{Plenio2007}%
		\BibitemOpen
		\bibfield  {author} {\bibinfo {author} {\bibfnamefont {M.~B.}\ \bibnamefont
				{Plenio}}\ and\ \bibinfo {author} {\bibfnamefont {S.}~\bibnamefont
				{Virmani}},\ }\bibfield  {title} {\bibinfo {title} {An introduction to
				entanglement measures},\ }\href
		{http://www.rintonpress.com/xqic7/qic-7-12/001-051.pdf} {\bibfield  {journal}
			{\bibinfo  {journal} {Quant. Inf. Comp.}\ }\textbf {\bibinfo {volume} {7}},\
			\bibinfo {pages} {1} (\bibinfo {year} {2007})}\BibitemShut {NoStop}%
		\bibitem [{\citenamefont {Horodecki}\ \emph {et~al.}(2009)\citenamefont
			{Horodecki}, \citenamefont {Horodecki}, \citenamefont {Horodecki},\ and\
			\citenamefont {Horodecki}}]{Horodecki2009}%
		\BibitemOpen
		\bibfield  {author} {\bibinfo {author} {\bibfnamefont {R.}~\bibnamefont
				{Horodecki}}, \bibinfo {author} {\bibfnamefont {P.}~\bibnamefont
				{Horodecki}}, \bibinfo {author} {\bibfnamefont {M.}~\bibnamefont
				{Horodecki}},\ and\ \bibinfo {author} {\bibfnamefont {K.}~\bibnamefont
				{Horodecki}},\ }\bibfield  {title} {\bibinfo {title} {Quantum entanglement},\
		}\href {https://doi.org/10.1103/RevModPhys.81.865} {\bibfield  {journal}
			{\bibinfo  {journal} {Rev. Mod. Phys.}\ }\textbf {\bibinfo {volume} {81}},\
			\bibinfo {pages} {865} (\bibinfo {year} {2009})}\BibitemShut {NoStop}%
		\bibitem [{\citenamefont {{\AA}berg}(2006)}]{Aberg2006}%
		\BibitemOpen
		\bibfield  {author} {\bibinfo {author} {\bibfnamefont {J.}~\bibnamefont
				{{\AA}berg}},\ }\bibfield  {title} {\bibinfo {title} {Quantifying
				superposition},\ }\href {http://arxiv.org/abs/quant-ph/0612146} {\bibfield
			{journal} {\bibinfo  {journal} {arXiv:quant-ph/0612146}\ } (\bibinfo {year}
			{2006})}\BibitemShut {NoStop}%
		\bibitem [{\citenamefont {Gour}\ and\ \citenamefont
			{Spekkens}(2008)}]{Gour2008}%
		\BibitemOpen
		\bibfield  {author} {\bibinfo {author} {\bibfnamefont {G.}~\bibnamefont
				{Gour}}\ and\ \bibinfo {author} {\bibfnamefont {R.~W.}\ \bibnamefont
				{Spekkens}},\ }\bibfield  {title} {\bibinfo {title} {The resource theory of
				quantum reference frames: manipulations and monotones},\ }\href
		{https://doi.org/10.1088/1367-2630/10/3/033023} {\bibfield  {journal}
			{\bibinfo  {journal} {New J. Phys.}\ }\textbf {\bibinfo {volume} {10}},\
			\bibinfo {pages} {033023} (\bibinfo {year} {2008})}\BibitemShut {NoStop}%
		\bibitem [{\citenamefont {Brand\~ao}\ \emph {et~al.}(2013)\citenamefont
			{Brand\~ao}, \citenamefont {Horodecki}, \citenamefont {Oppenheim},
			\citenamefont {Renes},\ and\ \citenamefont {Spekkens}}]{Brandao2013}%
		\BibitemOpen
		\bibfield  {author} {\bibinfo {author} {\bibfnamefont {F.~G. S.~L.}\
				\bibnamefont {Brand\~ao}}, \bibinfo {author} {\bibfnamefont {M.}~\bibnamefont
				{Horodecki}}, \bibinfo {author} {\bibfnamefont {J.}~\bibnamefont
				{Oppenheim}}, \bibinfo {author} {\bibfnamefont {J.~M.}\ \bibnamefont
				{Renes}},\ and\ \bibinfo {author} {\bibfnamefont {R.~W.}\ \bibnamefont
				{Spekkens}},\ }\bibfield  {title} {\bibinfo {title} {Resource theory of
				quantum states out of thermal equilibrium},\ }\href
		{https://doi.org/10.1103/PhysRevLett.111.250404} {\bibfield  {journal}
			{\bibinfo  {journal} {Phys. Rev. Lett.}\ }\textbf {\bibinfo {volume} {111}},\
			\bibinfo {pages} {250404} (\bibinfo {year} {2013})}\BibitemShut {NoStop}%
		\bibitem [{\citenamefont {Horodecki}\ and\ \citenamefont
			{Oppenheim}(2013)}]{Horodecki2013}%
		\BibitemOpen
		\bibfield  {author} {\bibinfo {author} {\bibfnamefont {M.}~\bibnamefont
				{Horodecki}}\ and\ \bibinfo {author} {\bibfnamefont {J.}~\bibnamefont
				{Oppenheim}},\ }\bibfield  {title} {\bibinfo {title} {(quantumness in the
				context of) resource theories},\ }\href
		{https://doi.org/10.1142/S0217979213450197} {\bibfield  {journal} {\bibinfo
				{journal} {Int. J. Mod. Phys. B}\ }\textbf {\bibinfo {volume} {27}},\
			\bibinfo {pages} {1345019} (\bibinfo {year} {2013})}\BibitemShut {NoStop}%
		\bibitem [{\citenamefont {Baumgratz}\ \emph {et~al.}(2014)\citenamefont
			{Baumgratz}, \citenamefont {Cramer},\ and\ \citenamefont
			{Plenio}}]{Baumgratz2014}%
		\BibitemOpen
		\bibfield  {author} {\bibinfo {author} {\bibfnamefont {T.}~\bibnamefont
				{Baumgratz}}, \bibinfo {author} {\bibfnamefont {M.}~\bibnamefont {Cramer}},\
			and\ \bibinfo {author} {\bibfnamefont {M.~B.}\ \bibnamefont {Plenio}},\
		}\bibfield  {title} {\bibinfo {title} {Quantifying coherence},\ }\href
		{https://doi.org/10.1103/PhysRevLett.113.140401} {\bibfield  {journal}
			{\bibinfo  {journal} {Phys. Rev. Lett.}\ }\textbf {\bibinfo {volume} {113}},\
			\bibinfo {pages} {140401} (\bibinfo {year} {2014})}\BibitemShut {NoStop}%
		\bibitem [{\citenamefont {Grudka}\ \emph {et~al.}(2014)\citenamefont {Grudka},
			\citenamefont {Horodecki}, \citenamefont {Horodecki}, \citenamefont
			{Horodecki}, \citenamefont {Horodecki}, \citenamefont {Joshi}, \citenamefont
			{K\l{}obus},\ and\ \citenamefont {W\'ojcik}}]{Grudka2014}%
		\BibitemOpen
		\bibfield  {author} {\bibinfo {author} {\bibfnamefont {A.}~\bibnamefont
				{Grudka}}, \bibinfo {author} {\bibfnamefont {K.}~\bibnamefont {Horodecki}},
			\bibinfo {author} {\bibfnamefont {M.}~\bibnamefont {Horodecki}}, \bibinfo
			{author} {\bibfnamefont {P.}~\bibnamefont {Horodecki}}, \bibinfo {author}
			{\bibfnamefont {R.}~\bibnamefont {Horodecki}}, \bibinfo {author}
			{\bibfnamefont {P.}~\bibnamefont {Joshi}}, \bibinfo {author} {\bibfnamefont
				{W.}~\bibnamefont {K\l{}obus}},\ and\ \bibinfo {author} {\bibfnamefont
				{A.}~\bibnamefont {W\'ojcik}},\ }\bibfield  {title} {\bibinfo {title}
			{Quantifying contextuality},\ }\href
		{https://doi.org/10.1103/PhysRevLett.112.120401} {\bibfield  {journal}
			{\bibinfo  {journal} {Phys. Rev. Lett.}\ }\textbf {\bibinfo {volume} {112}},\
			\bibinfo {pages} {120401} (\bibinfo {year} {2014})}\BibitemShut {NoStop}%
		\bibitem [{\citenamefont {Veitch}\ \emph {et~al.}(2014)\citenamefont {Veitch},
			\citenamefont {Mousavian}, \citenamefont {Gottesman},\ and\ \citenamefont
			{Emerson}}]{Veitch2014}%
		\BibitemOpen
		\bibfield  {author} {\bibinfo {author} {\bibfnamefont {V.}~\bibnamefont
				{Veitch}}, \bibinfo {author} {\bibfnamefont {S.~A.~H.}\ \bibnamefont
				{Mousavian}}, \bibinfo {author} {\bibfnamefont {D.}~\bibnamefont
				{Gottesman}},\ and\ \bibinfo {author} {\bibfnamefont {J.}~\bibnamefont
				{Emerson}},\ }\bibfield  {title} {\bibinfo {title} {The resource theory of
				stabilizer quantum computation},\ }\href
		{http://stacks.iop.org/1367-2630/16/i=1/a=013009} {\bibfield  {journal}
			{\bibinfo  {journal} {New J. Phys.}\ }\textbf {\bibinfo {volume} {16}},\
			\bibinfo {pages} {013009} (\bibinfo {year} {2014})}\BibitemShut {NoStop}%
		\bibitem [{\citenamefont {Del~Rio}\ \emph {et~al.}(2015)\citenamefont
			{Del~Rio}, \citenamefont {Kraemer},\ and\ \citenamefont
			{Renner}}]{DelRio2015}%
		\BibitemOpen
		\bibfield  {author} {\bibinfo {author} {\bibfnamefont {L.}~\bibnamefont
				{Del~Rio}}, \bibinfo {author} {\bibfnamefont {L.}~\bibnamefont {Kraemer}},\
			and\ \bibinfo {author} {\bibfnamefont {R.}~\bibnamefont {Renner}},\
		}\bibfield  {title} {\bibinfo {title} {Resource theories of knowledge},\
		}\href {https://arxiv.org/abs/1511.08818} {\bibfield  {journal} {\bibinfo
				{journal} {arXiv:1511.08818}\ } (\bibinfo {year} {2015})}\BibitemShut
		{NoStop}%
		\bibitem [{\citenamefont {Killoran}\ \emph {et~al.}(2016)\citenamefont
			{Killoran}, \citenamefont {Steinhoff},\ and\ \citenamefont
			{Plenio}}]{Killoran2016}%
		\BibitemOpen
		\bibfield  {author} {\bibinfo {author} {\bibfnamefont {N.}~\bibnamefont
				{Killoran}}, \bibinfo {author} {\bibfnamefont {F.~E.~S.}\ \bibnamefont
				{Steinhoff}},\ and\ \bibinfo {author} {\bibfnamefont {M.~B.}\ \bibnamefont
				{Plenio}},\ }\bibfield  {title} {\bibinfo {title} {Converting nonclassicality
				into entanglement},\ }\href {https://doi.org/10.1103/PhysRevLett.116.080402}
		{\bibfield  {journal} {\bibinfo  {journal} {Phys. Rev. Lett.}\ }\textbf
			{\bibinfo {volume} {116}},\ \bibinfo {pages} {080402} (\bibinfo {year}
			{2016})}\BibitemShut {NoStop}%
		\bibitem [{\citenamefont {Coecke}\ \emph {et~al.}(2016)\citenamefont {Coecke},
			\citenamefont {Fritz},\ and\ \citenamefont {Spekkens}}]{Coecke2016}%
		\BibitemOpen
		\bibfield  {author} {\bibinfo {author} {\bibfnamefont {B.}~\bibnamefont
				{Coecke}}, \bibinfo {author} {\bibfnamefont {T.}~\bibnamefont {Fritz}},\ and\
			\bibinfo {author} {\bibfnamefont {R.~W.}\ \bibnamefont {Spekkens}},\
		}\bibfield  {title} {\bibinfo {title} {A mathematical theory of resources},\
		}\href {https://doi.org/10.1016/j.ic.2016.02.008} {\bibfield  {journal}
			{\bibinfo  {journal} {Inf. Comput.}\ }\textbf {\bibinfo {volume} {250}},\
			\bibinfo {pages} {59 } (\bibinfo {year} {2016})}\BibitemShut {NoStop}%
		\bibitem [{\citenamefont {Theurer}\ \emph {et~al.}(2017)\citenamefont
			{Theurer}, \citenamefont {Killoran}, \citenamefont {Egloff},\ and\
			\citenamefont {Plenio}}]{Theurer2017}%
		\BibitemOpen
		\bibfield  {author} {\bibinfo {author} {\bibfnamefont {T.}~\bibnamefont
				{Theurer}}, \bibinfo {author} {\bibfnamefont {N.}~\bibnamefont {Killoran}},
			\bibinfo {author} {\bibfnamefont {D.}~\bibnamefont {Egloff}},\ and\ \bibinfo
			{author} {\bibfnamefont {M.~B.}\ \bibnamefont {Plenio}},\ }\bibfield  {title}
		{\bibinfo {title} {Resource theory of superposition},\ }\href
		{https://doi.org/10.1103/PhysRevLett.119.230401} {\bibfield  {journal}
			{\bibinfo  {journal} {Phys. Rev. Lett.}\ }\textbf {\bibinfo {volume} {119}},\
			\bibinfo {pages} {230401} (\bibinfo {year} {2017})}\BibitemShut {NoStop}%
		\bibitem [{\citenamefont {Streltsov}\ \emph {et~al.}(2017)\citenamefont
			{Streltsov}, \citenamefont {Adesso},\ and\ \citenamefont
			{Plenio}}]{Streltsov2017c}%
		\BibitemOpen
		\bibfield  {author} {\bibinfo {author} {\bibfnamefont {A.}~\bibnamefont
				{Streltsov}}, \bibinfo {author} {\bibfnamefont {G.}~\bibnamefont {Adesso}},\
			and\ \bibinfo {author} {\bibfnamefont {M.~B.}\ \bibnamefont {Plenio}},\
		}\bibfield  {title} {\bibinfo {title} {Colloquium: Quantum coherence as a
				resource},\ }\href {https://doi.org/10.1103/RevModPhys.89.041003} {\bibfield
			{journal} {\bibinfo  {journal} {Rev. Mod. Phys.}\ }\textbf {\bibinfo {volume}
				{89}},\ \bibinfo {pages} {041003} (\bibinfo {year} {2017})}\BibitemShut
		{NoStop}%
		\bibitem [{\citenamefont {Tan}\ \emph {et~al.}(2017)\citenamefont {Tan},
			\citenamefont {Volkoff}, \citenamefont {Kwon},\ and\ \citenamefont
			{Jeong}}]{Tan2017}%
		\BibitemOpen
		\bibfield  {author} {\bibinfo {author} {\bibfnamefont {K.~C.}\ \bibnamefont
				{Tan}}, \bibinfo {author} {\bibfnamefont {T.}~\bibnamefont {Volkoff}},
			\bibinfo {author} {\bibfnamefont {H.}~\bibnamefont {Kwon}},\ and\ \bibinfo
			{author} {\bibfnamefont {H.}~\bibnamefont {Jeong}},\ }\bibfield  {title}
		{\bibinfo {title} {Quantifying the coherence between coherent states},\
		}\href {https://doi.org/10.1103/PhysRevLett.119.190405} {\bibfield  {journal}
			{\bibinfo  {journal} {Phys. Rev. Lett.}\ }\textbf {\bibinfo {volume} {119}},\
			\bibinfo {pages} {190405} (\bibinfo {year} {2017})}\BibitemShut {NoStop}%
		\bibitem [{\citenamefont {Egloff}\ \emph {et~al.}(2018)\citenamefont {Egloff},
			\citenamefont {Matera}, \citenamefont {Theurer},\ and\ \citenamefont
			{Plenio}}]{Egloff2018}%
		\BibitemOpen
		\bibfield  {author} {\bibinfo {author} {\bibfnamefont {D.}~\bibnamefont
				{Egloff}}, \bibinfo {author} {\bibfnamefont {J.~M.}\ \bibnamefont {Matera}},
			\bibinfo {author} {\bibfnamefont {T.}~\bibnamefont {Theurer}},\ and\ \bibinfo
			{author} {\bibfnamefont {M.~B.}\ \bibnamefont {Plenio}},\ }\bibfield  {title}
		{\bibinfo {title} {Of local operations and physical wires},\ }\href
		{https://doi.org/10.1103/PhysRevX.8.031005} {\bibfield  {journal} {\bibinfo
				{journal} {Phys. Rev. X}\ }\textbf {\bibinfo {volume} {8}},\ \bibinfo {pages}
			{031005} (\bibinfo {year} {2018})}\BibitemShut {NoStop}%
		\bibitem [{\citenamefont {Yadin}\ \emph {et~al.}(2018)\citenamefont {Yadin},
			\citenamefont {Binder}, \citenamefont {Thompson}, \citenamefont
			{Narasimhachar}, \citenamefont {Gu},\ and\ \citenamefont {Kim}}]{Yadin2018}%
		\BibitemOpen
		\bibfield  {author} {\bibinfo {author} {\bibfnamefont {B.}~\bibnamefont
				{Yadin}}, \bibinfo {author} {\bibfnamefont {F.~C.}\ \bibnamefont {Binder}},
			\bibinfo {author} {\bibfnamefont {J.}~\bibnamefont {Thompson}}, \bibinfo
			{author} {\bibfnamefont {V.}~\bibnamefont {Narasimhachar}}, \bibinfo {author}
			{\bibfnamefont {M.}~\bibnamefont {Gu}},\ and\ \bibinfo {author}
			{\bibfnamefont {M.~S.}\ \bibnamefont {Kim}},\ }\bibfield  {title} {\bibinfo
			{title} {Operational resource theory of continuous-variable
				nonclassicality},\ }\href {https://doi.org/10.1103/PhysRevX.8.041038}
		{\bibfield  {journal} {\bibinfo  {journal} {Phys. Rev. X}\ }\textbf {\bibinfo
				{volume} {8}},\ \bibinfo {pages} {041038} (\bibinfo {year}
			{2018})}\BibitemShut {NoStop}%
		\bibitem [{\citenamefont {Chitambar}\ and\ \citenamefont
			{Gour}(2019)}]{Chitambar2019}%
		\BibitemOpen
		\bibfield  {author} {\bibinfo {author} {\bibfnamefont {E.}~\bibnamefont
				{Chitambar}}\ and\ \bibinfo {author} {\bibfnamefont {G.}~\bibnamefont
				{Gour}},\ }\bibfield  {title} {\bibinfo {title} {Quantum resource theories},\
		}\href {https://doi.org/10.1103/RevModPhys.91.025001} {\bibfield  {journal}
			{\bibinfo  {journal} {Rev. Mod. Phys.}\ }\textbf {\bibinfo {volume} {91}},\
			\bibinfo {pages} {025001} (\bibinfo {year} {2019})}\BibitemShut {NoStop}%
		\bibitem [{\citenamefont {Bennett}\ \emph {et~al.}(2003)\citenamefont
			{Bennett}, \citenamefont {Harrow}, \citenamefont {Leung},\ and\ \citenamefont
			{Smolin}}]{Bennett2003}%
		\BibitemOpen
		\bibfield  {author} {\bibinfo {author} {\bibfnamefont {C.~H.}\ \bibnamefont
				{Bennett}}, \bibinfo {author} {\bibfnamefont {A.~W.}\ \bibnamefont {Harrow}},
			\bibinfo {author} {\bibfnamefont {D.~W.}\ \bibnamefont {Leung}},\ and\
			\bibinfo {author} {\bibfnamefont {J.~A.}\ \bibnamefont {Smolin}},\ }\bibfield
		{title} {\bibinfo {title} {On the capacities of bipartite {H}amiltonians and
				unitary gates},\ }\href {https://doi.org/10.1109/TIT.2003.814935} {\bibfield
			{journal} {\bibinfo  {journal} {IEEE Trans. Inf. Theory}\ }\textbf {\bibinfo
				{volume} {49}},\ \bibinfo {pages} {1895} (\bibinfo {year}
			{2003})}\BibitemShut {NoStop}%
		\bibitem [{\citenamefont {Mani}\ and\ \citenamefont
			{Karimipour}(2015)}]{Mani2015}%
		\BibitemOpen
		\bibfield  {author} {\bibinfo {author} {\bibfnamefont {A.}~\bibnamefont
				{Mani}}\ and\ \bibinfo {author} {\bibfnamefont {V.}~\bibnamefont
				{Karimipour}},\ }\bibfield  {title} {\bibinfo {title} {Cohering and
				decohering power of quantum channels},\ }\href
		{https://doi.org/10.1103/PhysRevA.92.032331} {\bibfield  {journal} {\bibinfo
				{journal} {Phys. Rev. A}\ }\textbf {\bibinfo {volume} {92}},\ \bibinfo
			{pages} {032331} (\bibinfo {year} {2015})}\BibitemShut {NoStop}%
		\bibitem [{\citenamefont {Xi}\ \emph {et~al.}(2015)\citenamefont {Xi},
			\citenamefont {Hu}, \citenamefont {Li},\ and\ \citenamefont {Fan}}]{Xi2015}%
		\BibitemOpen
		\bibfield  {author} {\bibinfo {author} {\bibfnamefont {Z.}~\bibnamefont
				{Xi}}, \bibinfo {author} {\bibfnamefont {M.}~\bibnamefont {Hu}}, \bibinfo
			{author} {\bibfnamefont {Y.}~\bibnamefont {Li}},\ and\ \bibinfo {author}
			{\bibfnamefont {H.}~\bibnamefont {Fan}},\ }\bibfield  {title} {\bibinfo
			{title} {Entropic characterization of coherence in quantum evolutions},\
		}\href {https://arxiv.org/abs/1510.06473} {\bibfield  {journal} {\bibinfo
				{journal} {arXiv:1510.06473}\ } (\bibinfo {year} {2015})}\BibitemShut
		{NoStop}%
		\bibitem [{\citenamefont {Garc{\'\i}a-D{\'\i}az}\ \emph
			{et~al.}(2016)\citenamefont {Garc{\'\i}a-D{\'\i}az}, \citenamefont {Egloff},\
			and\ \citenamefont {Plenio}}]{Garcia2016}%
		\BibitemOpen
		\bibfield  {author} {\bibinfo {author} {\bibfnamefont {M.}~\bibnamefont
				{Garc{\'\i}a-D{\'\i}az}}, \bibinfo {author} {\bibfnamefont {D.}~\bibnamefont
				{Egloff}},\ and\ \bibinfo {author} {\bibfnamefont {M.~B.}\ \bibnamefont
				{Plenio}},\ }\bibfield  {title} {\bibinfo {title} {A note on coherence power
				of n-dimensional unitary operators},\ }\href
		{http://www.rintonpress.com/xxqic16/qic-16-1516/1282-1294.pdf?CFID=36949750&CFTOKEN=8f1be1c2d03efbf3-4656F31C-FF10-9E7B-76E7FB5C90E92A56}
		{\bibfield  {journal} {\bibinfo  {journal} {Quant. Inf. Comp.}\ }\textbf
			{\bibinfo {volume} {16}},\ \bibinfo {pages} {1282} (\bibinfo {year}
			{2016})}\BibitemShut {NoStop}%
		\bibitem [{\citenamefont {Bu}\ \emph {et~al.}(2017)\citenamefont {Bu},
			\citenamefont {Kumar}, \citenamefont {Zhang},\ and\ \citenamefont
			{Wu}}]{Bu2017}%
		\BibitemOpen
		\bibfield  {author} {\bibinfo {author} {\bibfnamefont {K.}~\bibnamefont
				{Bu}}, \bibinfo {author} {\bibfnamefont {A.}~\bibnamefont {Kumar}}, \bibinfo
			{author} {\bibfnamefont {L.}~\bibnamefont {Zhang}},\ and\ \bibinfo {author}
			{\bibfnamefont {J.}~\bibnamefont {Wu}},\ }\bibfield  {title} {\bibinfo
			{title} {Cohering power of quantum operations},\ }\href
		{https://doi.org/10.1016/j.physleta.2017.03.022} {\bibfield  {journal}
			{\bibinfo  {journal} {Phys. Lett. A}\ }\textbf {\bibinfo {volume} {381}},\
			\bibinfo {pages} {1670 } (\bibinfo {year} {2017})}\BibitemShut {NoStop}%
		\bibitem [{\citenamefont {Eisert}\ \emph {et~al.}(2000)\citenamefont {Eisert},
			\citenamefont {Jacobs}, \citenamefont {Papadopoulos},\ and\ \citenamefont
			{Plenio}}]{Eisert2000}%
		\BibitemOpen
		\bibfield  {author} {\bibinfo {author} {\bibfnamefont {J.}~\bibnamefont
				{Eisert}}, \bibinfo {author} {\bibfnamefont {K.}~\bibnamefont {Jacobs}},
			\bibinfo {author} {\bibfnamefont {P.}~\bibnamefont {Papadopoulos}},\ and\
			\bibinfo {author} {\bibfnamefont {M.~B.}\ \bibnamefont {Plenio}},\ }\bibfield
		{title} {\bibinfo {title} {Optimal local implementation of nonlocal quantum
				gates},\ }\href {https://doi.org/10.1103/PhysRevA.62.052317} {\bibfield
			{journal} {\bibinfo  {journal} {Phys. Rev. A}\ }\textbf {\bibinfo {volume}
				{62}},\ \bibinfo {pages} {052317} (\bibinfo {year} {2000})}\BibitemShut
		{NoStop}%
		\bibitem [{\citenamefont {Collins}\ \emph {et~al.}(2001)\citenamefont
			{Collins}, \citenamefont {Linden},\ and\ \citenamefont
			{Popescu}}]{Collins2001}%
		\BibitemOpen
		\bibfield  {author} {\bibinfo {author} {\bibfnamefont {D.}~\bibnamefont
				{Collins}}, \bibinfo {author} {\bibfnamefont {N.}~\bibnamefont {Linden}},\
			and\ \bibinfo {author} {\bibfnamefont {S.}~\bibnamefont {Popescu}},\
		}\bibfield  {title} {\bibinfo {title} {Nonlocal content of quantum
				operations},\ }\href {https://doi.org/10.1103/PhysRevA.64.032302} {\bibfield
			{journal} {\bibinfo  {journal} {Phys. Rev. A}\ }\textbf {\bibinfo {volume}
				{64}},\ \bibinfo {pages} {032302} (\bibinfo {year} {2001})}\BibitemShut
		{NoStop}%
		\bibitem [{\citenamefont {Ben~Dana}\ \emph {et~al.}(2017)\citenamefont
			{Ben~Dana}, \citenamefont {Garc\'{\i}a~D\'{\i}az}, \citenamefont {Mejatty},\
			and\ \citenamefont {Winter}}]{Dana2017}%
		\BibitemOpen
		\bibfield  {author} {\bibinfo {author} {\bibfnamefont {K.}~\bibnamefont
				{Ben~Dana}}, \bibinfo {author} {\bibfnamefont {M.}~\bibnamefont
				{Garc\'{\i}a~D\'{\i}az}}, \bibinfo {author} {\bibfnamefont {M.}~\bibnamefont
				{Mejatty}},\ and\ \bibinfo {author} {\bibfnamefont {A.}~\bibnamefont
				{Winter}},\ }\bibfield  {title} {\bibinfo {title} {Resource theory of
				coherence: Beyond states},\ }\href
		{https://doi.org/10.1103/PhysRevA.95.062327} {\bibfield  {journal} {\bibinfo
				{journal} {Phys. Rev. A}\ }\textbf {\bibinfo {volume} {95}},\ \bibinfo
			{pages} {062327} (\bibinfo {year} {2017})}\BibitemShut {NoStop}%
		\bibitem [{\citenamefont {Zhuang}\ \emph {et~al.}(2018)\citenamefont {Zhuang},
			\citenamefont {Shor},\ and\ \citenamefont {Shapiro}}]{Zhuang2018}%
		\BibitemOpen
		\bibfield  {author} {\bibinfo {author} {\bibfnamefont {Q.}~\bibnamefont
				{Zhuang}}, \bibinfo {author} {\bibfnamefont {P.~W.}\ \bibnamefont {Shor}},\
			and\ \bibinfo {author} {\bibfnamefont {J.~H.}\ \bibnamefont {Shapiro}},\
		}\bibfield  {title} {\bibinfo {title} {Resource theory of non-{G}aussian
				operations},\ }\href {https://doi.org/10.1103/PhysRevA.97.052317} {\bibfield
			{journal} {\bibinfo  {journal} {Phys. Rev. A}\ }\textbf {\bibinfo {volume}
				{97}},\ \bibinfo {pages} {052317} (\bibinfo {year} {2018})}\BibitemShut
		{NoStop}%
		\bibitem [{\citenamefont {Theurer}\ \emph {et~al.}(2019)\citenamefont
			{Theurer}, \citenamefont {Egloff}, \citenamefont {Zhang},\ and\ \citenamefont
			{Plenio}}]{Theurer2019}%
		\BibitemOpen
		\bibfield  {author} {\bibinfo {author} {\bibfnamefont {T.}~\bibnamefont
				{Theurer}}, \bibinfo {author} {\bibfnamefont {D.}~\bibnamefont {Egloff}},
			\bibinfo {author} {\bibfnamefont {L.}~\bibnamefont {Zhang}},\ and\ \bibinfo
			{author} {\bibfnamefont {M.~B.}\ \bibnamefont {Plenio}},\ }\bibfield  {title}
		{\bibinfo {title} {Quantifying operations with an application to coherence},\
		}\href {https://doi.org/10.1103/PhysRevLett.122.190405} {\bibfield  {journal}
			{\bibinfo  {journal} {Phys. Rev. Lett.}\ }\textbf {\bibinfo {volume} {122}},\
			\bibinfo {pages} {190405} (\bibinfo {year} {2019})}\BibitemShut {NoStop}%
		\bibitem [{\citenamefont {Wang}\ \emph {et~al.}(2019)\citenamefont {Wang},
			\citenamefont {Wilde},\ and\ \citenamefont {Su}}]{Wang2019a}%
		\BibitemOpen
		\bibfield  {author} {\bibinfo {author} {\bibfnamefont {X.}~\bibnamefont
				{Wang}}, \bibinfo {author} {\bibfnamefont {M.~M.}\ \bibnamefont {Wilde}},\
			and\ \bibinfo {author} {\bibfnamefont {Y.}~\bibnamefont {Su}},\ }\bibfield
		{title} {\bibinfo {title} {Quantifying the magic of quantum channels},\
		}\href {https://doi.org/10.1088/1367-2630/ab451d} {\bibfield  {journal}
			{\bibinfo  {journal} {New J. Phys.}\ }\textbf {\bibinfo {volume} {21}},\
			\bibinfo {pages} {103002} (\bibinfo {year} {2019})}\BibitemShut {NoStop}%
		\bibitem [{\citenamefont {Wang}\ and\ \citenamefont {Wilde}(2019)}]{Wang2019b}%
		\BibitemOpen
		\bibfield  {author} {\bibinfo {author} {\bibfnamefont {X.}~\bibnamefont
				{Wang}}\ and\ \bibinfo {author} {\bibfnamefont {M.~M.}\ \bibnamefont
				{Wilde}},\ }\bibfield  {title} {\bibinfo {title} {Resource theory of
				asymmetric distinguishability for quantum channels},\ }\href
		{https://doi.org/10.1103/PhysRevResearch.1.033169} {\bibfield  {journal}
			{\bibinfo  {journal} {Phys. Rev. Research}\ }\textbf {\bibinfo {volume}
				{1}},\ \bibinfo {pages} {033169} (\bibinfo {year} {2019})}\BibitemShut
		{NoStop}%
		\bibitem [{\citenamefont {Liu}\ and\ \citenamefont {Yuan}(2020)}]{Liu2020}%
		\BibitemOpen
		\bibfield  {author} {\bibinfo {author} {\bibfnamefont {Y.}~\bibnamefont
				{Liu}}\ and\ \bibinfo {author} {\bibfnamefont {X.}~\bibnamefont {Yuan}},\
		}\bibfield  {title} {\bibinfo {title} {Operational resource theory of quantum
				channels},\ }\href {https://doi.org/10.1103/PhysRevResearch.2.012035}
		{\bibfield  {journal} {\bibinfo  {journal} {Phys. Rev. Research}\ }\textbf
			{\bibinfo {volume} {2}},\ \bibinfo {pages} {012035(R)} (\bibinfo {year}
			{2020})}\BibitemShut {NoStop}%
		\bibitem [{\citenamefont {Liu}\ and\ \citenamefont {Winter}(2019)}]{Liu2019}%
		\BibitemOpen
		\bibfield  {author} {\bibinfo {author} {\bibfnamefont {Z.-W.}\ \bibnamefont
				{Liu}}\ and\ \bibinfo {author} {\bibfnamefont {A.}~\bibnamefont {Winter}},\
		}\bibfield  {title} {\bibinfo {title} {Resource theories of quantum channels
				and the universal role of resource erasure},\ }\href
		{https://arxiv.org/abs/1904.04201} {\bibfield  {journal} {\bibinfo  {journal}
				{arXiv:1904.04201}\ } (\bibinfo {year} {2019})}\BibitemShut {NoStop}%
		\bibitem [{\citenamefont {Gour}\ and\ \citenamefont
			{Winter}(2019)}]{Gour2019a}%
		\BibitemOpen
		\bibfield  {author} {\bibinfo {author} {\bibfnamefont {G.}~\bibnamefont
				{Gour}}\ and\ \bibinfo {author} {\bibfnamefont {A.}~\bibnamefont {Winter}},\
		}\bibfield  {title} {\bibinfo {title} {How to quantify a dynamical quantum
				resource},\ }\href {https://doi.org/10.1103/PhysRevLett.123.150401}
		{\bibfield  {journal} {\bibinfo  {journal} {Phys. Rev. Lett.}\ }\textbf
			{\bibinfo {volume} {123}},\ \bibinfo {pages} {150401} (\bibinfo {year}
			{2019})}\BibitemShut {NoStop}%
		\bibitem [{\citenamefont {Gour}\ and\ \citenamefont
			{Scandolo}(2019)}]{Gour2019b}%
		\BibitemOpen
		\bibfield  {author} {\bibinfo {author} {\bibfnamefont {G.}~\bibnamefont
				{Gour}}\ and\ \bibinfo {author} {\bibfnamefont {C.~M.}\ \bibnamefont
				{Scandolo}},\ }\bibfield  {title} {\bibinfo {title} {The entanglement of a
				bipartite channel},\ }\href {https://arxiv.org/abs/1907.02552} {\bibfield
			{journal} {\bibinfo  {journal} {arXiv:1907.02552}\ } (\bibinfo {year}
			{2019})}\BibitemShut {NoStop}%
		\bibitem [{\citenamefont {Saxena}\ \emph {et~al.}(2020)\citenamefont {Saxena},
			\citenamefont {Chitambar},\ and\ \citenamefont {Gour}}]{Saxena2020}%
		\BibitemOpen
		\bibfield  {author} {\bibinfo {author} {\bibfnamefont {G.}~\bibnamefont
				{Saxena}}, \bibinfo {author} {\bibfnamefont {E.}~\bibnamefont {Chitambar}},\
			and\ \bibinfo {author} {\bibfnamefont {G.}~\bibnamefont {Gour}},\ }\bibfield
		{title} {\bibinfo {title} {Dynamical resource theory of quantum coherence},\
		}\href {https://doi.org/10.1103/PhysRevResearch.2.023298} {\bibfield
			{journal} {\bibinfo  {journal} {Phys. Rev. Research}\ }\textbf {\bibinfo
				{volume} {2}},\ \bibinfo {pages} {023298} (\bibinfo {year}
			{2020})}\BibitemShut {NoStop}%
		\bibitem [{\citenamefont {Gour}\ and\ \citenamefont
			{Scandolo}(2020)}]{Gour2020}%
		\BibitemOpen
		\bibfield  {author} {\bibinfo {author} {\bibfnamefont {G.}~\bibnamefont
				{Gour}}\ and\ \bibinfo {author} {\bibfnamefont {C.~M.}\ \bibnamefont
				{Scandolo}},\ }\bibfield  {title} {\bibinfo {title} {Dynamical
				entanglement},\ }\href {https://doi.org/10.1103/PhysRevLett.125.180505}
		{\bibfield  {journal} {\bibinfo  {journal} {Phys. Rev. Lett.}\ }\textbf
			{\bibinfo {volume} {125}},\ \bibinfo {pages} {180505} (\bibinfo {year}
			{2020})}\BibitemShut {NoStop}%
		\bibitem [{\citenamefont {Takagi}\ and\ \citenamefont
			{Regula}(2019)}]{Takagi2019a}%
		\BibitemOpen
		\bibfield  {author} {\bibinfo {author} {\bibfnamefont {R.}~\bibnamefont
				{Takagi}}\ and\ \bibinfo {author} {\bibfnamefont {B.}~\bibnamefont
				{Regula}},\ }\bibfield  {title} {\bibinfo {title} {General resource theories
				in quantum mechanics and beyond: Operational characterization via
				discrimination tasks},\ }\href {https://doi.org/10.1103/PhysRevX.9.031053}
		{\bibfield  {journal} {\bibinfo  {journal} {Phys. Rev. X}\ }\textbf {\bibinfo
				{volume} {9}},\ \bibinfo {pages} {031053} (\bibinfo {year}
			{2019})}\BibitemShut {NoStop}%
		\bibitem [{\citenamefont {Takagi}\ \emph {et~al.}(2020)\citenamefont {Takagi},
			\citenamefont {Wang},\ and\ \citenamefont {Hayashi}}]{Takagi2020a}%
		\BibitemOpen
		\bibfield  {author} {\bibinfo {author} {\bibfnamefont {R.}~\bibnamefont
				{Takagi}}, \bibinfo {author} {\bibfnamefont {K.}~\bibnamefont {Wang}},\ and\
			\bibinfo {author} {\bibfnamefont {M.}~\bibnamefont {Hayashi}},\ }\bibfield
		{title} {\bibinfo {title} {Application of the resource theory of channels to
				communication scenarios},\ }\href
		{https://doi.org/10.1103/PhysRevLett.124.120502} {\bibfield  {journal}
			{\bibinfo  {journal} {Phys. Rev. Lett.}\ }\textbf {\bibinfo {volume} {124}},\
			\bibinfo {pages} {120502} (\bibinfo {year} {2020})}\BibitemShut {NoStop}%
		\bibitem [{\citenamefont {Takagi}(2020)}]{Takagi2020b}%
		\BibitemOpen
		\bibfield  {author} {\bibinfo {author} {\bibfnamefont {R.}~\bibnamefont
				{Takagi}},\ }\bibfield  {title} {\bibinfo {title} {Optimal resource cost for
				error mitigation},\ }\href {https://arxiv.org/abs/2006.12509} {\bibfield
			{journal} {\bibinfo  {journal} {arXiv:2006.12509}\ } (\bibinfo {year}
			{2020})}\BibitemShut {NoStop}%
		\bibitem [{\citenamefont {B{\"a}uml}\ \emph {et~al.}(2019)\citenamefont
			{B{\"a}uml}, \citenamefont {Das}, \citenamefont {Wang},\ and\ \citenamefont
			{Wilde}}]{Bauml2019}%
		\BibitemOpen
		\bibfield  {author} {\bibinfo {author} {\bibfnamefont {S.}~\bibnamefont
				{B{\"a}uml}}, \bibinfo {author} {\bibfnamefont {S.}~\bibnamefont {Das}},
			\bibinfo {author} {\bibfnamefont {X.}~\bibnamefont {Wang}},\ and\ \bibinfo
			{author} {\bibfnamefont {M.~M.}\ \bibnamefont {Wilde}},\ }\bibfield  {title}
		{\bibinfo {title} {Resource theory of entanglement for bipartite quantum
				channels},\ }\href {https://arxiv.org/abs/1907.04181} {\bibfield  {journal}
			{\bibinfo  {journal} {arXiv:1907.04181}\ } (\bibinfo {year}
			{2019})}\BibitemShut {NoStop}%
		\bibitem [{\citenamefont {Li}\ \emph {et~al.}(2020)\citenamefont {Li},
			\citenamefont {Bu},\ and\ \citenamefont {Liu}}]{Li2020}%
		\BibitemOpen
		\bibfield  {author} {\bibinfo {author} {\bibfnamefont {L.}~\bibnamefont
				{Li}}, \bibinfo {author} {\bibfnamefont {K.}~\bibnamefont {Bu}},\ and\
			\bibinfo {author} {\bibfnamefont {Z.-W.}\ \bibnamefont {Liu}},\ }\bibfield
		{title} {\bibinfo {title} {Quantifying the resource content of quantum
				channels: An operational approach},\ }\href
		{https://doi.org/10.1103/PhysRevA.101.022335} {\bibfield  {journal} {\bibinfo
				{journal} {Phys. Rev. A}\ }\textbf {\bibinfo {volume} {101}},\ \bibinfo
			{pages} {022335} (\bibinfo {year} {2020})}\BibitemShut {NoStop}%
		\bibitem [{\citenamefont {Yadin}\ \emph {et~al.}(2016)\citenamefont {Yadin},
			\citenamefont {Ma}, \citenamefont {Girolami}, \citenamefont {Gu},\ and\
			\citenamefont {Vedral}}]{Yadin2016}%
		\BibitemOpen
		\bibfield  {author} {\bibinfo {author} {\bibfnamefont {B.}~\bibnamefont
				{Yadin}}, \bibinfo {author} {\bibfnamefont {J.}~\bibnamefont {Ma}}, \bibinfo
			{author} {\bibfnamefont {D.}~\bibnamefont {Girolami}}, \bibinfo {author}
			{\bibfnamefont {M.}~\bibnamefont {Gu}},\ and\ \bibinfo {author}
			{\bibfnamefont {V.}~\bibnamefont {Vedral}},\ }\bibfield  {title} {\bibinfo
			{title} {Quantum processes which do not use coherence},\ }\href
		{https://doi.org/10.1103/PhysRevX.6.041028} {\bibfield  {journal} {\bibinfo
				{journal} {Phys. Rev. X}\ }\textbf {\bibinfo {volume} {6}},\ \bibinfo {pages}
			{041028} (\bibinfo {year} {2016})}\BibitemShut {NoStop}%
		\bibitem [{\citenamefont {Smirne}\ \emph {et~al.}(2018)\citenamefont {Smirne},
			\citenamefont {Egloff}, \citenamefont {Garc{\'\i}a~D{\'\i}az}, \citenamefont
			{Plenio},\ and\ \citenamefont {Huelga}}]{Smirne2018}%
		\BibitemOpen
		\bibfield  {author} {\bibinfo {author} {\bibfnamefont {A.}~\bibnamefont
				{Smirne}}, \bibinfo {author} {\bibfnamefont {D.}~\bibnamefont {Egloff}},
			\bibinfo {author} {\bibfnamefont {M.}~\bibnamefont {Garc{\'\i}a~D{\'\i}az}},
			\bibinfo {author} {\bibfnamefont {M.~B.}\ \bibnamefont {Plenio}},\ and\
			\bibinfo {author} {\bibfnamefont {S.~F.}\ \bibnamefont {Huelga}},\ }\bibfield
		{title} {\bibinfo {title} {Coherence and non-classicality of quantum
				{M}arkov processes},\ }\href {https://doi.org/10.1088/2058-9565/aaebd5}
		{\bibfield  {journal} {\bibinfo  {journal} {Quantum Sci. Technol.}\ }\textbf
			{\bibinfo {volume} {4}},\ \bibinfo {pages} {01LT01} (\bibinfo {year}
			{2018})}\BibitemShut {NoStop}%
		\bibitem [{\citenamefont {Xu}\ \emph {et~al.}(2020)\citenamefont {Xu},
			\citenamefont {Xu}, \citenamefont {Theurer}, \citenamefont {Egloff},
			\citenamefont {Liu}, \citenamefont {Yu}, \citenamefont {Plenio},\ and\
			\citenamefont {Zhang}}]{Xu2020}%
		\BibitemOpen
		\bibfield  {author} {\bibinfo {author} {\bibfnamefont {H.}~\bibnamefont
				{Xu}}, \bibinfo {author} {\bibfnamefont {F.}~\bibnamefont {Xu}}, \bibinfo
			{author} {\bibfnamefont {T.}~\bibnamefont {Theurer}}, \bibinfo {author}
			{\bibfnamefont {D.}~\bibnamefont {Egloff}}, \bibinfo {author} {\bibfnamefont
				{Z.-W.}\ \bibnamefont {Liu}}, \bibinfo {author} {\bibfnamefont
				{N.}~\bibnamefont {Yu}}, \bibinfo {author} {\bibfnamefont {M.~B.}\
				\bibnamefont {Plenio}},\ and\ \bibinfo {author} {\bibfnamefont
				{L.}~\bibnamefont {Zhang}},\ }\bibfield  {title} {\bibinfo {title}
			{Experimental quantification of coherence of a tunable quantum detector},\
		}\href {https://doi.org/10.1103/PhysRevLett.125.060404} {\bibfield  {journal}
			{\bibinfo  {journal} {Phys. Rev. Lett.}\ }\textbf {\bibinfo {volume} {125}},\
			\bibinfo {pages} {060404} (\bibinfo {year} {2020})}\BibitemShut {NoStop}%
		\bibitem [{\citenamefont {Hillery}(2016)}]{Hillery2016}%
		\BibitemOpen
		\bibfield  {author} {\bibinfo {author} {\bibfnamefont {M.}~\bibnamefont
				{Hillery}},\ }\bibfield  {title} {\bibinfo {title} {Coherence as a resource
				in decision problems: The deutsch-jozsa algorithm and a variation},\ }\href
		{https://doi.org/10.1103/PhysRevA.93.012111} {\bibfield  {journal} {\bibinfo
				{journal} {Phys. Rev. A}\ }\textbf {\bibinfo {volume} {93}},\ \bibinfo
			{pages} {012111} (\bibinfo {year} {2016})}\BibitemShut {NoStop}%
		\bibitem [{\citenamefont {Napoli}\ \emph {et~al.}(2016)\citenamefont {Napoli},
			\citenamefont {Bromley}, \citenamefont {Cianciaruso}, \citenamefont {Piani},
			\citenamefont {Johnston},\ and\ \citenamefont {Adesso}}]{Napoli2016}%
		\BibitemOpen
		\bibfield  {author} {\bibinfo {author} {\bibfnamefont {C.}~\bibnamefont
				{Napoli}}, \bibinfo {author} {\bibfnamefont {T.~R.}\ \bibnamefont {Bromley}},
			\bibinfo {author} {\bibfnamefont {M.}~\bibnamefont {Cianciaruso}}, \bibinfo
			{author} {\bibfnamefont {M.}~\bibnamefont {Piani}}, \bibinfo {author}
			{\bibfnamefont {N.}~\bibnamefont {Johnston}},\ and\ \bibinfo {author}
			{\bibfnamefont {G.}~\bibnamefont {Adesso}},\ }\bibfield  {title} {\bibinfo
			{title} {Robustness of coherence: An operational and observable measure of
				quantum coherence},\ }\href {https://doi.org/10.1103/PhysRevLett.116.150502}
		{\bibfield  {journal} {\bibinfo  {journal} {Phys. Rev. Lett.}\ }\textbf
			{\bibinfo {volume} {116}},\ \bibinfo {pages} {150502} (\bibinfo {year}
			{2016})}\BibitemShut {NoStop}%
		\bibitem [{\citenamefont {Matera}\ \emph {et~al.}(2016)\citenamefont {Matera},
			\citenamefont {Egloff}, \citenamefont {Killoran},\ and\ \citenamefont
			{Plenio}}]{Matera2016}%
		\BibitemOpen
		\bibfield  {author} {\bibinfo {author} {\bibfnamefont {J.~M.}\ \bibnamefont
				{Matera}}, \bibinfo {author} {\bibfnamefont {D.}~\bibnamefont {Egloff}},
			\bibinfo {author} {\bibfnamefont {N.}~\bibnamefont {Killoran}},\ and\
			\bibinfo {author} {\bibfnamefont {M.~B.}\ \bibnamefont {Plenio}},\ }\bibfield
		{title} {\bibinfo {title} {Coherent control of quantum systems as a resource
				theory},\ }\href {https://doi.org/10.1088/2058-9565/1/1/01LT01} {\bibfield
			{journal} {\bibinfo  {journal} {Quantum Sci. Technol.}\ }\textbf {\bibinfo
				{volume} {1}},\ \bibinfo {pages} {01LT01} (\bibinfo {year}
			{2016})}\BibitemShut {NoStop}%
		\bibitem [{\citenamefont {Biswas}\ \emph {et~al.}(2017)\citenamefont {Biswas},
			\citenamefont {Garc{\'\i}a~D{\'\i}az},\ and\ \citenamefont
			{Winter}}]{Biswas2017}%
		\BibitemOpen
		\bibfield  {author} {\bibinfo {author} {\bibfnamefont {T.}~\bibnamefont
				{Biswas}}, \bibinfo {author} {\bibfnamefont {M.}~\bibnamefont
				{Garc{\'\i}a~D{\'\i}az}},\ and\ \bibinfo {author} {\bibfnamefont
				{A.}~\bibnamefont {Winter}},\ }\bibfield  {title} {\bibinfo {title}
			{Interferometric visibility and coherence},\ }\href
		{https://dx.doi.org/10.1098/rspa.2017.0170} {\bibfield  {journal} {\bibinfo
				{journal} {Proc. R. Soc. A}\ }\textbf {\bibinfo {volume} {473}},\ \bibinfo
			{pages} {20170170} (\bibinfo {year} {2017})}\BibitemShut {NoStop}%
		\bibitem [{\citenamefont {Winter}\ and\ \citenamefont
			{Yang}(2016)}]{Winter2016}%
		\BibitemOpen
		\bibfield  {author} {\bibinfo {author} {\bibfnamefont {A.}~\bibnamefont
				{Winter}}\ and\ \bibinfo {author} {\bibfnamefont {D.}~\bibnamefont {Yang}},\
		}\bibfield  {title} {\bibinfo {title} {Operational resource theory of
				coherence},\ }\href {https://doi.org/10.1103/PhysRevLett.116.120404}
		{\bibfield  {journal} {\bibinfo  {journal} {Phys. Rev. Lett.}\ }\textbf
			{\bibinfo {volume} {116}},\ \bibinfo {pages} {120404} (\bibinfo {year}
			{2016})}\BibitemShut {NoStop}%
		\bibitem [{\citenamefont {Bera}\ \emph {et~al.}(2015)\citenamefont {Bera},
			\citenamefont {Qureshi}, \citenamefont {Siddiqui},\ and\ \citenamefont
			{Pati}}]{Bera2015}%
		\BibitemOpen
		\bibfield  {author} {\bibinfo {author} {\bibfnamefont {M.~N.}\ \bibnamefont
				{Bera}}, \bibinfo {author} {\bibfnamefont {T.}~\bibnamefont {Qureshi}},
			\bibinfo {author} {\bibfnamefont {M.~A.}\ \bibnamefont {Siddiqui}},\ and\
			\bibinfo {author} {\bibfnamefont {A.~K.}\ \bibnamefont {Pati}},\ }\bibfield
		{title} {\bibinfo {title} {Duality of quantum coherence and path
				distinguishability},\ }\href {https://doi.org/10.1103/PhysRevA.92.012118}
		{\bibfield  {journal} {\bibinfo  {journal} {Phys. Rev. A}\ }\textbf {\bibinfo
				{volume} {92}},\ \bibinfo {pages} {012118} (\bibinfo {year}
			{2015})}\BibitemShut {NoStop}%
		\bibitem [{\citenamefont {Bagan}\ \emph {et~al.}(2016)\citenamefont {Bagan},
			\citenamefont {Bergou}, \citenamefont {Cottrell},\ and\ \citenamefont
			{Hillery}}]{Bagan2016}%
		\BibitemOpen
		\bibfield  {author} {\bibinfo {author} {\bibfnamefont {E.}~\bibnamefont
				{Bagan}}, \bibinfo {author} {\bibfnamefont {J.~A.}\ \bibnamefont {Bergou}},
			\bibinfo {author} {\bibfnamefont {S.~S.}\ \bibnamefont {Cottrell}},\ and\
			\bibinfo {author} {\bibfnamefont {M.}~\bibnamefont {Hillery}},\ }\bibfield
		{title} {\bibinfo {title} {Relations between coherence and path
				information},\ }\href {https://doi.org/10.1103/PhysRevLett.116.160406}
		{\bibfield  {journal} {\bibinfo  {journal} {Phys. Rev. Lett.}\ }\textbf
			{\bibinfo {volume} {116}},\ \bibinfo {pages} {160406} (\bibinfo {year}
			{2016})}\BibitemShut {NoStop}%
		\bibitem [{\citenamefont {Paul}\ and\ \citenamefont
			{Qureshi}(2017)}]{Paul2017}%
		\BibitemOpen
		\bibfield  {author} {\bibinfo {author} {\bibfnamefont {T.}~\bibnamefont
				{Paul}}\ and\ \bibinfo {author} {\bibfnamefont {T.}~\bibnamefont {Qureshi}},\
		}\bibfield  {title} {\bibinfo {title} {Measuring quantum coherence in
				multislit interference},\ }\href {https://doi.org/10.1103/PhysRevA.95.042110}
		{\bibfield  {journal} {\bibinfo  {journal} {Phys. Rev. A}\ }\textbf {\bibinfo
				{volume} {95}},\ \bibinfo {pages} {042110} (\bibinfo {year}
			{2017})}\BibitemShut {NoStop}%
		\bibitem [{\citenamefont {Chiribella}\ \emph {et~al.}(2008)\citenamefont
			{Chiribella}, \citenamefont {D'Ariano},\ and\ \citenamefont
			{Perinotti}}]{Chiribella2008}%
		\BibitemOpen
		\bibfield  {author} {\bibinfo {author} {\bibfnamefont {G.}~\bibnamefont
				{Chiribella}}, \bibinfo {author} {\bibfnamefont {G.~M.}\ \bibnamefont
				{D'Ariano}},\ and\ \bibinfo {author} {\bibfnamefont {P.}~\bibnamefont
				{Perinotti}},\ }\bibfield  {title} {\bibinfo {title} {Transforming quantum
				operations: Quantum supermaps},\ }\href
		{https://doi.org/10.1209/0295-5075/83/30004} {\bibfield  {journal} {\bibinfo
				{journal} {Europhys. Lett.}\ }\textbf {\bibinfo {volume} {83}},\ \bibinfo
			{pages} {30004} (\bibinfo {year} {2008})}\BibitemShut {NoStop}%
		\bibitem [{\citenamefont {Liu}\ \emph {et~al.}(2017)\citenamefont {Liu},
			\citenamefont {Hu},\ and\ \citenamefont {Lloyd}}]{Liu2017}%
		\BibitemOpen
		\bibfield  {author} {\bibinfo {author} {\bibfnamefont {Z.-W.}\ \bibnamefont
				{Liu}}, \bibinfo {author} {\bibfnamefont {X.}~\bibnamefont {Hu}},\ and\
			\bibinfo {author} {\bibfnamefont {S.}~\bibnamefont {Lloyd}},\ }\bibfield
		{title} {\bibinfo {title} {Resource destroying maps},\ }\href
		{https://doi.org/10.1103/PhysRevLett.118.060502} {\bibfield  {journal}
			{\bibinfo  {journal} {Phys. Rev. Lett.}\ }\textbf {\bibinfo {volume} {118}},\
			\bibinfo {pages} {060502} (\bibinfo {year} {2017})}\BibitemShut {NoStop}%
		\bibitem [{\citenamefont {Meznaric}\ \emph {et~al.}(2013)\citenamefont
			{Meznaric}, \citenamefont {Clark},\ and\ \citenamefont
			{Datta}}]{Meznaric2013}%
		\BibitemOpen
		\bibfield  {author} {\bibinfo {author} {\bibfnamefont {S.}~\bibnamefont
				{Meznaric}}, \bibinfo {author} {\bibfnamefont {S.~R.}\ \bibnamefont
				{Clark}},\ and\ \bibinfo {author} {\bibfnamefont {A.}~\bibnamefont {Datta}},\
		}\bibfield  {title} {\bibinfo {title} {Quantifying the nonclassicality of
				operations},\ }\href {https://doi.org/10.1103/PhysRevLett.110.070502}
		{\bibfield  {journal} {\bibinfo  {journal} {Phys. Rev. Lett.}\ }\textbf
			{\bibinfo {volume} {110}},\ \bibinfo {pages} {070502} (\bibinfo {year}
			{2013})}\BibitemShut {NoStop}%
		\bibitem [{\citenamefont {Garc{\'{i}}a~D{\'{i}}az}\ \emph
			{et~al.}(2018)\citenamefont {Garc{\'{i}}a~D{\'{i}}az}, \citenamefont {Fang},
			\citenamefont {Wang}, \citenamefont {Rosati}, \citenamefont {Skotiniotis},
			\citenamefont {Calsamiglia},\ and\ \citenamefont {Winter}}]{Diaz2018}%
		\BibitemOpen
		\bibfield  {author} {\bibinfo {author} {\bibfnamefont {M.}~\bibnamefont
				{Garc{\'{i}}a~D{\'{i}}az}}, \bibinfo {author} {\bibfnamefont
				{K.}~\bibnamefont {Fang}}, \bibinfo {author} {\bibfnamefont {X.}~\bibnamefont
				{Wang}}, \bibinfo {author} {\bibfnamefont {M.}~\bibnamefont {Rosati}},
			\bibinfo {author} {\bibfnamefont {M.}~\bibnamefont {Skotiniotis}}, \bibinfo
			{author} {\bibfnamefont {J.}~\bibnamefont {Calsamiglia}},\ and\ \bibinfo
			{author} {\bibfnamefont {A.}~\bibnamefont {Winter}},\ }\bibfield  {title}
		{\bibinfo {title} {Using and reusing coherence to realize quantum
				processes},\ }\href {https://doi.org/10.22331/q-2018-10-19-100} {\bibfield
			{journal} {\bibinfo  {journal} {{Quantum}}\ }\textbf {\bibinfo {volume}
				{2}},\ \bibinfo {pages} {100} (\bibinfo {year} {2018})}\BibitemShut {NoStop}%
		\bibitem [{\citenamefont {Plenio}(2005)}]{Plenio2005}%
		\BibitemOpen
		\bibfield  {author} {\bibinfo {author} {\bibfnamefont {M.~B.}\ \bibnamefont
				{Plenio}},\ }\bibfield  {title} {\bibinfo {title} {Logarithmic negativity: A
				full entanglement monotone that is not convex},\ }\href
		{https://doi.org/10.1103/PhysRevLett.95.090503} {\bibfield  {journal}
			{\bibinfo  {journal} {Phys. Rev. Lett.}\ }\textbf {\bibinfo {volume} {95}},\
			\bibinfo {pages} {090503} (\bibinfo {year} {2005})}\BibitemShut {NoStop}%
		\bibitem [{\citenamefont {Watrous}(2018)}]{Watrous2018}%
		\BibitemOpen
		\bibfield  {author} {\bibinfo {author} {\bibfnamefont {J.}~\bibnamefont
				{Watrous}},\ }\href@noop {} {\emph {\bibinfo {title} {The theory of quantum
					information}}}\ (\bibinfo  {publisher} {Cambridge University Press,
			Cambridge, England},\ \bibinfo {year} {2018})\BibitemShut {NoStop}%
		\bibitem [{\citenamefont {Lami}\ \emph {et~al.}(2019)\citenamefont {Lami},
			\citenamefont {Regula},\ and\ \citenamefont {Adesso}}]{Lami2019}%
		\BibitemOpen
		\bibfield  {author} {\bibinfo {author} {\bibfnamefont {L.}~\bibnamefont
				{Lami}}, \bibinfo {author} {\bibfnamefont {B.}~\bibnamefont {Regula}},\ and\
			\bibinfo {author} {\bibfnamefont {G.}~\bibnamefont {Adesso}},\ }\bibfield
		{title} {\bibinfo {title} {Generic bound coherence under strictly incoherent
				operations},\ }\href {https://doi.org/10.1103/PhysRevLett.122.150402}
		{\bibfield  {journal} {\bibinfo  {journal} {Phys. Rev. Lett.}\ }\textbf
			{\bibinfo {volume} {122}},\ \bibinfo {pages} {150402} (\bibinfo {year}
			{2019})}\BibitemShut {NoStop}%
		\bibitem [{\citenamefont {Lami}(2020)}]{Lami2020b}%
		\BibitemOpen
		\bibfield  {author} {\bibinfo {author} {\bibfnamefont {L.}~\bibnamefont
				{Lami}},\ }\bibfield  {title} {\bibinfo {title} {Completing the grand tour of
				asymptotic quantum coherence manipulation},\ }\href
		{https://doi.org/10.1109/TIT.2019.2945798} {\bibfield  {journal} {\bibinfo
				{journal} {IEEE Trans. Inf. Theory}\ }\textbf {\bibinfo {volume} {66}},\
			\bibinfo {pages} {2165} (\bibinfo {year} {2020})}\BibitemShut {NoStop}%
		\bibitem [{\citenamefont {Vidal}\ and\ \citenamefont
			{Tarrach}(1999)}]{Vidal1999}%
		\BibitemOpen
		\bibfield  {author} {\bibinfo {author} {\bibfnamefont {G.}~\bibnamefont
				{Vidal}}\ and\ \bibinfo {author} {\bibfnamefont {R.}~\bibnamefont
				{Tarrach}},\ }\bibfield  {title} {\bibinfo {title} {Robustness of
				entanglement},\ }\href {https://doi.org/10.1103/PhysRevA.59.141} {\bibfield
			{journal} {\bibinfo  {journal} {Phys. Rev. A}\ }\textbf {\bibinfo {volume}
				{59}},\ \bibinfo {pages} {141} (\bibinfo {year} {1999})}\BibitemShut
		{NoStop}%
		\bibitem [{\citenamefont {Lewenstein}\ and\ \citenamefont
			{Sanpera}(1998)}]{Lewenstein1998}%
		\BibitemOpen
		\bibfield  {author} {\bibinfo {author} {\bibfnamefont {M.}~\bibnamefont
				{Lewenstein}}\ and\ \bibinfo {author} {\bibfnamefont {A.}~\bibnamefont
				{Sanpera}},\ }\bibfield  {title} {\bibinfo {title} {Separability and
				entanglement of composite quantum systems},\ }\href
		{https://doi.org/10.1103/PhysRevLett.80.2261} {\bibfield  {journal} {\bibinfo
				{journal} {Phys. Rev. Lett.}\ }\textbf {\bibinfo {volume} {80}},\ \bibinfo
			{pages} {2261} (\bibinfo {year} {1998})}\BibitemShut {NoStop}%
		\bibitem [{\citenamefont {Piani}\ \emph {et~al.}(2016)\citenamefont {Piani},
			\citenamefont {Cianciaruso}, \citenamefont {Bromley}, \citenamefont {Napoli},
			\citenamefont {Johnston},\ and\ \citenamefont {Adesso}}]{Piani2016}%
		\BibitemOpen
		\bibfield  {author} {\bibinfo {author} {\bibfnamefont {M.}~\bibnamefont
				{Piani}}, \bibinfo {author} {\bibfnamefont {M.}~\bibnamefont {Cianciaruso}},
			\bibinfo {author} {\bibfnamefont {T.~R.}\ \bibnamefont {Bromley}}, \bibinfo
			{author} {\bibfnamefont {C.}~\bibnamefont {Napoli}}, \bibinfo {author}
			{\bibfnamefont {N.}~\bibnamefont {Johnston}},\ and\ \bibinfo {author}
			{\bibfnamefont {G.}~\bibnamefont {Adesso}},\ }\bibfield  {title} {\bibinfo
			{title} {Robustness of asymmetry and coherence of quantum states},\ }\href
		{https://doi.org/10.1103/PhysRevA.93.042107} {\bibfield  {journal} {\bibinfo
				{journal} {Phys. Rev. A}\ }\textbf {\bibinfo {volume} {93}},\ \bibinfo
			{pages} {042107} (\bibinfo {year} {2016})}\BibitemShut {NoStop}%
		\bibitem [{\citenamefont {Skrzypczyk}\ and\ \citenamefont
			{Linden}(2019)}]{Skrzypczyk2019a}%
		\BibitemOpen
		\bibfield  {author} {\bibinfo {author} {\bibfnamefont {P.}~\bibnamefont
				{Skrzypczyk}}\ and\ \bibinfo {author} {\bibfnamefont {N.}~\bibnamefont
				{Linden}},\ }\bibfield  {title} {\bibinfo {title} {Robustness of measurement,
				discrimination games, and accessible information},\ }\href
		{https://doi.org/10.1103/PhysRevLett.122.140403} {\bibfield  {journal}
			{\bibinfo  {journal} {Phys. Rev. Lett.}\ }\textbf {\bibinfo {volume} {122}},\
			\bibinfo {pages} {140403} (\bibinfo {year} {2019})}\BibitemShut {NoStop}%
		\bibitem [{\citenamefont {Skrzypczyk}\ \emph {et~al.}(2019)\citenamefont
			{Skrzypczyk}, \citenamefont {\ifmmode \check{S}\else
				\v{S}\fi{}upi\ifmmode~\acute{c}\else \'{c}\fi{}},\ and\ \citenamefont
			{Cavalcanti}}]{Skrzypczyk2019b}%
		\BibitemOpen
		\bibfield  {author} {\bibinfo {author} {\bibfnamefont {P.}~\bibnamefont
				{Skrzypczyk}}, \bibinfo {author} {\bibfnamefont {I.}~\bibnamefont {\ifmmode
					\check{S}\else \v{S}\fi{}upi\ifmmode~\acute{c}\else \'{c}\fi{}}},\ and\
			\bibinfo {author} {\bibfnamefont {D.}~\bibnamefont {Cavalcanti}},\ }\bibfield
		{title} {\bibinfo {title} {All sets of incompatible measurements give an
				advantage in quantum state discrimination},\ }\href
		{https://doi.org/10.1103/PhysRevLett.122.130403} {\bibfield  {journal}
			{\bibinfo  {journal} {Phys. Rev. Lett.}\ }\textbf {\bibinfo {volume} {122}},\
			\bibinfo {pages} {130403} (\bibinfo {year} {2019})}\BibitemShut {NoStop}%
		\bibitem [{\citenamefont {Uola}\ \emph {et~al.}(2019)\citenamefont {Uola},
			\citenamefont {Kraft}, \citenamefont {Shang}, \citenamefont {Yu},\ and\
			\citenamefont {G\"uhne}}]{Uola2019}%
		\BibitemOpen
		\bibfield  {author} {\bibinfo {author} {\bibfnamefont {R.}~\bibnamefont
				{Uola}}, \bibinfo {author} {\bibfnamefont {T.}~\bibnamefont {Kraft}},
			\bibinfo {author} {\bibfnamefont {J.}~\bibnamefont {Shang}}, \bibinfo
			{author} {\bibfnamefont {X.-D.}\ \bibnamefont {Yu}},\ and\ \bibinfo {author}
			{\bibfnamefont {O.}~\bibnamefont {G\"uhne}},\ }\bibfield  {title} {\bibinfo
			{title} {Quantifying quantum resources with conic programming},\ }\href
		{https://doi.org/10.1103/PhysRevLett.122.130404} {\bibfield  {journal}
			{\bibinfo  {journal} {Phys. Rev. Lett.}\ }\textbf {\bibinfo {volume} {122}},\
			\bibinfo {pages} {130404} (\bibinfo {year} {2019})}\BibitemShut {NoStop}%
		\bibitem [{\citenamefont {Mori}(2020)}]{Mori2020}%
		\BibitemOpen
		\bibfield  {author} {\bibinfo {author} {\bibfnamefont {J.}~\bibnamefont
				{Mori}},\ }\bibfield  {title} {\bibinfo {title} {Operational characterization
				of incompatibility of quantum channels with quantum state discrimination},\
		}\href {https://doi.org/10.1103/PhysRevA.101.032331} {\bibfield  {journal}
			{\bibinfo  {journal} {Phys. Rev. A}\ }\textbf {\bibinfo {volume} {101}},\
			\bibinfo {pages} {032331} (\bibinfo {year} {2020})}\BibitemShut {NoStop}%
		\bibitem [{\citenamefont {Ducuara}\ and\ \citenamefont
			{Skrzypczyk}(2020)}]{Ducuara2020a}%
		\BibitemOpen
		\bibfield  {author} {\bibinfo {author} {\bibfnamefont {A.~F.}\ \bibnamefont
				{Ducuara}}\ and\ \bibinfo {author} {\bibfnamefont {P.}~\bibnamefont
				{Skrzypczyk}},\ }\bibfield  {title} {\bibinfo {title} {Operational
				interpretation of weight-based resource quantifiers in convex quantum
				resource theories},\ }\href {https://doi.org/10.1103/PhysRevLett.125.110401}
		{\bibfield  {journal} {\bibinfo  {journal} {Phys. Rev. Lett.}\ }\textbf
			{\bibinfo {volume} {125}},\ \bibinfo {pages} {110401} (\bibinfo {year}
			{2020})}\BibitemShut {NoStop}%
		\bibitem [{\citenamefont {Uola}\ \emph {et~al.}(2020)\citenamefont {Uola},
			\citenamefont {Bullock}, \citenamefont {Kraft}, \citenamefont
			{Pellonp\"a\"a},\ and\ \citenamefont {Brunner}}]{Uola2020}%
		\BibitemOpen
		\bibfield  {author} {\bibinfo {author} {\bibfnamefont {R.}~\bibnamefont
				{Uola}}, \bibinfo {author} {\bibfnamefont {T.}~\bibnamefont {Bullock}},
			\bibinfo {author} {\bibfnamefont {T.}~\bibnamefont {Kraft}}, \bibinfo
			{author} {\bibfnamefont {J.-P.}\ \bibnamefont {Pellonp\"a\"a}},\ and\
			\bibinfo {author} {\bibfnamefont {N.}~\bibnamefont {Brunner}},\ }\bibfield
		{title} {\bibinfo {title} {All quantum resources provide an advantage in
				exclusion tasks},\ }\href {https://doi.org/10.1103/PhysRevLett.125.110402}
		{\bibfield  {journal} {\bibinfo  {journal} {Phys. Rev. Lett.}\ }\textbf
			{\bibinfo {volume} {125}},\ \bibinfo {pages} {110402} (\bibinfo {year}
			{2020})}\BibitemShut {NoStop}%
		\bibitem [{\citenamefont {Ducuara}\ \emph {et~al.}(2020)\citenamefont
			{Ducuara}, \citenamefont {Lipka-Bartosik},\ and\ \citenamefont
			{Skrzypczyk}}]{Ducuara2020b}%
		\BibitemOpen
		\bibfield  {author} {\bibinfo {author} {\bibfnamefont {A.~F.}\ \bibnamefont
				{Ducuara}}, \bibinfo {author} {\bibfnamefont {P.}~\bibnamefont
				{Lipka-Bartosik}},\ and\ \bibinfo {author} {\bibfnamefont {P.}~\bibnamefont
				{Skrzypczyk}},\ }\bibfield  {title} {\bibinfo {title} {Multiobject
				operational tasks for convex quantum resource theories of state-measurement
				pairs},\ }\href {https://doi.org/10.1103/PhysRevResearch.2.033374} {\bibfield
			{journal} {\bibinfo  {journal} {Phys. Rev. Research}\ }\textbf {\bibinfo
				{volume} {2}},\ \bibinfo {pages} {033374} (\bibinfo {year}
			{2020})}\BibitemShut {NoStop}%
		\bibitem [{\citenamefont {Feng}\ and\ \citenamefont {Wei}(2017)}]{Feng2017}%
		\BibitemOpen
		\bibfield  {author} {\bibinfo {author} {\bibfnamefont {X.~N.}\ \bibnamefont
				{Feng}}\ and\ \bibinfo {author} {\bibfnamefont {L.~F.}\ \bibnamefont {Wei}},\
		}\bibfield  {title} {\bibinfo {title} {Quantifying quantum coherence with
				quantum {F}isher information},\ }\href
		{https://doi.org/10.1038/s41598-017-15323-7} {\bibfield  {journal} {\bibinfo
				{journal} {Sci. Rep}\ }\textbf {\bibinfo {volume} {7}},\ \bibinfo {pages}
			{15492} (\bibinfo {year} {2017})}\BibitemShut {NoStop}%
		\bibitem [{\citenamefont {Tan}\ \emph {et~al.}(2018)\citenamefont {Tan},
			\citenamefont {Choi}, \citenamefont {Kwon},\ and\ \citenamefont
			{Jeong}}]{Tan2018}%
		\BibitemOpen
		\bibfield  {author} {\bibinfo {author} {\bibfnamefont {K.~C.}\ \bibnamefont
				{Tan}}, \bibinfo {author} {\bibfnamefont {S.}~\bibnamefont {Choi}}, \bibinfo
			{author} {\bibfnamefont {H.}~\bibnamefont {Kwon}},\ and\ \bibinfo {author}
			{\bibfnamefont {H.}~\bibnamefont {Jeong}},\ }\bibfield  {title} {\bibinfo
			{title} {Coherence, quantum {F}isher information, superradiance, and
				entanglement as interconvertible resources},\ }\href
		{https://doi.org/10.1103/PhysRevA.97.052304} {\bibfield  {journal} {\bibinfo
				{journal} {Phys. Rev. A}\ }\textbf {\bibinfo {volume} {97}},\ \bibinfo
			{pages} {052304} (\bibinfo {year} {2018})}\BibitemShut {NoStop}%
	\end{thebibliography}
	%

\end{document}